\DeclareMathAlphabet{\bm}{OML}{cmm}{b}{it}
\newtheorem{theorem}{Theorem}
\newtheorem{lemma}{Lemma}
\newtheorem{corollary}{Corollary}
\newtheorem{remark}{Remark}
\newtheorem{proposition}{Proposition}
\newcommand{\qed}{\hfill \IEEEQED}
\newcommand{\bol}[1]{\mathbf{#1}}
\newcommand{\san}[1]{\mathsf{#1}}
\begin{document}

\title{Neyman-Pearson Test for Zero-Rate Multiterminal Hypothesis Testing\thanks{A part of this paper 
was presented in 2017 IEEE International Symposium on Information Theory.}}

\author{Shun Watanabe
\thanks{S.~Watanabe is with the Department of Computer and Information Sciences, Tokyo University of Agriculture and Technology, Japan, E-mail:shunwata@cc.tuat.ac.jp.}
}


\maketitle
\begin{abstract}
The problem of zero-rate multiterminal hypothesis testing is revisited from the perspective of
information-spectrum approach and finite blocklength analysis.
A Neyman-Pearson-like test is proposed and its non-asymptotic performance is clarified;
for a short block length, it is numerically determined that the proposed test is superior to the previously reported 
Hoeffding-like test proposed by Han-Kobayashi. 
For a large deviation regime, it is shown that our proposed test achieves an optimal trade-off
between the type I and type II exponents presented by Han-Kobayashi.
Among the class of symmetric (type based) testing schemes, when the type I error probability is non-vanishing, 
the proposed test is optimal up to the second-order term of the type II error exponent;
the latter term is characterized in terms of the variance of the projected relative entropy density.
The information geometry method plays an important role in the analysis as well as the construction
of the test.  
\end{abstract}

	\section{Introduction}
	
	In the classic hypothesis testing problem, upon observing $Z^n$, a tester tries to distinguish 
	whether the observation comes from the null hypothesis $P$ or the alternative hypothesis $Q$.
	It is widely known that the so-called Neyman-Pearson test \cite{NeyPea28} is the most powerful test in this regard,\footnote{Technically speaking, for finite blocklength,
	non-randomized Neyman-Pearson test is optimal only for limited number of trade-off points, and randomization is required in general (eg.~see \cite{Lehmann-Romano}).} 
	and the trade-off between the type I error probability $\alpha_n^\mathtt{NP}$ and the type II error probability $\beta_n^\mathtt{NP}$ is characterized as\footnote{Throughout the paper,
	we only consider discrete random variables taking values in finite sets.	 
	The notations $P(\cdot)$ and $Q(\cdot)$ in \eqref{eq:Neyman-Pearson} mean that 
	the probabilities of events are computed with respect to a sequence of i.i.d. (independent and identically distributed) random variables $Z^n$ that are
	distributed according to product distributions $P^n$ and $Q^n$, respectively. By a slight abuse of notation, we use the same notations $P$ and $Q$
	to describe the probability mass functions.}
	\begin{align} \label{eq:Neyman-Pearson}
	\alpha_n^\mathtt{NP} = P\bigg( \frac{1}{n} \sum_{i=1}^n \Lambda(Z_i) \le \tau \bigg),~~~\beta_n^\mathtt{NP} = Q\bigg( \frac{1}{n} \sum_{i=1}^n \Lambda(Z_i) > \tau \bigg).
	\end{align}
	Here, $\Lambda(z) = \imath_{P\|Q}(z) = \log \frac{P(z)}{Q(z)}$ is the log-likelihood ratio between the two distributions.
	This ratio is also known as the relative entropy density. 
	An application of the law of large numbers to \eqref{eq:Neyman-Pearson} implies that,
	for vanishing type I error probability, 
	the asymptotically optimal exponent of the type II error probability is given by the relative entropy $D(P\|Q)$;
	more refined analyses on \eqref{eq:Neyman-Pearson} give the tight bounds on more detailed asymptotics such as  the large deviation regime or
	the second-order regime \cite{blahut:74, Strassen:62}.

	Another important test, which we refer to as the Hoeffding test, entails comparing the type (empirical distribution) $\san{t}_{Z^n}$ of the observation
	with the null hypothesis \cite{Hoeffding65}; the null hypothesis is accepted if the relative entropy between
	the type and $P$ is smaller than a prescribed threshold, and is rejected otherwise.  The trade-off between 
	the type I error probability $\alpha_n^\mathtt{H}$ and the type II error probability $\beta_n^\mathtt{H}$ of this test is characterized as  
	\begin{align}
	\alpha_n^\mathtt{H} = P\bigg(   D(\san{t}_{Z^n} \| P) \ge r \bigg),~~~ \beta_n^\mathtt{H} = Q\bigg(  D(\san{t}_{Z^n} \| P) < r \bigg).
	\end{align}
	The advantage of this test is that it can be conducted without knowledge of $Q$,
	i.e., it is partially universal (see \cite{csiszar-korner:11}). Although the Hoeffding test delivers optimal performance asymptotically \cite[Theorem III.2]{csiszar:98},
	the trade-off between the errors is worse than that of the Neyman-Pearson test for a finite block length, in particular for a short block length.

	In \cite{berger:79b}, Berger introduced a new framework of multiterminal statistical decision problems
	under communication constraint. Inspired by his work, many researchers studied various problems of this kind 
	\cite{AhlCsi86, han:87, ZhaBer88, amari-han:89, amari:89, han-kobayashi:89, Amari:11}
	(see \cite{han-amari:98} for a thorough review). One important special case of these problems is the zero-rate multiterminal hypothesis testing problem, which is
	the main topic of this paper. This problem involves separate processing of the correlated observations $X^n$ and $Y^n$ by two encoders, after which messages are sent to
	a centralized decoder at zero-rate. Then, the decoder tries to distinguish whether the observations originate from the null hypothesis $P_{XY}$
	or the alternative hypothesis $Q_{XY}$. It was shown in \cite{han:87, ShaPap92} that,\footnote{More precisely, an achievability testing scheme was 
		proposed in \cite{han:87}, and the converse for the so-called one-bit compression case was shown under some regularity condition.
		Later, the converse for the zero-rate compression case was shown in \cite{ShaPap92} under some regularity condition. Recently, the regularity 
		condition of \cite{ShaPap92} was relaxed in \cite{ueta-kuzuoka:14}.
		Further, it is worth mentioning that the answer to
		the multiterminal hypothesis testing with a comparator is given by the same quantity, \eqref{eq:projected-ralative-entropy}, \cite{polyanskiy:12}.} 
	for a type I error with vanishing probability,
	the asymptotically optimal exponent of the type II error probability is given by the projected relative entropy defined by
	\begin{align} \label{eq:projected-ralative-entropy}
	E(P_{XY} \| Q_{XY}) = \min_{\tilde{P}_{XY} : \atop \tilde{P}_{X} = P_X, \tilde{P}_{Y}=P_Y} D(\tilde{P}_{XY} \| Q_{XY}).
	\end{align}
	In \cite{amari-han:89}, Amari-Han studied this problem from a differential geometrical viewpoint, and provided a geometrical interpretation of \eqref{eq:projected-ralative-entropy}
	by using the information geometry approach \cite{amari-nagaoka, Amari:book}.
	In fact, the term, projected relative entropy, should be clear from the observation in \cite{amari-han:89} (see Section \ref{sec:information-geometry}). 
	Furthermore, to study the large deviation regime of the zero-rate multiterminal hypothesis testing problem,
	Han-Kobayashi introduced a Hoeffding-like testing scheme for this problem \cite{han-kobayashi:89}; the trade-off in the error of their 
	testing scheme is characterized as 
	\begin{align} \label{eq:multi-hoeffding-type}
	\alpha_n^\mathtt{Hl} = P\bigg(   E(\san{t}_{X^n Y^n} \| P_{XY}) \ge r \bigg),~~~ \beta_n^\mathtt{Hl} = Q\bigg(  E(\san{t}_{X^nY^n} \| P_{XY}) < r \bigg),
	\end{align}
	where $\san{t}_{X^n Y^n}$ is the joint type of $(X^n,Y^n)$.
	It was shown in \cite{han-amari:98} that the bound in \eqref{eq:multi-hoeffding-type} is asymptotically tight in a large deviation regime.
	
	Thus far, we have reviewed the background on the zero-rate multiterminal hypothesis testing problem.  
	The main aim of this paper is to revisit this problem from the perspective of
	modern approaches developed in the past two decades, which are reviewed next.\footnote{At the time when the multiterminal hypothesis testing was actively studied in the late 80s to 
	early 90s, it seems that the method of type \cite{csiszar-korner:11} was the most popular tool for analysis.} 
	
	In their landmark paper \cite{han:93}, Han-Verd\'u proposed the information-spectrum approach. 
	Among other aspects, a key feature of the this approach is that the performance of a coding problem is
	characterized by the probabilistic behavior of the information density that is inherent to that coding problem.\footnote{Another key feature is its generality, i.e., 
	coding theorems are proved without any 
		assumptions such as stationarity and/or ergodicity.} 
	For instance, in the case of hypothesis testing, the relative entropy density $\imath_{P\|Q}$ can be regarded as the information density
	of this problem. The same philosophy was inherited by another recently popularized area of research, the finite block length 
	and the second-order analyses, pioneered by Hayashi \cite{hayashi:09} and Polyanskiy-Poor-Verd\'u \cite{polyanskiy:10}. 
	In the second-order analyses, instead of the law of the large number, the central limit theorem is applied to analyze 
	the probabilistic behavior of the information density up to $\sqrt{n}$ order. 
	
	Although information densities can be naturally introduced for
	some problems, it is non-trivial to identify the correct quantity in general. For instance, Kostina-Verd\'u introduced the $D$-tilted information density for the lossy source coding problem
	in a judicious manner, and characterized the second-order coding rate in terms of the variance of this information density \cite{kostina:12} (see also \cite{ingber:11} for
	an alternative approach proposed by Ingber-Kochman). 
	The same direction of research was extended to the Gray-Wyner network, one of the most basic multiterminal problems, by the author in \cite{Watanabe:15}
	(see also \cite{ZhoTanMot15}).
	
	As mentioned above, the classic hypothesis-testing problem comprises two important tests: the Neyman-Pearson 
	and Hoeffding tests. The test proposed by Han-Kobayashi \cite{han-kobayashi:89} can be regarded as a Hoeffding test for 
	the zero-rate multiterminal hypothesis testing. Thus, it is tempting, both theoretically and practically, to have
	a testing scheme that is reminiscent of the Neyman-Pearson test.
	In this paper, we propose such a testing scheme.
	In fact, the trade-off between the type I and type II error probabilities by our proposed test has the following form:
	\begin{align}
	\alpha_n^\mathtt{NPl} = P\bigg( \frac{1}{n} \sum_{i=1}^n \Lambda_\lambda(X_i,Y_i) \le \tau \bigg),~~~\beta_n^\mathtt{NPl} = Q\bigg( \frac{1}{n} \sum_{i=1}^n \Lambda_\lambda(X_i,Y_i) > \tau \bigg).
	\end{align}
	Here, $\Lambda_\lambda(x,y)$ is a proxy of the log-likelihood ratio parametrized by
	$\lambda \in [-E(Q_{XY}\|P_{XY}), E(P_{XY}\|Q_{XY})]$; as is subsequently shown, identification of $\Lambda_\lambda(x,y)$ is non-trivial, which is one of
	technical contributions of this paper.
	In contrast to the Neyman-Pearson test in the classic hypothesis testing, the proxy of the log-likelihood is parametrized by $\lambda$.
	As we will see later in the paper, adjustment of $\lambda$ depending on threshold $\tau$ is very important. 
	For instance, the optimal choice
	turns out to be $\lambda = \tau$ in the large deviation regime.\footnote{As a rule of thumb, $\lambda = \tau$ provides the best error trade-off
	even in the finite blocklength regime. }
	An extreme case $\lambda=E(P_{XY}\|Q_{XY})$ of this proxy of the log-likelihood ratio,
	which we term the projected relative entropy density, is $\jmath_{P\|Q}(x,y) = \log \frac{P_{XY}^*(x,y)}{Q_{XY}(x,y)}$ for the optimizer $P_{XY}^*$ 
	of $E(P_{XY}\|Q_{XY})$ in \eqref{eq:projected-ralative-entropy}. In fact, it will be clarified that the expectation of $\jmath_{P\|Q}(X,Y)$ over $P_{XY}$ is given by
	$E(P_{XY} \| Q_{XY})$. 
	
	Although it is not clear whether our proposed testing scheme is the most powerful,
	for a rather short block length, we numerically determine that our proposed testing scheme has better error trade-off than that of 
	the previously known test of Han-Kobayashi. 
	We also show that, for a large deviation regime, our proposed test achieves optimal trade-off
	between the type I and type II exponents shown by Han-Kobayashi.
	Furthermore, among the class of symmetric (type based) testing schemes, we 
	derive the optimal second-order rate of the type II exponent, which can be achieved by our proposed test. 
	Ultimately, it emerges that the optimal second-order rate is characterized by the variance of the projected relative entropy density. 
	
	Here, we would like to mention the similarity and dissimilarity of the Neyman-Pearson test in
	classic hypothesis testing and our Neyman-Pearson-like test in the multiterminal hypothesis testing
	from a geometrical viewpoint.  
	In terms of the classic hypothesis-testing problem, the Neyman-Pearson test is known to correspond to
	bisecting the probability simplex by a mixture family generated by the log-likelihood ratio $\Lambda(z)$,
	which is orthogonal to the e-geodesic connecting the null hypothesis $P$
	and the alternative hypothesis $Q$ (e.g.,~see \cite{nakagawa:93b}). 
	On the other hand, our Neyman-Pearson-like test of the multiterminal hypothesis testing bisects 
	the probability simplex by a mixture family generated by the proxy $\Lambda_\lambda(x,y)$ of the log-likelihood ratio.
	In contrast to the Neyman-Pearson test in classic hypothesis testing, our Neyman-Pearson-like test has the freedom to
	adjust the direction of bisection by parameter $\lambda$. Interestingly, this adjustment of direction is crucial to achieve the 
	optimal trade-off between the type I and type II error exponents in a large deviation regime. 
	
	In addition to the above mentioned geometrical motivation,
	there is also a practical motivation to introduce the Neyman-Pearson-like test. 
	Even for multiterminal hypothesis testing, we can conduct the Neyman-Pearson
	test by computing the log-likelihood ratio for a given observation. However, computing such a log-likelihood ratio
	is intractable as the blocklength become larger. On the other hand, our Neyman-Pearson-like
	scheme only requires computing the empirical average of the proxy of the log-likelihood ratio, and it is easier to implement
	(see Remark \ref{remark:trivial-test}).
	
	The remainder of the paper is organized as follows:
	In Section \ref{sec:problem-formulation}, we introduce our notation, and recall the problem formulation
	of the multiterminal hypothesis testing. 
	We also review the previously known testing scheme of Han-Kobayashi. 
	In Section \ref{sec:information-geometry}, we review basic results on information geometry
	as well as the result of Amari-Han \cite{amari-han:89}. 
	In Section \ref{section:non-asymptotic}, 
	we introduce our novel testing scheme.
	In Section \ref{sec:example}, we consider a binary example, and compare the two testing schemes. 
	In Section \ref{sec:LDP}, the large deviation performance of our proposed test is discussed.  
	In Section \ref{sec:second-order}, we derive the second-order exponent.
	We conclude the paper with some discussions in Section \ref{sec:conclusion}.

	\section{Problem Formulation}
	\label{sec:problem-formulation}

	\subsection{Notation}
	
	Random variables (e.g.,~$X$) and their realizations (e.g.,~$x$) are presented in upper and lower 
	case format, respectively. All random variables take values in some finite alphabets, which are
	denoted in calligraphic font (e.g.,~${\cal X}$). The cardinality of ${\cal X}$ is denoted as $|{\cal X}|$.
	Let the random vector $X^n = (X_1,\ldots,X_n)$ and similarly for its realization 
	$\bm{x} = (x_1,\ldots,x_n)$. For information theoretic quantities, 
	we follow the same notations as \cite{csiszar-korner:11}; e.g.,~the 
	entropy and the relative entropy are denoted by $H(X)$ and $D(P\|Q)$, respectively.
	For two distributions $P$ and $P^\prime$, the variational distance is denoted by $\| P - P^\prime \|$.
	For a sequence $\bm{x}$, its type (empirical distribution) is denoted by $\san{t}_{\bm{x}}$; similarly the joint type 
	of $(\bm{x},\bm{y})$ is denoted by $\san{t}_{\bm{x}\bm{y}}$.
	The set of all positive probability distributions on ${\cal X}$ is denoted by ${\cal P}({\cal X})$, 
	the set of all probability distributions (not necessarily positive) is denoted by $\overline{{\cal P}}({\cal X})$, and the set of all
	types (not necessarily positive) on ${\cal X}$ is denoted by ${\cal P}_n({\cal X})$. Similar notations are used for the sets of 
	joint distributions and joint types.
	In addition, the notations $\san{E}[\cdot]$ and $\san{Var}[\cdot]$ mean computing the expectation and the variance of
	the random variables in the bracket, respectively.
	$\Phi(t) = \int_{-\infty}^t \frac{1}{\sqrt{2\pi}} e^{- \frac{u^2}{2}} du$ is the cumulative distribution function of the standard normal
	distribution; its inverse is denoted by $\Phi^{-1}(\varepsilon)$ for $0 < \varepsilon < 1$. 
	Throughout the paper, the base of $\log$ and $\exp$ are $e$. 
	
	\subsection{Multiterminal Hypothesis Testing}
	
	In this section, we introduce the problem setting and review some basic results. 
	We consider the statistical problem of testing the null hypothesis $\san{H}_0:P_{XY}$ on ${\cal X} \times {\cal Y}$
	versus the alternative hypothesis $\san{H}_1:Q_{XY}$ on the same alphabet. We assume $P_{XY}$ and $Q_{XY}$ are positive, i.e.,
	they have full support throughout the paper.\footnote{The results on multiterminal hypothesis testing \cite{ShaPap92, han-amari:98}
	are sensitive to the supports of the alternative hypothesis. Even though the full support assumption may be slightly relaxed (eg.~see \cite{ueta-kuzuoka:14}),
	we only consider the full support case in this paper.} The i.i.d. random variables $(X^n,Y^n)$ distributed according to either $P_{XY}^n$ or
	$Q_{XY}^n$ are observed separately by two terminals, and they are encoded by two encoders
	\begin{align}
	& f_1^{(n)}:{\cal X}^n \to {\cal M}_1^{(n)}, \\
	& f_2^{(n)}:{\cal Y}^n \to {\cal M}_2^{(n)},
	\end{align}
	respectively. Then, the decoder
	\begin{align}
	g^{(n)}:{\cal M}_1^{(n)} \times {\cal M}_2^{(n)} \to \{\san{H}_0,\san{H}_1\}
	\end{align}
	decides whether to accept the null hypothesis. When the block length $n$ is obvious from the context, we omit the superscript $n$.
	For a given testing scheme $T_n = (f_1,f_2,g)$, the type I error probability is defined by
	\begin{align}
	\alpha[T_n] := P\bigg( g(f_1(X^n),f_2(Y^n)) = \san{H}_1 \bigg)
	\end{align}
	and the type II error probability is defined by
	\begin{align}
	\beta[T_n] := Q\bigg( g(f_1(X^n), f_2(Y^n)) = \san{H}_0 \bigg).
	\end{align}
	In the remaining part of the paper, $P(\cdot)$ (or $Q(\cdot)$) means $(X^n,Y^n)$ is distributed 
	according to $P_{XY}^n$ (or $Q_{XY}^n$).
	
	A sequence of the testing scheme $\{ T_n \}_{n=1}^\infty$ is said to be zero-rate if 
	\begin{align}
	\lim_{n\to \infty} \frac{1}{n} \log |{\cal M}_i^{(n)} | =0,~i=1,2.
	\end{align}
	Furthermore, $\{T_n\}_{n=1}^\infty$ is said to be a symmetric testing scheme if the encoders
	$f_1^{(n)}$ and $f_2^{(n)}$ only depend on the marginal types of $\bm{x}$ and $\bm{y}$, 
	respectively. The class of zero-rate schemes was introduced in \cite{han:87} and
	the class of symmetric schemes was introduced in \cite{amari-han:89}.
	Note that a symmetric scheme is automatically zero-rate since the number of marginal types 
	is polynomial in $n$. 
	
	Let 
	\begin{align} \label{eq:definition-E0}
	E_0(P_{XY} \|Q_{XY}) := \sup\left\{ \liminf_{n\to\infty} - \frac{1}{n}\log \beta[T_n]: \{T_n\}_{n=1}^\infty \mbox{ is zero-rate},\lim_{n\to\infty} \alpha[T_n]=0 \right\}
	\end{align}
	and
	\begin{align} \label{eq:definition-Es}
	E_\san{s}(P_{XY}\|Q_{XY}) := \sup\left\{ \liminf_{n\to\infty} - \frac{1}{n}\log \beta[T_n]: \{T_n\}_{n=1}^\infty \mbox{ is symmetric},\lim_{n\to\infty} \alpha[T_n]=0 \right\}
	\end{align}
	be the optimal exponent of the type II error probability in each class of schemes.
	By definition, $E_0(P_{XY}\|Q_{XY}) \ge E_\san{s}(P_{XY}\|Q_{XY})$.
	These quantities can be characterized as follows.
	\begin{proposition}[\cite{han:87, ShaPap92}] \label{proposition:first-order}
		It holds that\footnote{The exponents defined in \eqref{eq:definition-E0} and \eqref{eq:definition-Es} are the so-called weak converse, i.e., we require 
		that the type I error probability converges to $0$. In fact, it is known that the strong converse holds for this problem \cite{ShaPap92}.} 
		\begin{align}
		E_0(P_{XY}\|Q_{XY}) = E_\san{s}(P_{XY}\|Q_{XY}) = E(P_{XY}\|Q_{XY}),
		\end{align}
		where
		\begin{align} \label{eq:optimal-exponent-expression}
		E(P_{XY}\|Q_{XY}) := \min\left\{ D(\tilde{P}_{XY} \| Q_{XY}) : \tilde{P}_{XY} \in \overline{{\cal P}}({\cal X}\times{\cal Y}), \tilde{P}_X=P_X, \tilde{P}_Y=P_Y \right\}.
		\end{align}
	\end{proposition}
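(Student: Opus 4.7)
Since every symmetric scheme is automatically zero-rate, we have $E_{\san{s}}(P_{XY}\|Q_{XY}) \le E_0(P_{XY}\|Q_{XY})$ for free. It therefore suffices to exhibit a symmetric scheme attaining $E(P_{XY}\|Q_{XY})$ and to prove a matching converse that applies to every zero-rate scheme. I would split the proof accordingly.

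\textbf{Achievability via a symmetric scheme.} I would let each encoder transmit the marginal type $\san{t}_{X^n}$ or $\san{t}_{Y^n}$; since there are only polynomially many types on each alphabet, this uses $O(\log n)$ bits and is symmetric (hence zero-rate). Fix a slowly vanishing sequence $\epsilon_n \to 0$, and let the decoder accept $\san{H}_0$ iff $\|\san{t}_{X^n} - P_X\| \le \epsilon_n$ and $\|\san{t}_{Y^n} - P_Y\| \le \epsilon_n$. The type I error then vanishes by the weak law of large numbers. For the type II error, partition the acceptance region into joint type classes of $(X^n,Y^n)$; each class $T_{\tilde{P}_{XY}}$ has $Q^n$-mass at most $\exp(-n D(\tilde{P}_{XY}\|Q_{XY}))$ by the method of types. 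Summing over the polynomially many joint types whose marginals lie in the $\epsilon_n$-neighborhood of $(P_X,P_Y)$ gives
\begin{align*}
\beta[T_n] \le (n+1)^{|{\cal X}||{\cal Y}|} \exp\Bigl( -n \min_{\tilde{P}_{XY}:\|\tilde{P}_X-P_X\|\le \epsilon_n,\|\tilde{P}_Y-P_Y\|\le\epsilon_n} D(\tilde{P}_{XY}\|Q_{XY})\Bigr).
\end{align*}
The full-support assumption on $Q_{XY}$ makes $D(\cdot\|Q_{XY})$ continuous on the simplex, so the inner minimum converges to $E(P_{XY}\|Q_{XY})$ as $\epsilon_n \to 0$, yielding the desired exponent.

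\textbf{Converse for zero-rate schemes.} Here I would follow the Shalaby--Papamarcou strategy. A zero-rate scheme has $|{\cal M}_1^{(n)}|\cdot|{\cal M}_2^{(n)}| = e^{o(n)}$, so by pigeonhole one can restrict attention to a single accepted message pair $(m_1,m_2)$ whose preimage is a rectangle $A_n = {\cal A}_n \times {\cal B}_n$ with $P^n(A_n) \ge (1-\alpha[T_n])\cdot e^{-o(n)}$, which is bounded away from zero. The crux is a blowing-up / Marton-type concentration argument showing that, since ${\cal A}_n \times {\cal B}_n$ carries non-vanishing $P^n$-mass, one can extract a subset of $A_n$ whose empirical marginals are within $o(1)$ of $(P_X,P_Y)$ and which still carries a non-vanishing fraction of the mass. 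On this subset Sanov's theorem lower bounds the $Q^n$-mass by $\exp(-n[E(P_{XY}\|Q_{XY})+o(1)])$, and this mass is by construction counted into $\beta[T_n]$, yielding $-\tfrac1n\log\beta[T_n] \le E(P_{XY}\|Q_{XY}) + o(1)$ as required.

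\textbf{Main obstacle.} The achievability direction is routine method-of-types bookkeeping. The technical difficulty sits in the zero-rate converse: a generic encoder need not depend only on the marginal type, so it is not immediate that a rectangular preimage of non-vanishing $P^n$-mass forces the marginal types of its two sides to concentrate on $(P_X,P_Y)$. This is exactly where the blowing-up lemma enters to transport ``non-vanishing mass'' into ``nearly typical marginals''; the argument is standard but delicate. The full-support hypothesis on both $P_{XY}$ and $Q_{XY}$ simplifies it by guaranteeing non-empty type classes for all large $n$ and continuity of the divergence minimizer in \eqref{eq:optimal-exponent-expression} under perturbation of the marginal constraints.
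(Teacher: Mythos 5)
You should first note that the paper itself offers no proof of Proposition \ref{proposition:first-order}: it is quoted from \cite{han:87, ShaPap92}, so your proposal can only be measured against the arguments in those works. Your achievability half is correct and is essentially Han's original scheme: each encoder sends its marginal type, the decoder accepts when both types are $\epsilon_n$-close to $(P_X,P_Y)$, and a method-of-types union bound plus a continuity argument (valid because $Q_{XY}$ has full support) gives the exponent $E(P_{XY}\|Q_{XY})$ with a symmetric, hence zero-rate, scheme.

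The converse sketch, however, has a genuine gap at its decisive step. After the (correct) pigeonhole reduction to a single rectangle ${\cal A}_n\times{\cal B}_n$ with $P^n$-mass at least $(1-\alpha[T_n])e^{-o(n)}$ --- note this is only subexponentially small, not ``bounded away from zero'' --- you claim that one can extract a subset of the rectangle whose empirical marginals are within $o(1)$ of $(P_X,P_Y)$ and which retains a non-vanishing fraction of the mass, and that Sanov's theorem then lower bounds its $Q^n$-mass by $\exp\{-n(E(P_{XY}\|Q_{XY})+o(1))\}$. This inference fails: Sanov lower bounds the $Q^n$-probability of the set of \emph{all} sequences with nearly correct marginal types, not of an arbitrary subset of it; for a subset it only gives an upper bound. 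A subset singled out merely by carrying most of the $P^n$-mass is essentially the $P_{XY}$-jointly typical set, whose $Q^n$-mass is of order $\exp\{-nD(P_{XY}\|Q_{XY})\}$, exponentially smaller than $\exp\{-nE(P_{XY}\|Q_{XY})\}$ whenever the correlation structures of $P_{XY}$ and $Q_{XY}$ differ; so the criterion you impose can only deliver the exponent $D(P_{XY}\|Q_{XY})$, which is the wrong (too strong) bound on $\beta[T_n]$. The actual Shalaby--Papamarcou argument must exploit the rectangular structure itself: blowing up ${\cal A}_n$ and ${\cal B}_n$ \emph{separately} shows that (after an $o(n)$ Hamming blow-up) they capture almost all $P_X$- and $P_Y$-typical marginal sequences, and a counting/un-blowing step then forces ${\cal A}_n\times{\cal B}_n$ to contain $\exp\{nH(\tilde{P}_{XY}^*)-o(n)\}$ pairs whose joint type is close to the minimizer $\tilde{P}_{XY}^*$ of \eqref{eq:optimal-exponent-expression} (indeed close to every joint type with marginals $(P_X,P_Y)$), and only this yields $Q^n({\cal A}_n\times{\cal B}_n)\ge\exp\{-nE(P_{XY}\|Q_{XY})-o(n)\}$. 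Your sketch names the blowing-up lemma, but as written it is not used in the way that closes precisely this gap.
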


	\subsection{Han-Kobayashi Testing Scheme} \label{subsec:han-kobayashi-scheme}
	
	In \cite{han-kobayashi:89}, Han-Kobayashi studied a large deviation regime of multiterminal hypothesis testing.
	For $0 \le r \le E(Q_{XY} \| P_{XY})$, let 
	\begin{align}
	F(r) := \sup\left\{ \liminf_{n\to\infty} - \frac{1}{n} \log \beta[T_n] : \{T_n\}_{n=1}^\infty \mbox{ is zero-rate}, \liminf_{n\to\infty} -\frac{1}{n} \log \alpha[T_n] \ge r \right\}.
	\end{align}
	To derive a lower bound on $F(r)$, Han-Kobayashi proposed the following  Hoeffding-like testing scheme.\footnote{The testing scheme proposed in \cite{han-kobayashi:89}
		is slightly different, but it is essentially the same as the scheme reviewed in this section.} By definition, note that 
	\begin{align} \label{eq:identity-E-relative-entropy}
	E(P_{\bar{X}\bar{Y}} \| P_{XY}) = E(P_{\bar{X}} \times P_{\bar{Y}} \| P_{XY})
	\end{align}
	holds for any joint distribution $P_{\bar{X}\bar{Y}}$, where $P_{\bar{X}} \times P_{\bar{Y}}$ are the product distribution 
	of the marginals $P_{\bar{X}}, P_{\bar{Y}}$ of $P_{\bar{X}\bar{Y}}$.
	Upon observing $\bm{x}$ and $\bm{y}$, the encoders send their types.  Then, upon receiving a pair of marginal types $(\san{t}_{\bm{x}},\san{t}_{\bm{y}})$,
	the decoder computes $E(\san{t}_{\bm{x}} \times \san{t}_{\bm{y}} \| P_{XY})$; if the value is smaller than a prescribed threshold $r$, then it outputs
	$\san{H}_0$; otherwise, it outputs $\san{H}_1$. By \eqref{eq:identity-E-relative-entropy}, $g(f_1(\bm{x}),f_2(\bm{y})) = \san{H}_0$ if and only if
	\begin{align}
	E(\san{t}_{\bm{x}\bm{y}} \| P_{XY}) < r.
	\end{align}
	In fact, the threshold $r$ controls the convergence speed of the type I error probability, i.e., in the large deviation regime, 
	the type I error probability behaves as $\exp\{-nr\}$. Apparently, this scheme is a symmetric scheme.
	The performance of this scheme is summarized in the following proposition.

	\begin{proposition} \label{proposition:performance-han-kobayashi}
		For a given $r > 0$, the above mentioned Hoeffding like scheme $T_n^\mathtt{Hl}$ has the following error trade-off: 
		\begin{align}
		\alpha[T_n^\mathtt{Hl}] &= P\bigg( E(\san{t}_{X^n Y^n} \| P_{XY}) \ge r \bigg), \\
		\beta[T_n^\mathtt{Hl}] &= Q\bigg( E(\san{t}_{X^n Y^n} \| P_{XY}) < r \bigg).
		\end{align}
	\end{proposition}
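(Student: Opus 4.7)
The plan is to reduce the statement to a direct unwinding of the definitions, with the only nontrivial ingredient being the identity \eqref{eq:identity-E-relative-entropy}, which the proposition needs in order to re-express the decoder rule (stated on marginal types) as a rule on the joint type.

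First, I would record what the scheme actually computes. The encoders send the marginal types $\san{t}_{\bm{x}}$ and $\san{t}_{\bm{y}}$, and the decoder outputs $\san{H}_0$ if and only if $E(\san{t}_{\bm{x}} \times \san{t}_{\bm{y}} \| P_{XY}) < r$. Therefore
\begin{align*}
\alpha[T_n^\mathtt{Hl}] &= P\bigl( E(\san{t}_{X^n} \times \san{t}_{Y^n} \| P_{XY}) \ge r \bigr), \\
\beta[T_n^\mathtt{Hl}] &= Q\bigl( E(\san{t}_{X^n} \times \san{t}_{Y^n} \| P_{XY}) < r \bigr),
\end{align*}
by the definitions of type I and type II error probabilities under $P$ and $Q$, respectively.

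Second, I would invoke identity \eqref{eq:identity-E-relative-entropy}, applied pointwise with $P_{\bar{X}\bar{Y}} := \san{t}_{\bm{x}\bm{y}}$. The identity itself is immediate from the definition \eqref{eq:optimal-exponent-expression}: the feasible set in the minimization defining $E(\cdot \| P_{XY})$ depends on its first argument only through the pair of marginals, so $E(P_{\bar{X}\bar{Y}} \| P_{XY})$ and $E(P_{\bar{X}} \times P_{\bar{Y}} \| P_{XY})$ give the same optimization problem whenever $P_{\bar{X}\bar{Y}}$ has marginals $P_{\bar{X}}, P_{\bar{Y}}$. Since the marginals of the joint type $\san{t}_{\bm{x}\bm{y}}$ are precisely $\san{t}_{\bm{x}}$ and $\san{t}_{\bm{y}}$, this yields the pointwise equality $E(\san{t}_{\bm{x}} \times \san{t}_{\bm{y}} \| P_{XY}) = E(\san{t}_{\bm{x}\bm{y}} \| P_{XY})$ for every realization $(\bm{x},\bm{y})$.

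Third, substituting this equality into the two expressions displayed above gives exactly the claimed formulas for $\alpha[T_n^\mathtt{Hl}]$ and $\beta[T_n^\mathtt{Hl}]$. There is essentially no obstacle: the content of the proposition is that the decoder, despite seeing only the marginal types, effectively thresholds the projected relative entropy of the joint type, and this follows entirely from the marginal-only dependence of $E(\cdot \| P_{XY})$. The only step meriting care is noting that the identity \eqref{eq:identity-E-relative-entropy} applies to arbitrary (not necessarily positive) distributions in $\overline{\cal P}({\cal X} \times {\cal Y})$, which is why it is legitimate to apply it to the empirical type $\san{t}_{\bm{x}\bm{y}}$ even when its support is strictly smaller than ${\cal X} \times {\cal Y}$.
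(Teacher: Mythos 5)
Your proposal is correct and follows essentially the same route as the paper, which treats the proposition as an immediate consequence of the scheme's description together with the identity \eqref{eq:identity-E-relative-entropy}, itself holding because the feasible set in \eqref{eq:optimal-exponent-expression} depends on the first argument only through its marginals. Your added remark that the identity applies to types without full support is a legitimate (if minor) point of care that the paper leaves implicit.
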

	
	It was shown in \cite{han-kobayashi:89} that the above testing scheme satisfies 
	\begin{align} 
	\lim_{n\to \infty} - \frac{1}{n} \log \alpha[T_n^\mathtt{Hl}] = r
	\end{align}
	and
	\begin{align} \label{eq:optimal-LDP}
	\lim_{n\to \infty} - \frac{1}{n} \log \beta[T_n^\mathtt{Hl}] 
	&= \min\big\{ D(P_{\bar{X}\bar{Y}} \| Q_{XY}) : E(P_{\bar{X}\bar{Y}} \| P_{XY}) \le r \big\} \\
	&= \min\big\{ E(P_{\bar{X} \bar{Y}} \| Q_{XY}) : D(P_{\bar{X}\bar{Y}} \| P_{XY}) \le r \big\},
	\end{align}
	which is optimal among the class of all zero-rate testing schemes \cite{han-amari:98}; we summarize these 
	results in the following proposition.
	
	\begin{proposition}[\cite{han-kobayashi:89, han-amari:98}]
		For $0 \le r \le E(Q_{XY} \| P_{XY})$, it holds that\footnote{In fact, only the expression \eqref{eq:exponent-expression-1}
		for $F(r)$ was derived in \cite{han-kobayashi:89, han-amari:98}; however, the expression \eqref{eq:exponent-expression-1} 
		can be also described by the expression \eqref{eq:exponent-expression-2}. 
		Later in Lemma \ref{lemma:properties-optimizer}, we will verify that the two expressions coincide for $0 < r < E(Q_{XY} \| P_{XY})$; for
		$r=0$ and $r = E(Q_{XY} \| P_{XY})$, it is not difficult to see that the two expressions coincide and are given by $E(P_{XY}\|Q_{XY})$ and $0$, respectively.} 
		\begin{align}
		F(r) &= \min\big\{ D(Q_{\bar{X}\bar{Y}} \| Q_{XY}) : E(Q_{\bar{X}\bar{Y}} \| P_{XY}) \le r \big\} \label{eq:exponent-expression-1} \\
		&= \min\big\{ E(P_{\bar{X} \bar{Y}} \| Q_{XY}) : D(P_{\bar{X}\bar{Y}} \| P_{XY}) \le r \big\}. \label{eq:exponent-expression-2}
		\end{align}
	\end{proposition}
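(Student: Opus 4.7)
The plan is to establish $F(r)$ via matching achievability and converse bounds, and then to verify the equivalence of the two minimization expressions.

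For achievability, I would apply the method of types to the Han-Kobayashi scheme $T_n^{\mathtt{Hl}}$ of Proposition \ref{proposition:performance-han-kobayashi}. Since the sublevel and superlevel sets of $\tilde{P} \mapsto E(\tilde{P}\|P_{XY})$ are unions of polynomially many joint type classes, Sanov's theorem under $P_{XY}^n$ yields
\begin{align}
-\frac{1}{n}\log \alpha[T_n^{\mathtt{Hl}}] \to \min\{D(\tilde{P} \| P_{XY}) : E(\tilde{P} \| P_{XY}) \ge r\} \ge r,
\end{align}
where the inequality is immediate from $D(\tilde{P}\|P_{XY}) \ge E(\tilde{P}\|P_{XY})$. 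The same reasoning under $Q_{XY}^n$ gives
\begin{align}
-\frac{1}{n}\log \beta[T_n^{\mathtt{Hl}}] \to \min\{D(\tilde{P} \| Q_{XY}) : E(\tilde{P} \| P_{XY}) \le r\},
\end{align}
which coincides with \eqref{eq:exponent-expression-1} after renaming $\tilde{P}$ as $Q_{\bar{X}\bar{Y}}$. Hence $F(r)$ is at least the claimed expression.

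For the converse, I would follow the strategy of \cite{han-kobayashi:89, han-amari:98}. Fix any zero-rate scheme $T_n$ with $-\frac{1}{n}\log\alpha[T_n] \ge r - o(1)$ and consider its acceptance region $A_n \subseteq {\cal X}^n \times {\cal Y}^n$. Because each encoder partitions its input space into $e^{o(n)}$ cells, whenever a single pair in a product of marginal type classes ${\cal T}_{P_{\bar{X}}} \times {\cal T}_{P_{\bar{Y}}}$ is accepted, an exponentially large sub-rectangle of the same product is also accepted, so up to an $e^{o(n)}$ factor the acceptance region is a union of such product blocks indexed by marginal-type pairs. The type I constraint then rules out any marginal-type pair with $E(P_{\bar{X}} \times P_{\bar{Y}} \| P_{XY}) < r$, since including such a pair would inflate $\alpha[T_n]$ by an exponential factor via the method of types under $P_{XY}^n$ and identity \eqref{eq:identity-E-relative-entropy}. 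Summing the $Q_{XY}^n$-mass over the surviving marginal-type pairs and their compatible joint types produces the matching upper bound on $-\frac{1}{n}\log \beta[T_n]$.

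The equivalence of \eqref{eq:exponent-expression-1} and \eqref{eq:exponent-expression-2} exploits the information-geometric structure reviewed in Section \ref{sec:information-geometry}: the minimizers on the two sides share a common pair of marginals, and a Pythagorean relation on the mixture family of distributions with those marginals converts one constraint level into the other; the detailed verification is deferred to Lemma \ref{lemma:properties-optimizer}, with the endpoints $r=0$ and $r=E(Q_{XY}\|P_{XY})$ checked directly. The main obstacle is the converse, and in particular the step that replaces an arbitrary zero-rate acceptance region by one that is essentially a union of products of type classes; this requires a pigeonhole-type combinatorial argument in the spirit of \cite{han:87, ShaPap92}, now refined to respect a nontrivial type I exponent rather than merely a vanishing error constraint.
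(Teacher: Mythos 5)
Two observations frame the comparison. The paper itself does not reprove the zero-rate optimality result: expression \eqref{eq:exponent-expression-1} is imported from \cite{han-kobayashi:89, han-amari:98}, and the only part actually proved in this paper is the identity between \eqref{eq:exponent-expression-1} and \eqref{eq:exponent-expression-2}, established in Lemma \ref{lemma:properties-optimizer} for $0<r<E(Q_{XY}\|P_{XY})$ and by direct inspection at the endpoints. On that part your plan coincides with the paper's (you defer to the same lemma), and your achievability sketch --- Sanov/method of types applied to the scheme of Proposition \ref{proposition:performance-han-kobayashi} --- is the standard route and is sound.

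The genuine gap is in the converse you chose to sketch instead of cite. The claim that ``whenever a single pair in a product of marginal type classes is accepted, an exponentially large sub-rectangle of the same product is also accepted'' is false: zero-rate bounds only the \emph{number} of encoder cells by $e^{o(n)}$, not their individual sizes, so an accepted pair $(\bm{x},\bm{y})$ only forces acceptance of the possibly tiny rectangle $f_1^{-1}(f_1(\bm{x}))\times f_2^{-1}(f_2(\bm{y}))$. Likewise, the step ``the type I constraint rules out any marginal-type pair with $E(P_{\bar{X}}\times P_{\bar{Y}}\|P_{XY})<r$'' presupposes that the decision is a function of the marginal types; that holds for symmetric schemes (this is exactly the situation of Proposition \ref{proposition:converse}), but for general zero-rate schemes this reduction is precisely the hard part, since $F(r)$ is defined over all zero-rate schemes. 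The argument of \cite{ShaPap92, han-amari:98} runs in the opposite direction: the constraint $\alpha[T_n]\le e^{-n(r-\delta)}$ forces the acceptance region to cover all but an exponentially small fraction of every joint type class whose divergence from $P_{XY}$ is below $r$; pigeonholing over the at most $e^{o(n)}$ rectangles composing the acceptance region then yields one rectangle whose shadows $A\subseteq{\cal X}^n$ and $B\subseteq{\cal Y}^n$ (within the corresponding marginal type classes) are exponentially large; and a separate combinatorial lemma is needed to conclude that $A\times B$ meets every joint type class compatible with those marginals in exponentially many pairs, which is what produces the lower bound on $\beta[T_n]$ matching \eqref{eq:exponent-expression-2}. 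Without that lemma, and with the largeness attributed to a single accepted pair rather than to the type I exponent constraint, your converse does not go through as written; it would only establish the proposition with ``zero-rate'' replaced by ``symmetric.''
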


	\section{Preliminaries of Information Geometry}
	\label{sec:information-geometry}
	
	In this section, we review some results on information geometry that are needed in later sections.
	Interested readers are referred to \cite{amari-nagaoka, csiszar-shields:04book} for a thorough review on information geometry.
	
	\subsection{Properties of Projection}
	
	In this section, we review some properties of projection with respect to the relative entropy, which is sometimes known as $I$-projection.
	Let ${\cal C} \subseteq \overline{{\cal P}}({\cal Z})$ be a (nonempty) closed convex set. For a given $Q \in {\cal P}({\cal Z})$, let us 
	consider the following optimization problem:
	\begin{align} \label{eq:convex-optimization}
	\min_{\tilde{P} \in {\cal C}} D(\tilde{P} \| Q).
	\end{align}
	The optimizer of \eqref{eq:convex-optimization} satisfies the following extremal condition \cite{csiszar:75} (see also \cite[Section 3]{csiszar-shields:04book}).
	\begin{theorem}[\cite{csiszar:75}] \label{theorem:extremal}
		The optimizer $P^\star$ of \eqref{eq:convex-optimization} is unique; furthermore, for every $\tilde{P} \in {\cal C}$, the optimizer $P^\star$ satisfies
		$\mathtt{supp}(\tilde{P}) \subseteq \mathtt{supp}(P^\star)$ and
		\begin{align} \label{eq:tangent}
		D(\tilde{P} \| Q) \ge D( \tilde{P} \| P^\star) + D(P^\star \| Q).
		\end{align}
	\end{theorem}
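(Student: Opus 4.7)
The plan is to establish three things in sequence: existence and uniqueness of the optimizer $P^\star$, the support inclusion $\mathtt{supp}(\tilde P) \subseteq \mathtt{supp}(P^\star)$, and then the Pythagorean-type inequality \eqref{eq:tangent}. Throughout, I would use only the convexity of ${\cal C}$ and the strict convexity of $P \mapsto D(P\|Q)$ in its first argument.

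For existence, I would note that ${\cal C}$ is a closed subset of the compact set $\overline{{\cal P}}({\cal Z})$, and since $Q$ is strictly positive, the map $\tilde P \mapsto D(\tilde P\|Q)$ is continuous on $\overline{{\cal P}}({\cal Z})$ under the standard convention $0\log 0 = 0$. A minimizer therefore exists. Uniqueness follows because $t\mapsto t\log t$ is strictly convex, making $D(\cdot\|Q)$ strictly convex wherever finite; any two distinct minimizers would yield a midpoint with strictly smaller value, contradicting optimality.

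The heart of the proof is the inequality, which I would derive from a first-order optimality condition along the convex combination $P_\theta := (1-\theta)P^\star + \theta \tilde P \in {\cal C}$ for $\theta \in [0,1]$ and an arbitrary $\tilde P \in {\cal C}$. Define $f(\theta):=D(P_\theta\|Q)$, which attains its minimum on $[0,1]$ at $\theta=0$, hence the right derivative satisfies $f'(0^+)\ge 0$. Differentiating termwise under the sum and using $\sum_z \tilde P(z)=\sum_z P^\star(z)=1$ yields
\begin{align*}
f'(0^+) = \sum_z \bigl[\tilde P(z) - P^\star(z)\bigr]\log\frac{P^\star(z)}{Q(z)} = D(\tilde P\|Q) - D(\tilde P\|P^\star) - D(P^\star\|Q),
\end{align*}
where the second equality rewrites $\log\frac{\tilde P}{Q}=\log\frac{\tilde P}{P^\star}+\log\frac{P^\star}{Q}$ inside $D(\tilde P\|Q)$. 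The desired inequality \eqref{eq:tangent} is then immediate from $f'(0^+)\ge 0$.

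The delicate point, which I expect to be the main obstacle, is that the termwise differentiation tacitly assumes $\mathtt{supp}(\tilde P)\subseteq\mathtt{supp}(P^\star)$; otherwise some logarithm is $-\infty$ at $\theta=0$ and both sides of \eqref{eq:tangent} must be interpreted. I would settle this first by contradiction: if there existed $z_0$ with $\tilde P(z_0)>0$ but $P^\star(z_0)=0$, then for small $\theta>0$ the single term $\theta\tilde P(z_0)\log\frac{\theta \tilde P(z_0)}{Q(z_0)}$ has $\theta$-derivative $\tilde P(z_0)\bigl[\log\frac{\theta \tilde P(z_0)}{Q(z_0)}+1\bigr]$, which tends to $-\infty$ as $\theta\to 0^+$, while the contributions from indices in $\mathtt{supp}(P^\star)$ stay bounded because $Q$ and $P^\star$ are bounded away from zero on their supports. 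This forces $f'(0^+)=-\infty$, contradicting minimality of $P^\star$. With the support inclusion established, the computation above is rigorous on $\mathtt{supp}(P^\star)$, and \eqref{eq:tangent} follows in the claimed form (with both sides being $+\infty$ handled trivially via the support claim).
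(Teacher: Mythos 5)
Your proposal is correct. Note, however, that the paper does not prove this statement at all: Theorem \ref{theorem:extremal} is quoted from Csisz\'ar's 1975 paper (see also Section 3 of the Csisz\'ar--Shields monograph cited there), so there is no in-paper proof to compare against. Your argument is essentially the classical one for the finite-alphabet, full-support-$Q$ case: existence by compactness and continuity, uniqueness by strict convexity of $\tilde P \mapsto D(\tilde P\|Q)$ together with convexity of ${\cal C}$, and the inequality \eqref{eq:tangent} from the nonnegativity of the one-sided directional derivative of $f(\theta)=D((1-\theta)P^\star+\theta\tilde P\,\|\,Q)$ at $\theta=0$, after the identity $\sum_z(\tilde P(z)-P^\star(z))\log\frac{P^\star(z)}{Q(z)} = D(\tilde P\|Q)-D(\tilde P\|P^\star)-D(P^\star\|Q)$ (the cancellation of the ``$+1$'' terms using $\sum_z\tilde P(z)=\sum_z P^\star(z)=1$ is stated and correct). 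You also correctly identify and resolve the only delicate point: if some $z_0$ had $\tilde P(z_0)>0$ but $P^\star(z_0)=0$, the derivative of the $z_0$-term blows down to $-\infty$ while the terms on $\mathtt{supp}(P^\star)$ have bounded derivatives (since $Q$ is positive and $P_\theta(z)\ge(1-\theta)P^\star(z)$), so $f(\theta)<f(0)$ for small $\theta>0$, contradicting optimality; this yields $\mathtt{supp}(\tilde P)\subseteq\mathtt{supp}(P^\star)$ and makes the termwise differentiation legitimate. One cosmetic remark: since $Q$ has full support on a finite alphabet, $D(\tilde P\|Q)$ is always finite, so the closing caveat about both sides being $+\infty$ is vacuous; with the support inclusion in hand, every quantity in \eqref{eq:tangent} is finite.
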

	
	Note that \eqref{eq:tangent} is equivalent to
	\begin{align} \label{eq:tangent-2}
	\sum_z (\tilde{P}(z) - P^\star(z)) \log \frac{P^\star(z)}{Q(z)} \ge 0.
	\end{align}
	The set 
	\begin{align} \label{eq:tangent-plane-single-terminal}
	{\cal S}_=(P^\star) := \bigg\{ \tilde{P} : \sum_z (\tilde{P}(z) - P^\star(z)) \log \frac{P^\star(z)}{Q(z)} = 0 \bigg\}
	\end{align}
	is the tangent plane of ${\cal C}$ at $P^\star$ in the sense of \eqref{eq:tangent-2}. 
	Moreover, we have 
	\begin{align} \label{eq:inclusion-convex-supporting}
	{\cal C} \subseteq {\cal S}_{\ge}(P^\star) := \bigg\{ \tilde{P} : \sum_z (\tilde{P}(z) - P^\star(z)) \log \frac{P^\star(z)}{Q(z)} \ge 0 \bigg\}
	\end{align}
	and
	\begin{align}
	\min_{\tilde{P} \in {\cal S}_{\ge}(P^\star) } D(\tilde{P} \| Q) = D(P^\star \| Q).
	\end{align}
	
	For given functions $f_1,\ldots,f_k$ from ${\cal Z}$ to $\mathbb{R}$ and constants $c_1,\ldots,c_k \in \mathbb{R}$, the set
	\begin{align}
	{\cal M} := \bigg\{ \tilde{P} \in \overline{{\cal P}}({\cal Z}): \sum_z \tilde{P}(z) f_i(z) = c_i,~\forall 1 \le i \le k \bigg\}
	\end{align}
	is known as a mixture family.\footnote{Sometimes, it is also referred to as a linear family.} We make the assumption that ${\cal M}$ is not empty, and that it contains at least one
	element $\tilde{P}$ having full support. 
	On the other hand, the set ${\cal E}$ of all distributions of the form
	\begin{align}
	\tilde{P}(z) = Q(z) \exp\bigg[ \sum_{i=1}^k \theta_i f_i(z) - \psi(\theta) \bigg],~~~\theta = (\theta_1,\ldots,\theta_k) \in \mathbb{R}^k
	\end{align}
	is termed the exponential family
	generated by $Q$ and $f_1,\ldots,f_k$; the normalization constant $\psi(\theta)$ is usually known as a potential function.
	When ${\cal C} = {\cal M}$, the optimizer of \eqref{eq:convex-optimization} satisfies the following Pythagorean identity,
	and the optimizer is included in the exponential family ${\cal E}$.
	
	\begin{theorem}[\cite{csiszar:75}] \label{theorem:pythagorean-linear-family}
		When ${\cal C} = {\cal M}$ is a mixture family, the optimizer $P^\star$ of \eqref{eq:convex-optimization} satisfies 
		\begin{align} \label{eq:pythagorean-indentity-single-variable}
		D(\tilde{P} \| Q) = D( \tilde{P} \| P^\star) + D(P^\star \| Q)
		\end{align}
		for every $\tilde{P} \in {\cal M}$, and $P^\star \in {\cal E} \cap {\cal M}$.
	\end{theorem}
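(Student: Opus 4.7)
The plan is to prove both claims together by first solving the convex program \eqref{eq:convex-optimization} via a Lagrangian analysis to identify $P^\star$ as an element of the exponential family, and then deriving the Pythagorean identity by a short direct computation that uses the affine constraints defining ${\cal M}$.

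First I would show $P^\star$ has full support. Since ${\cal M}$ is assumed to contain at least one element $\tilde{P}_0$ of full support, Theorem \ref{theorem:extremal} applied to $\tilde{P}_0$ gives ${\cal Z} = \mathtt{supp}(\tilde{P}_0) \subseteq \mathtt{supp}(P^\star)$. Hence $\log(P^\star(z)/Q(z))$ is finite for all $z$ and the objective $D(\tilde{P}\|Q)$ is differentiable at $P^\star$. Introducing Lagrange multipliers $\theta_1,\ldots,\theta_k$ for the moment constraints and $\mu$ for the normalization constraint, the stationarity condition is
\begin{align*}
\log\frac{P^\star(z)}{Q(z)} = \sum_{i=1}^k \theta_i f_i(z) + (\mu - 1),
\end{align*}
so that setting $\psi(\theta) := 1 - \mu$ yields $P^\star(z) = Q(z) \exp[\sum_i \theta_i f_i(z) - \psi(\theta)]$, establishing $P^\star \in {\cal E}$. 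Combined with $P^\star \in {\cal M}$ guaranteed by Theorem \ref{theorem:extremal}, this proves $P^\star \in {\cal E} \cap {\cal M}$.

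For the Pythagorean identity, I would substitute the exponential form into $D(\tilde{P}\|P^\star)$. For any $\tilde{P} \in {\cal M}$,
\begin{align*}
D(\tilde{P}\|P^\star) = D(\tilde{P}\|Q) - \sum_{i=1}^k \theta_i \sum_z \tilde{P}(z) f_i(z) + \psi(\theta) = D(\tilde{P}\|Q) - \sum_{i=1}^k \theta_i c_i + \psi(\theta),
\end{align*}
where the affine constraints defining ${\cal M}$ are used to replace the mixed expectation by $c_i$. Specializing $\tilde{P} = P^\star$ (the left-hand side becoming $0$) gives $D(P^\star\|Q) = \sum_i \theta_i c_i - \psi(\theta)$, and adding these two equalities telescopes to \eqref{eq:pythagorean-indentity-single-variable}.

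The delicate step is the Lagrangian characterization itself: I need a constraint qualification to guarantee strong duality, and Slater's condition does hold because ${\cal M}$ is assumed to contain a full-support element in the relative interior of the simplex. A cleaner route that bypasses duality theory uses a ``reflection'' trick: for any $\tilde{P} \in {\cal M}$, the distribution $\tilde{P}_\epsilon := (1+\epsilon) P^\star - \epsilon \tilde{P}$ remains in ${\cal M}$ for small $\epsilon > 0$ by affineness of the constraints and full support of $P^\star$, and applying \eqref{eq:tangent-2} to both $\tilde{P}$ and $\tilde{P}_\epsilon$ forces the tangent inequality to hold with equality, directly yielding the Pythagorean identity; the exponential-family membership then follows because $\log(P^\star/Q)$ must lie in the span of $\{1, f_1, \ldots, f_k\}$ in order to be orthogonal to the tangent space of ${\cal M}$.
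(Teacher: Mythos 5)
The paper gives no proof of this theorem at all: it is quoted from Csisz\'ar \cite{csiszar:75} (see also \cite{csiszar-shields:04book}), so there is no internal argument to compare against; what matters is whether your derivation stands on its own, and it does. Your preliminary step --- using the full-support element of ${\cal M}$ together with Theorem \ref{theorem:extremal} to conclude $\mathtt{supp}(P^\star) = {\cal Z}$ --- is exactly what both of your routes need. On the Lagrangian route, the worry about Slater's condition and strong duality is unnecessary: the constraints are affine, and at a full-support minimizer the elementary first-order condition for minimizing a differentiable convex function over an affine set already says that $\log(P^\star(z)/Q(z))+1$ is orthogonal to every direction $v$ with $\sum_z v(z)=0$ and $\sum_z v(z)f_i(z)=0$, hence lies in $\mathrm{span}\{1,f_1,\ldots,f_k\}$; no duality theory is involved. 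Your computation of $D(\tilde P\|P^\star)$ using the constraints $\sum_z \tilde P(z)f_i(z)=c_i$ and the specialization $\tilde P = P^\star\in{\cal M}$ then telescopes correctly to \eqref{eq:pythagorean-indentity-single-variable}. The reflection argument you sketch is the cleaner of the two and is essentially Csisz\'ar's own: applying \eqref{eq:tangent-2} to $\tilde P$ and to $(1+\epsilon)P^\star-\epsilon\tilde P$ (legitimate for small $\epsilon>0$ because $P^\star$ has full support and the constraints are affine) forces
\begin{align}
\sum_z \big(\tilde P(z)-P^\star(z)\big)\log\frac{P^\star(z)}{Q(z)} = 0 ,
\end{align}
which is precisely \eqref{eq:pythagorean-indentity-single-variable} since the left-hand side equals $D(\tilde P\|Q)-D(\tilde P\|P^\star)-D(P^\star\|Q)$ when $P^\star$ has full support; and because the differences $\tilde P-P^\star$, $\tilde P\in{\cal M}$, span the whole subspace orthogonal to $\mathrm{span}\{1,f_1,\ldots,f_k\}$ (again by full support of $P^\star$), the displayed orthogonality forces $\log(P^\star/Q)\in\mathrm{span}\{1,f_1,\ldots,f_k\}$, i.e.\ $P^\star\in{\cal E}$. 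So both routes are sound; I would present the reflection argument and drop the duality discussion.
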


	\subsection{Geometry of ${\cal P}({\cal X}\times {\cal Y})$}
	
	In this section, we review the results in \cite{amari-han:89} (see also \cite{amari:01, amari-nagaoka}). We first introduce a coordinate system on the set of all
	positive joint distributions, ${\cal P}({\cal X}\times {\cal Y})$. 
	Note that ${\cal P}({\cal X}\times {\cal Y})$ is a $(|{\cal X}| |{\cal Y}| -1)$-dimensional manifold. 
	When we consider a multiterminal problem, 
	it is convenient to consider the following parametrization specified by $(d_{\san{x}}d_{\san{y}}+d_{\san{x}}+d_{\san{y}})$ parameters, where 
	$d_{\san{x}} = |{\cal X}|-1$ and $d_{\san{y}} = |{\cal Y}|-1$. Henceforth, we identify the alphabets as 
	${\cal X} = \{0,1,\ldots,d_{\san{x}} \}$ and ${\cal Y} = \{0,1,\ldots,d_{\san{y}}\}$.
	First, a natural parameter is introduced as 
	\begin{align} 
	P_{XY,\theta}(x,y) := \exp\bigg[ \sum_{i=1}^{d_{\san{x}}} \theta_i^{\san{x}} \delta_i(x) + \sum_{j=1}^{d_{\san{y}}} \theta_j^{\san{y}} \delta_j(y) 
	+ \sum_{i=1}^{d_{\san{x}}} \sum_{j=1}^{d_{\san{x}}} \theta_{ij}^{\san{xy}} \delta_{ij}(x,y) - \psi(\theta) \bigg],
	\label{eq:natural-parameterization}
	\end{align}
	where $\delta_i(x) = \bol{1}[x=i]$ is the Kronecker delta ($\delta_j(y)$ and $\delta_{ij}(x,y)$ are defined similarly), 
	\begin{align}
	\theta_i^{\san{x}} &= \log \frac{P_{XY,\theta}(i,0)}{P_{XY,\theta}(0,0)}, \\
	\theta_j^{\san{y}} &= \log \frac{P_{XY,\theta}(0,j)}{P_{XY,\theta}(0,0)}, \\
	\theta_{ij}^{\san{xy}} &= \log \frac{P_{XY,\theta}(i,j) P_{XY,\theta}(0,0)}{P_{XY,\theta}(i,0)P_{XY,\theta}(0,j)} \label{eq:correlation-coordinate}
	\end{align}
	for $1 \le i \le d_{\san{x}}$ and $1 \le j \le d_{\san{y}}$, 
	and $\psi(\theta)$ is the normalization constant  (potential function) given by 
	\begin{align}
	\psi(\theta) &= - \log P_{XY,\theta}(0,0) \\
	&= \log \bigg( 1 + \sum_{i=1}^{d_{\san{x}}} e^{\theta_i^{\san{x}}} + \sum_{j=1}^{d_{\san{y}}} e^{\theta_j^{\san{y}}} 
	+ \sum_{i=1}^{d_{\san{x}}} \sum_{j=1}^{d_{\san{y}}} e^{\theta_i^{\san{x}} + \theta_j^{\san{y}} + \theta_{ij}^{\san{xy}} } \bigg).
	\label{eq:potential-function-2}
	\end{align}
	In this coordinate system, $\{ \theta_{ij}^{\san{xy}} \}$ describe the correlation between $X$ and $Y$. In fact, we can verify that
	$P_{XY,\theta} = P_{X,\theta} \times P_{Y,\theta}$ if and only if $\theta_{ij}^{\san{xy}} = 0$ for every $1 \le i \le d_{\san{x}}$ and $1 \le j \le d_{\san{y}}$.
	
	Next, the expectation parameter is introduced as 
	\begin{align}
	P_{XY,\eta}(x,y) &:= \sum_{i=1}^{d_{\san{x}}} \eta_i^{\san{x}}\big( \delta_i(x) - \delta_0(x) \big) \delta_0(y) + \sum_{j=1}^{d_{\san{y}}} \eta_j^{\san{y}} \delta_0(x) \big( \delta_j(y) - \delta_0(y) \big) \\
	&~~~+ \sum_{i=1}^{d_{\san{x}}} \sum_{j=1}^{d_{\san{y}}} \eta_{ij}^{\san{xy}} \big( \delta_{ij}(x,y) - \delta_i(x) \delta_0(y) - \delta_0(x) \delta_j(y) + \delta_0(x) \delta_0(y) \big) + \delta_0(x) \delta_0(y),
	\label{eq:expansion-of-expectation-parameter}
	\end{align}
	where 
	\begin{align}
	\eta_i^{\san{x}} &= P_{X,\eta}(i), \label{eq:expectation-parameter-1} \\
	\eta_j^{\san{y}} &= P_{Y,\eta}(j), \label{eq:expectation-parameter-2} \\
	\eta_{ij}^{\san{xy}} &= P_{XY,\eta}(i,j) \label{eq:expectation-parameter-3}
	\end{align}
	for $1 \le i \le d_{\san{x}}$ and $1 \le j \le d_{\san{y}}$. 
	The parameters \eqref{eq:expectation-parameter-1}-\eqref{eq:expectation-parameter-3} are called expectation parameters since
	$\eta_i^\san{x} = \san{E}[\delta_i(X)]$, $\eta_j^\san{y} = \san{E}[\delta_j(X)]$, and $\eta_{ij}^\san{xy} = \san{E}[\delta_{ij}(X,Y)]$ for $(X,Y) \sim P_{XY,\eta}$.
	Apparently in this coordinate system, $\{ \eta_i^{\san{x}} \}$ and $\{ \eta_j^{\san{y}} \}$ describe 
	the marginal distributions, respectively. 
	
	By taking the derivative of \eqref{eq:potential-function-2}, we have
	\begin{align}
	\frac{\partial \psi(\theta)}{\partial \theta_i^{\san{x}}} &= \exp\big[ \theta_i^{\san{x}} - \psi(\theta) \big] 
	+ \sum_{j=1}^{d_{\san{y}}} \exp\big[ \theta_i^{\san{x}} + \theta_j^{\san{y}} + \theta_{ij}^{\san{xy}} - \psi(\theta) \big] \\
	&= \sum_{j=0}^{d_{\san{y}}} P_{XY,\theta}(i,j) \\
	&= P_{X,\theta}(i)
	\end{align}
	for $1 \le i \le d_{\san{x}}$.
	Similarly, we have
	\begin{align}
	\frac{\partial \psi(\theta)}{\partial \theta_j^{\san{y}}} = P_{Y,\theta}(j)
	\end{align}
	for $1 \le j \le d_{\san{y}}$, and 
	\begin{align}
	\frac{\partial \psi(\theta)}{\partial \theta_{ij}^{\san{xy}}} &= \exp\big[ \theta_i^{\san{x}} + \theta_j^{\san{y}} + \theta_{ij}^{\san{xy}} - \psi(\theta) \big] \\
	&= P_{XY,\theta}(i,j)
	\end{align}
	for $1 \le i \le d_{\san{x}}$ and $1 \le j \le d_{\san{y}}$. Thus, the two coordinate systems $\theta$
	and $\eta$ are related by
	\begin{align}
	\eta_i^{\san{x}}(\theta) &= \frac{\partial \psi(\theta)}{\partial \theta_i^{\san{x}}}, \\
	\eta_j^{\san{y}}(\theta) &= \frac{\partial \psi(\theta)}{\partial \theta_j^{\san{y}}}, \\
	\eta_{ij}^{\san{xy}}(\theta) &= \frac{\partial \psi(\theta)}{\partial \theta_{ij}^{\san{xy}}}.
	\end{align}
	Hereafter, we use notations such as $\theta^{\san{xy}} = (\theta_{ij}^{\san{xy}} : 1\le i \le d_{\san{x}},1\le j \le d_{\san{y}})$ and
	$\eta^{\san{y}} = (\eta_j^{\san{y}} : 1 \le j \le d_{\san{y}})$  for brevity.
	
	Let $\theta(P)$ and $\eta(P)$ be the natural and expectation parameters that correspond to $P_{XY}$,  
	and let $\theta(Q)$ and $\eta(Q)$ be the natural and expectation parameters that correspond to $Q_{XY}$.
	Let 
	\begin{align}
	{\cal E}(\theta^{\san{xy}}(Q)) := \big\{ P_{XY,\theta} : \theta^{\san{xy}} = \theta^{\san{xy}}(Q) \big\}
	\end{align} 
	be the exponential family containing $Q_{XY}$, and let 
	\begin{align}
	{\cal M}(\eta^{\san{x}}(P),\eta^{\san{y}}(P)) := \big\{ P_{XY,\theta} : \eta^{\san{x}}(\theta) = \eta^{\san{x}}(P), \eta^{\san{y}}(\theta) = \eta^{\san{y}}(P) \big\}
	\end{align}
	be the mixture family containing $P_{XY}$. Similarly, we define ${\cal E}(\theta^{\san{xy}}(P))$ and 
	${\cal M}(\eta^{\san{x}}(Q),\eta^{\san{y}}(Q))$ by replacing the roles of $P_{XY}$ and $Q_{XY}$.
	It was shown in \cite{amari-han:89} that the optimization problem in $E(P_{XY} \| Q_{XY})$
	is achieved by the intersection of ${\cal E}(\theta^{\san{xy}}(Q))$ and ${\cal M}(\eta^{\san{x}}(P),\eta^{\san{y}}(P))$, 
	and the interpretation of its Pythagorean theorem was given (see Fig.~\ref{Fig:pythagorean}).\footnote{In fact,
		Theorem \ref{theorem:pythagorean} is essentially a special case of Theorem \ref{theorem:pythagorean-linear-family}. We reviewed 
		both of these claims for later convenience.} 

	\begin{theorem}[\cite{amari-han:89}] \label{theorem:pythagorean}
		The optimizer $P_{XY}^*$ of $E(P_{XY}\|Q_{XY})$
		satisfies $P_{XY}^* \in {\cal E}(\theta^{\san{xy}}(Q)) \cap {\cal M}(\eta^{\san{x}}(P),\eta^{\san{y}}(P))$ and
		\begin{align} \label{eq:pythagorean}
		D(P_{XY} \| Q_{XY}) = D(P_{XY} \| P^*_{XY}) + D(P^*_{XY} \| Q_{XY}).
		\end{align}
		Similarly, the optimizer $Q_{XY}^*$ of $E(Q_{XY} \| P_{XY})$ satisfies $Q_{XY}^* \in {\cal E}(\theta^{\san{xy}}(P)) \cap {\cal M}(\eta^{\san{x}}(Q),\eta^{\san{y}}(Q))$ and
		\begin{align} \label{eq:pythagorean-2}
		D(Q_{XY} \| P_{XY}) = D(Q_{XY} \| Q_{XY}^*) + D(Q_{XY}^* \| P_{XY}).
		\end{align}
	\end{theorem}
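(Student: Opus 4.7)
The plan is to derive Theorem \ref{theorem:pythagorean} directly from Theorem \ref{theorem:pythagorean-linear-family}, by recognizing that ${\cal M}(\eta^{\san{x}}(P), \eta^{\san{y}}(P))$ is exactly a mixture family in the sense of the previous subsection, and then identifying the exponential family it induces with ${\cal E}(\theta^{\san{xy}}(Q))$.

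First I would rewrite the marginal constraints defining ${\cal M}(\eta^{\san{x}}(P),\eta^{\san{y}}(P))$ as the linear equalities $\sum_{x,y}\tilde{P}(x,y)\delta_i(x) = P_X(i)$ and $\sum_{x,y}\tilde{P}(x,y)\delta_j(y) = P_Y(j)$ for $1\le i \le d_{\san{x}}$ and $1\le j \le d_{\san{y}}$, exhibiting this set as the mixture family generated by the functions $\{\delta_i(x)\}_i \cup \{\delta_j(y)\}_j$. Since $P_{XY}\in {\cal M}(\eta^{\san{x}}(P),\eta^{\san{y}}(P))$ and is positive, the family is nonempty and contains a full-support element, so Theorem \ref{theorem:pythagorean-linear-family} applies to $\min_{\tilde{P}\in {\cal M}}D(\tilde{P}\|Q_{XY})$. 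This immediately yields the Pythagorean identity \eqref{eq:pythagorean} (upon taking $\tilde{P}=P_{XY}$) and places the unique minimizer $P^*_{XY}$ in the exponential family
\[
{\cal F} := \bigg\{ Q_{XY}(x,y)\exp\Big[\sum_{i=1}^{d_{\san{x}}}\alpha_i\delta_i(x)+\sum_{j=1}^{d_{\san{y}}}\beta_j\delta_j(y)-\psi(\alpha,\beta)\Big] : \alpha\in \mathbb{R}^{d_{\san{x}}},~\beta\in\mathbb{R}^{d_{\san{y}}}\bigg\}
\]
generated by $Q_{XY}$ and those same indicator functions.

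The main technical step is then to verify ${\cal F}\subseteq {\cal E}(\theta^{\san{xy}}(Q))$. For any $\tilde{P}\in {\cal F}$, the tilt factor $\exp[\alpha_i+\beta_j]$ separates as a product of a function of $x$ and a function of $y$, so it cancels in the ratio that defines the correlation coordinate \eqref{eq:correlation-coordinate}:
\[
\theta^{\san{xy}}_{ij}(\tilde{P}) = \log \frac{\tilde{P}(i,j)\tilde{P}(0,0)}{\tilde{P}(i,0)\tilde{P}(0,j)} = \log \frac{Q_{XY}(i,j)Q_{XY}(0,0)}{Q_{XY}(i,0)Q_{XY}(0,j)} = \theta^{\san{xy}}_{ij}(Q).
\]
Combined with the previous step, this gives $P^*_{XY}\in {\cal E}(\theta^{\san{xy}}(Q))\cap {\cal M}(\eta^{\san{x}}(P),\eta^{\san{y}}(P))$ as required. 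The second assertion \eqref{eq:pythagorean-2} follows by running exactly the same argument with the roles of $P_{XY}$ and $Q_{XY}$ interchanged, replacing ${\cal M}$ by ${\cal M}(\eta^{\san{x}}(Q),\eta^{\san{y}}(Q))$ and minimizing $D(\tilde{Q}\|P_{XY})$.

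The only nontrivial step in this strategy is the cancellation computation showing that the tilt by $\sum_i \alpha_i \delta_i(x)+\sum_j \beta_j \delta_j(y)$ does not affect the correlation coordinates; the rest is bookkeeping that recasts the marginal-matching constraint as a mixture family so that the general Pythagorean theorem can be invoked off the shelf.
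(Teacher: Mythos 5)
Your proposal is correct and follows exactly the route the paper itself indicates: Theorem \ref{theorem:pythagorean} is cited from \cite{amari-han:89}, and the paper's footnote states it is essentially a special case of Theorem \ref{theorem:pythagorean-linear-family}, which is precisely your derivation---recasting the marginal-matching constraint as the mixture family generated by the indicators $\delta_i(x)$ and $\delta_j(y)$, invoking the general Pythagorean identity, and checking that exponential tilts of $Q_{XY}$ by such separable functions leave the correlation coordinates \eqref{eq:correlation-coordinate} unchanged. The cancellation computation and the symmetric argument for $Q_{XY}^*$ are both sound, so nothing is missing.
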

	
	Theorem \ref{theorem:pythagorean} implies that $E(P_{XY} \| Q_{XY})$ is a projected component of
	the relative entropy $D(P_{XY} \| Q_{XY})$; thus, we term it the projected relative entropy.
	
	For later use, we introduce a simple implication of Theorem \ref{theorem:pythagorean}.
	\begin{corollary} \label{corollary:implication-pythagorean}
		For any $\tilde{P}_{XY} \in {\cal E}(\theta^{\san{xy}}(P))$ and $\tilde{Q}_{XY} \in {\cal E}(\theta^{\san{xy}}(Q))$ 
		such that $\tilde{P}_X = \tilde{Q}_X$ and $\tilde{P}_Y = \tilde{Q}_Y$, it holds that
		\begin{align}
		\sum_{x,y} \tilde{P}_{XY}(x,y) \log \frac{\tilde{Q}_{XY}(x,y)}{Q_{XY}(x,y)} = D(\tilde{Q}_{XY} \| Q_{XY}).
		\end{align}
	\end{corollary}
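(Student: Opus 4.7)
The plan is to exploit the natural-parameter decomposition of the log-ratio $\log(\tilde{Q}_{XY}/Q_{XY})$. Since $\tilde{Q}_{XY}$ and $Q_{XY}$ both lie in ${\cal E}(\theta^{\san{xy}}(Q))$, they share the same interaction coordinates $\theta_{ij}^{\san{xy}}$. Writing both distributions via the parametrization \eqref{eq:natural-parameterization} and subtracting, the interaction terms $\theta_{ij}^{\san{xy}}\delta_{ij}(x,y)$ cancel exactly, leaving
\begin{align*}
\log \frac{\tilde{Q}_{XY}(x,y)}{Q_{XY}(x,y)} = a(x) + b(y) + c,
\end{align*}
where $a(x) = \sum_i (\tilde{\theta}_i^{\san{x}} - \theta_i^{\san{x}}(Q))\delta_i(x)$, $b(y) = \sum_j (\tilde{\theta}_j^{\san{y}} - \theta_j^{\san{y}}(Q))\delta_j(y)$, and $c$ is a constant coming from the potential functions (here $\tilde\theta$ denotes the natural parameter of $\tilde Q_{XY}$).

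Next, I would take expectations of the decomposition with respect to $\tilde{P}_{XY}$. Because $a$ depends only on $x$ and $b$ depends only on $y$,
\begin{align*}
\sum_{x,y} \tilde{P}_{XY}(x,y)\big[a(x)+b(y)+c\big] = \sum_x \tilde{P}_X(x) a(x) + \sum_y \tilde{P}_Y(y) b(y) + c.
\end{align*}
The hypothesis $\tilde{P}_X = \tilde{Q}_X$ and $\tilde{P}_Y = \tilde{Q}_Y$ then lets me replace $\tilde P_X, \tilde P_Y$ by $\tilde Q_X, \tilde Q_Y$, after which reassembling gives $\sum_{x,y}\tilde Q_{XY}(x,y)[a(x)+b(y)+c] = D(\tilde{Q}_{XY}\|Q_{XY})$. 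The role of the assumption $\tilde{P}_{XY} \in {\cal E}(\theta^{\san{xy}}(P))$ is invisible in this argument; it matters only in applications of the corollary, not in its proof.

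There is no real obstacle: the only subtle point is recognizing that when two distributions sit in the same exponential family ${\cal E}(\theta^{\san{xy}}(Q))$, their log-ratio is an \emph{additive} function of $x$ and $y$, which is exactly the structure that makes expectations depend only on the marginals. Everything else is linear manipulation, and no compactness, convexity, or projection argument is needed beyond what is already recorded in Theorem \ref{theorem:pythagorean}.
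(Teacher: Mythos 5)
Your proof is correct, but it takes a genuinely different route from the paper's. The paper obtains the corollary as a direct consequence of the Pythagorean theorem (Theorem \ref{theorem:pythagorean}): since $\tilde{Q}_{XY}$ lies in ${\cal E}(\theta^{\san{xy}}(Q))$ and has the same marginals as $\tilde{P}_{XY}$, it plays the role of the optimizer of $E(\tilde{P}_{XY}\|Q_{XY})$, so $D(\tilde{Q}_{XY}\|Q_{XY}) = D(\tilde{P}_{XY}\|Q_{XY}) - D(\tilde{P}_{XY}\|\tilde{Q}_{XY})$, and expanding the right-hand side collapses to $\sum_{x,y}\tilde{P}_{XY}(x,y)\log\frac{\tilde{Q}_{XY}(x,y)}{Q_{XY}(x,y)}$. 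You instead argue purely in coordinates: because $\tilde{Q}_{XY}$ and $Q_{XY}$ share the $\theta^{\san{xy}}$-components, the log-ratio $\log(\tilde{Q}_{XY}/Q_{XY})$ is additively separable in $x$ and $y$, so its expectation under any joint distribution depends only on the marginals, and the marginal-matching hypothesis lets you replace $\tilde{P}_{XY}$ by $\tilde{Q}_{XY}$. This is precisely the mechanism the paper deploys later for $\Lambda_\lambda$ in \eqref{eq:decomposition-llr}--\eqref{eq:preserve-expectation} and in the proof of Remark \ref{remark:crucial-identity}, so your argument is consistent with the paper's toolkit, just applied one step earlier. As for what each buys: the paper's proof is a two-line appeal to machinery already in place and emphasizes the geometric reading (the corollary is literally an ``implication of the Pythagorean theorem''), but it implicitly uses the identification of the unique element of ${\cal E}(\theta^{\san{xy}}(Q))$ with prescribed marginals as the $I$-projection of $\tilde{P}_{XY}$; your proof is more elementary, sidesteps that identification, yields the slightly stronger fact that $\tilde{P}_{XY}\mapsto\sum_{x,y}\tilde{P}_{XY}(x,y)\log\frac{\tilde{Q}_{XY}(x,y)}{Q_{XY}(x,y)}$ depends only on the marginals of $\tilde{P}_{XY}$, and you correctly observe that the hypothesis $\tilde{P}_{XY}\in{\cal E}(\theta^{\san{xy}}(P))$ is never used --- which is equally true of the paper's proof, that assumption being present only because of how the corollary is invoked later.
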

	\begin{proof}
		By applying Theorem \ref{theorem:pythagorean} for $(\tilde{P}_{XY},\tilde{Q}_{XY})$ in the place of $(P_{XY},P_{XY}^*)$, we have
		\begin{align}
		D(\tilde{Q}_{XY} \| Q_{XY}) &= D(\tilde{P}_{XY} \| Q_{XY}) - D(\tilde{P}_{XY} \| \tilde{Q}_{XY}) \\
		&= \sum_{x,y} \tilde{P}_{XY}(x,y) \bigg[ \log \frac{\tilde{P}_{XY}(x,y)}{Q_{XY}(x,y)} - \log \frac{\tilde{P}_{XY}(x,y)}{\tilde{Q}_{XY}(x,y)} \bigg] \\
		&= \sum_{x,y} \tilde{P}_{XY}(x,y) \log \frac{\tilde{Q}_{XY}(x,y)}{Q_{XY}(x,y)}. 
		\end{align}
	\end{proof}
	
\begin{figure}[tb]
\centering{
\includegraphics[width=0.4\textwidth, bb=0 0 350 312]{./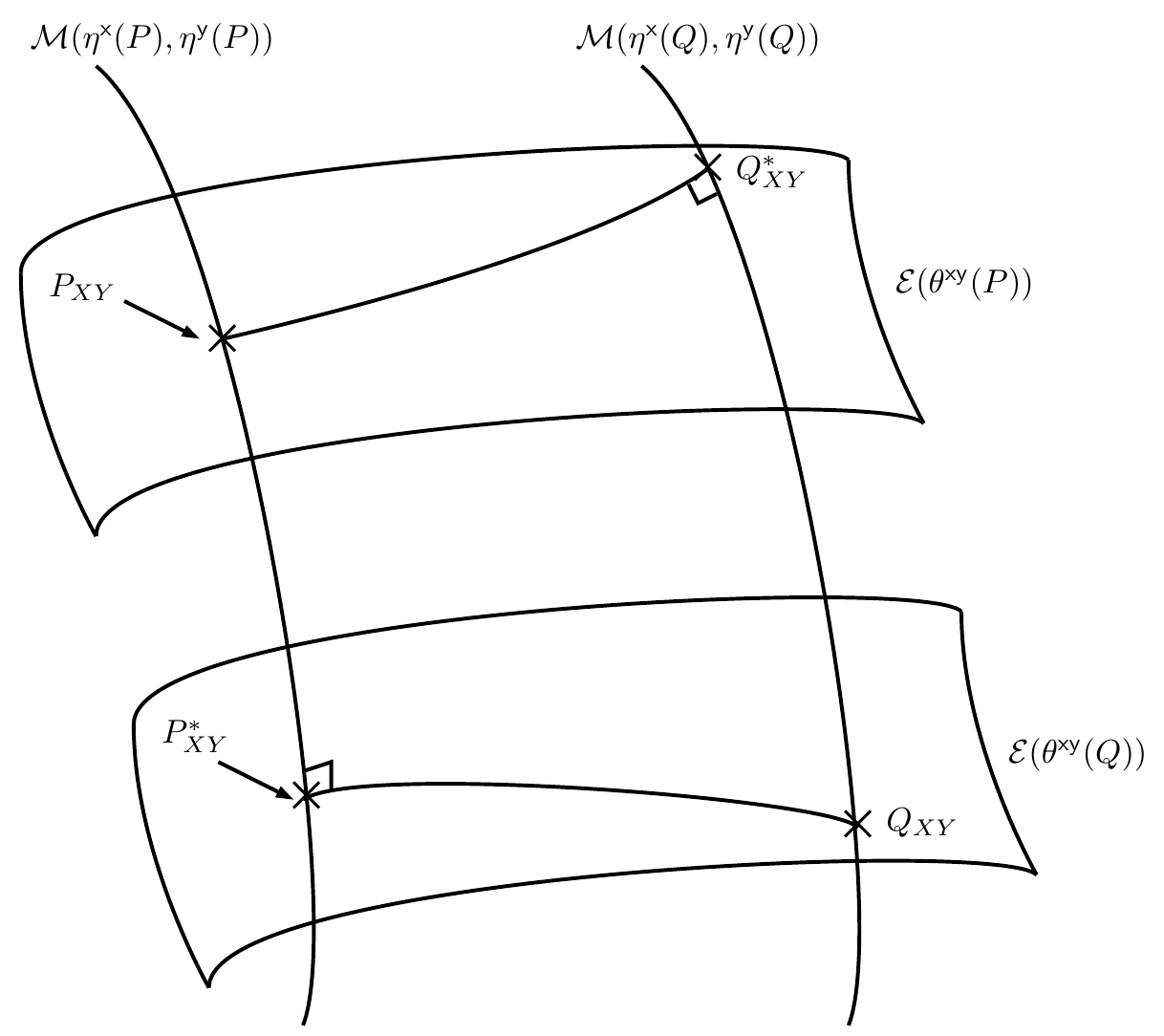}
\caption{A description of Theorem \ref{theorem:pythagorean}. The mixture family ${\cal M}(\eta^\san{x}(P),\eta^\san{y}(P))$ containing $P_{XY}$ and 
the exponential family ${\cal E}(\theta^{\san{xy}}(Q))$ containing $Q_{XY}$ intersect at $P_{XY}^*$. On the other hand,
the mixture family ${\cal M}(\eta^\san{x}(Q),\eta^\san{y}(Q))$ containing $Q_{XY}$ and the exponential family ${\cal E}(\theta^{\san{xy}}(P))$ containing $P_{XY}$
intersect at $Q_{XY}^*$. Those intersections are orthogonal in the sense of \eqref{eq:pythagorean} and \eqref{eq:pythagorean-2}, respectively.}
\label{Fig:pythagorean}
}
\end{figure}
	
	\section{Neyman-Pearson-Like Testing Scheme} \label{section:non-asymptotic}
	
	In this section, we propose a new testing scheme and evaluate its
	non-asymptotic performance.


	We first investigate the properties of the exponent function $F(r)$. For that purpose, the following property 
	of $E(P_{XY}\|Q_{XY})$ is useful.
	
	\begin{lemma} \label{lemma:convexity-of-E}
		$E(P_{XY} \| Q_{XY})$ is a convex function with respect to $P_{XY}$. 
	\end{lemma}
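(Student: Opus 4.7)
The plan is to exploit two standard facts: (i) the feasible set in the definition of $E(P_{XY}\|Q_{XY})$ depends linearly (in fact, through linear marginal constraints) on $P_{XY}$, and (ii) the relative entropy $D(\cdot\|Q_{XY})$ is convex in its first argument. Combining them yields convexity of the minimum by a direct mixture argument.

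First, I would fix $P^{(0)}_{XY}, P^{(1)}_{XY} \in {\cal P}({\cal X}\times{\cal Y})$ and $\lambda \in [0,1]$, and set $P^{(\lambda)}_{XY} = (1-\lambda)P^{(0)}_{XY} + \lambda P^{(1)}_{XY}$. By Theorem \ref{theorem:extremal}, let $\tilde{P}^{(0)}_{XY}$ and $\tilde{P}^{(1)}_{XY}$ be the (unique) optimizers attaining $E(P^{(0)}_{XY}\|Q_{XY})$ and $E(P^{(1)}_{XY}\|Q_{XY})$, respectively; by feasibility, their $X$- and $Y$-marginals agree with those of $P^{(0)}_{XY}$ and $P^{(1)}_{XY}$.

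Next, I would form the mixture $\tilde{P}^{(\lambda)}_{XY} := (1-\lambda)\tilde{P}^{(0)}_{XY} + \lambda \tilde{P}^{(1)}_{XY}$ and observe that its marginals are $(1-\lambda)P^{(0)}_X + \lambda P^{(1)}_X = P^{(\lambda)}_X$, and similarly $P^{(\lambda)}_Y$. Hence $\tilde{P}^{(\lambda)}_{XY}$ is feasible for the optimization defining $E(P^{(\lambda)}_{XY}\|Q_{XY})$, which gives
\begin{align}
E(P^{(\lambda)}_{XY}\|Q_{XY}) \le D(\tilde{P}^{(\lambda)}_{XY}\|Q_{XY}).
\end{align}
Finally, invoking the standard convexity of the relative entropy in its first argument yields
\begin{align}
D(\tilde{P}^{(\lambda)}_{XY}\|Q_{XY}) \le (1-\lambda) D(\tilde{P}^{(0)}_{XY}\|Q_{XY}) + \lambda D(\tilde{P}^{(1)}_{XY}\|Q_{XY}) = (1-\lambda)E(P^{(0)}_{XY}\|Q_{XY}) + \lambda E(P^{(1)}_{XY}\|Q_{XY}),
\end{align}
which is exactly the convexity claim.

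There is no real obstacle: the only point worth double-checking is that the mixture preserves the marginal constraints, which is immediate from the linearity of the marginalization map. The full-support assumption on $P_{XY}$ is not needed for this particular argument, and convexity of relative entropy in the first argument is classical (it follows from the log-sum inequality).
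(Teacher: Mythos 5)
Your proof is correct and follows essentially the same route as the paper: mix the two optimizers, note the mixture is feasible for the mixed marginal constraints, and apply convexity of $D(\cdot\|Q_{XY})$ in the first argument. The appeal to Theorem \ref{theorem:extremal} for uniqueness is harmless but not needed; existence of optimizers suffices.
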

	\begin{proof}
		For given $\bar{P}_{XY}$ and $\tilde{P}_{XY}$ and $0 < p < 1$, we have
		\begin{align}
		p E(\bar{P}_{XY} \| Q_{XY}) + (1-p) E(\tilde{P}_{XY} \| Q_{XY}) 
		&= p D(\bar{P}^*_{XY} \| Q_{XY}) + (1-p) D(\tilde{P}^*_{XY} \| Q_{XY}) \\
		&\ge D( p \bar{P}^*_{XY} + (1-p) \tilde{P}^*_{XY} \| Q_{XY}) \\
		&\ge E( p \bar{P}_{XY} + (1-p) \tilde{P}_{XY} \| Q_{XY}),
		\end{align}
		where $\bar{P}^*_{XY}$ and $\tilde{P}^*_{XY}$ are optimizers of $E(\bar{P}_{XY} \| Q_{XY})$ and $E(\tilde{P}_{XY} \| Q_{XY})$, respectively, 
		and the last inequality holds since the marginals of $p \bar{P}^*_{XY} + (1-p) \tilde{P}^*_{XY}$
		and $p \bar{P}_{XY} + (1-p) \tilde{P}_{XY} $ coincide.
	\end{proof}
	
	Lemma \ref{lemma:convexity-of-E} implies the set
	\begin{align} \label{eq:definition-C}
	{\cal C} := \{ \tilde{P}_{XY} : E(\tilde{P}_{XY} \| P_{XY}) \le r \}.
	\end{align}
	is a convex set.
	We can also find that $F(0) = E(P_{XY} \| Q_{XY})$ and $F(E(Q_{XY} \| P_{XY})) = 0$.
	$F(r)$ is also convex. For $0 < r < E(Q_{XY} \| P_{XY})$, we have $F(r) > 0$.
	Thus, $F(r)$ is a monotonically decreasing function for $0 \le r \le E(Q_{XY} \| P_{XY})$, and
	the minimization is attained at the boundary, i.e., $\tilde{P}_{XY}$ satisfying $E(\tilde{P}_{XY} \| P_{XY}) = r$.
	Let 
	\begin{align} \label{eq:definition-lambda-r-conversion}
	\lambda(r) := - r + F(r). 
	\end{align}
	Since $F(r)$ is monotonically decreasing, $\lambda(r)$ is a one-to-one mapping for
	$0 \le r \le E(Q_{XY} \| P_{XY})$, and we find that $\lambda(0) = E(P_{XY}\|Q_{XY})$
	and $\lambda(E(Q_{XY} \| P_{XY})) = - E(Q_{XY} \| P_{XY})$.
	
	The following lemma provide some properties of the optimizers in \eqref{eq:exponent-expression-1} and \eqref{eq:exponent-expression-2}.
	\begin{lemma} \label{lemma:properties-optimizer}
	For $0 < r < E(Q_{XY}\|P_{XY})$, the optimizer $Q_{XY}^\star$ of \eqref{eq:exponent-expression-1} and the optimizer $P_{XY}^\star$
	of \eqref{eq:exponent-expression-2} are unique, and those optimizers satisfy $(Q_{XY}^\star, P_{XY}^\star) \in {\cal E}(\theta^{\san{xy}}(Q)) \times {\cal E}(\theta^\san{xy}(P))$
	and have the same marginals, i.e., $Q_X^\star = P_X^\star$ and $Q_Y^\star = P_Y^\star$.
	\end{lemma}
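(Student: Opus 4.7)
My plan is to reduce both optimization problems \eqref{eq:exponent-expression-1} and \eqref{eq:exponent-expression-2} to a common minimization over the marginal pair $(\mu_X,\mu_Y)$, and then extract all three conclusions (uniqueness, exponential-family membership, matching marginals) simultaneously from this shared reduction together with the $I$-projection machinery of Theorems \ref{theorem:extremal} and \ref{theorem:pythagorean-linear-family}. The key observation is that in \eqref{eq:exponent-expression-1} the constraint depends on $Q_{\bar{X}\bar{Y}}$ only through its marginals (by \eqref{eq:identity-E-relative-entropy}), while in \eqref{eq:exponent-expression-2} the objective does; the other term in each problem is a relative entropy to a fixed positive distribution, whose $I$-projection onto a mixture family of prescribed marginals is unique and lands in the corresponding exponential family.

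Concretely, for any feasible marginal pair $(\mu_X,\mu_Y)$, the inner minimization of $D(\cdot\|Q_{XY})$ over distributions with those marginals equals $E(\mu_X\times\mu_Y\|Q_{XY})$ and is attained uniquely at the element of ${\cal E}(\theta^{\san{xy}}(Q))\cap{\cal M}(\eta^{\san{x}}(\mu_X),\eta^{\san{y}}(\mu_Y))$, by Theorem \ref{theorem:pythagorean-linear-family}. Dually, a pair $(\mu_X,\mu_Y)$ is feasible for \eqref{eq:exponent-expression-2} iff the $I$-projection of $P_{XY}$ onto ${\cal M}(\eta^{\san{x}}(\mu_X),\eta^{\san{y}}(\mu_Y))$ --- which lies in ${\cal E}(\theta^{\san{xy}}(P))$ and whose $D$-value is $E(\mu_X\times\mu_Y\|P_{XY})$ --- already satisfies the constraint, i.e.\ $E(\mu_X\times\mu_Y\|P_{XY})\le r$, and the objective $E(P_{\bar{X}\bar{Y}}\|Q_{XY})$ depends only on marginals. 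Consequently both problems reduce to the common program
\begin{align}
\min_{(\mu_X,\mu_Y)}\big\{E(\mu_X\times\mu_Y\|Q_{XY}):E(\mu_X\times\mu_Y\|P_{XY})\le r\big\}.
\end{align}

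Uniqueness of $Q_{XY}^\star$ is then immediate from Theorem \ref{theorem:extremal}: the feasible set of \eqref{eq:exponent-expression-1} is convex by Lemma \ref{lemma:convexity-of-E} and $D(\cdot\|Q_{XY})$ is strictly convex. Hence the optimal marginals $(\mu_X^\star,\mu_Y^\star):=(Q_X^\star,Q_Y^\star)$ are unique, and they must satisfy the constraint with equality $E(\mu_X^\star\times\mu_Y^\star\|P_{XY})=r$; otherwise $Q_{XY}^\star$ would be an unconstrained minimizer of $D(\cdot\|Q_{XY})$, forcing $Q_{XY}^\star=Q_{XY}$ and contradicting $r<E(Q_{XY}\|P_{XY})$. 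Since \eqref{eq:exponent-expression-2} reduces to the same program, any optimizer $P_{XY}^\star$ inherits marginals $(\mu_X^\star,\mu_Y^\star)$, yielding $P_X^\star=Q_X^\star$ and $P_Y^\star=Q_Y^\star$. Moreover, for any distribution with these marginals we have $D(\cdot\|P_{XY})\ge E(\mu_X^\star\times\mu_Y^\star\|P_{XY})=r$, so the constraint $D\le r$ forces equality, which singles out $P_{XY}^\star$ as the unique $I$-projection of $P_{XY}$ onto ${\cal M}(\eta^{\san{x}}(\mu_X^\star),\eta^{\san{y}}(\mu_Y^\star))$ and places it in ${\cal E}(\theta^{\san{xy}}(P))$ by Theorem \ref{theorem:pythagorean-linear-family}.

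The main obstacle I anticipate is uniqueness of $P_{XY}^\star$: because $E(\cdot\|Q_{XY})$ is constant on the set of distributions with prescribed marginals, the objective of \eqref{eq:exponent-expression-2} is not strictly convex, so uniqueness cannot be extracted from a generic convex-programming argument. The crucial step is to use the hypothesis $r<E(Q_{XY}\|P_{XY})$ to force activeness of the constraint at the optimum; this is what collapses the set of candidate optimizers with the correct marginals down to a single $I$-projection and simultaneously places it in the claimed exponential family.
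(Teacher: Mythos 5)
Your argument is correct, and it is assembled from the same ingredients as the paper's proof (Theorem \ref{theorem:extremal} for uniqueness and the support condition, Theorem \ref{theorem:pythagorean-linear-family} for membership in the exponential families, the identity \eqref{eq:identity-E-relative-entropy}, and activeness of the constraint at the optimum), but it is organized differently. The paper fixes the unique optimizer $Q_{XY}^\star$ of \eqref{eq:exponent-expression-1}, forms its same-marginal counterpart in ${\cal E}(\theta^{\san{xy}}(P))$, proves the two optimal values coincide by an explicit two-sided comparison ($v_1\ge v_2$ and $v_1\le v_2$), and only then pins down $P_{XY}^\star$ exactly as in your last step; you instead collapse both problems at the outset into a single program over marginal pairs, so equality of the values, uniqueness of the optimal marginals, and the matching $Q_X^\star=P_X^\star$, $Q_Y^\star=P_Y^\star$ fall out at once, which is arguably the cleaner packaging, while your convexity argument for activeness of the constraint (using $r<E(Q_{XY}\|P_{XY})$) replaces the paper's appeal to the monotonicity of $F(r)$ established before the lemma. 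One small imprecision to tighten: for marginal pairs with zero entries the inner $I$-projection lies only in the closure of ${\cal E}(\theta^{\san{xy}}(Q))$, since Theorem \ref{theorem:pythagorean-linear-family} requires the mixture family to contain a full-support element; this is harmless at the optimum because $P_{XY}\in{\cal C}$ and the support condition of Theorem \ref{theorem:extremal} force $Q_{XY}^\star$, and hence the optimal marginals, to have full support --- the same point the paper relegates to a footnote.
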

	\begin{proof}
	For simplicity of notation, we denote the values of \eqref{eq:exponent-expression-1} and \eqref{eq:exponent-expression-2} by $v_1$ and $v_2$, respectively. 
	Since the set ${\cal C}$ defined in \eqref{eq:definition-C} is a convex set, Theorem \ref{theorem:extremal} implies that 
	there exists a unique optimizer $Q^\star_{XY}$ of \eqref{eq:exponent-expression-1}.\footnote{Since ${\cal C}$ contains
	$P_{XY}$, which has full support, Theorem \ref{theorem:extremal} implies that $Q^\star_{XY}$ has full support.} 
	Note that $Q^\star_{XY}$ must be included in 
	${\cal E}(\theta^{\san{xy}}(Q))$; otherwise, the distribution $Q^\dagger_{XY}$ obtained by projecting\footnote{Here, the projection means
	$Q^\dagger_{XY}$ satisfying $Q^\dagger_X = Q^\star_X$, $Q^\dagger_Y = Q^\star_Y$ and $Q^\dagger_{XY} \in {\cal E}(\theta^\san{xy}(Q))$.} 
	$Q^\star_{XY}$
	onto ${\cal E}(\theta^{\san{xy}}(Q))$ satisfies $Q^\dagger_{XY} \in {\cal C}$ and
	\begin{align} \label{eq:condition-of-optimal-projection}
	D(Q^\dagger_{XY} \| Q_{XY}) < D(Q^\star_{XY} \| Q_{XY}),
	\end{align}
	which contradict the fact that $Q^\star_{XY}$ is the optimizer of \eqref{eq:exponent-expression-1}.
	
	Let $P_{XY}^\star \in {\cal E}(\theta^\san{xy}(P))$ be the counterpart of $Q_{XY}^\star$ satisfying $Q_X^\star= P_X^\star$ and $Q_Y^\star = P_Y^\star$.
	Then, since 
	\begin{align}
	D(P_{XY}^\star \| P_{XY}) = E(Q_{XY}^\star \| P_{XY}) = r
	\end{align}
	and $E(P_{XY}^\star \| Q_{XY}) = D(Q_{XY}^\star \| Q_{XY})$, we have $v_1 \ge v_2$.
	
	Let $P_{XY}^\prime$ be an optimizer of \eqref{eq:exponent-expression-2}, 
	and let $Q_{XY}^\prime$ be the distribution such that $D(Q_{XY}^\prime \| Q_{XY}) = E(P_{XY}^\prime \| Q_{XY})$
	and $Q_X^\prime = P_X^\prime$ and $Q_Y^\prime = P_Y^\prime$.\footnote{At this point, it is not guaranteed that $P_{XY}^\prime$
	has full support and that $Q_{XY}^\prime \in {\cal E}(\theta^\san{xy}(Q))$.} Since $E(Q_{XY}^\prime \| P_{XY}) \le D(P_{XY}^\prime \| P_{XY}) \le r$, we have $v_1 \le v_2$.
	
	Since $v_1 = v_2$, $Q_{XY}^\prime$ is also the optimizer of \eqref{eq:exponent-expression-1}.
	However, since the optimizer of \eqref{eq:exponent-expression-1} is unique, we have $Q_{XY}^\prime = Q_{XY}^\star$. 
	We can also verify that $P_{XY}^\prime = P_{XY}^\star$ as follows. Note that $F(r)$ is achieved at the boundary, i.e., 
	$E(Q_{XY}^\star \| P_{XY}) = r$. Since the marginals of $P_{XY}^\prime$, $Q_{XY}^\prime = Q_{XY}^\star$, and $P_{XY}^\star$ are the same, 
	and since $P_{XY}^\star$ is the unique optimizer of $\min\{ D(P_{\bar{X}\bar{Y}} \| P_{XY}) : P_{\bar{X}} = P_X^\star, P_{\bar{Y}} = P_X^\star\}$ (see Theorem \ref{theorem:extremal}), 
	if $P_{XY}^\prime \neq P_{XY}^\star$, then we have
	\begin{align}
	D(P_{XY}^\star \| P_{XY}) < D(P_{XY}^\prime \| P_{XY}),
	\end{align}
	which contradict the fact that $D(P_{XY}^\star \| P_{XY}) = E(Q_{XY}^\star \| P_{XY}) = r$ and $D(P_{XY}^\prime \| P_{XY}) \le r$. 
	Consequently, the optimizer of \eqref{eq:exponent-expression-2} is also unique, and it is included in ${\cal E}(\theta^\san{xy}(P))$.
	Finally, the claim that $Q_X^\star = P_X^\star$ and $Q_Y^\star = P_Y^\star$ is apparent from the above argument. 
	\end{proof}
	
	By using Lemma \ref{lemma:properties-optimizer}, we can show 
	the following geometrical properties of the optimizers in \eqref{eq:exponent-expression-1} and \eqref{eq:exponent-expression-2},
	and it plays an important role in the subsequent construction of our new testing scheme. 
	
	\begin{theorem}
		For $0 < r < E(Q_{XY}\|P_{XY})$, the optimization problem $F(r)$ in \eqref{eq:exponent-expression-1} and \eqref{eq:exponent-expression-2} is
		obtained by the unique pair 
		$(Q^\lambda_{XY}, P^\lambda_{XY}) \in {\cal E}(\theta^{\san{xy}}(Q)) \times {\cal E}(\theta^\san{xy}(P))$ 
		satisfying
		the following equations for some $a \in \mathbb{R}\backslash\{0\}$, $b \in \mathbb{R}$, and 
		and $\lambda = \lambda(r)$:
		\begin{align}
		\log \frac{P^{\lambda}_{XY}(x,y)}{P_{XY}(x,y)} &= a \log \frac{Q^{\lambda}_{XY}(x,y)}{Q_{XY}(x,y)} + b,~~~\forall (x,y) \in {\cal X}\times {\cal Y}, \label{eq:condition-alignment} \\
		D(Q^\lambda_{XY} \| Q_{XY}) - D(P^\lambda_{XY} \| P_{XY}) &= \lambda, \label{eq:condition-lambda} \\
		\sum_{x } P^\lambda_{XY}(x,y) &= \sum_{x } Q^\lambda_{XY}(x,y), \label{eq:condition-pair-1} \\
		\sum_{y } P^\lambda_{XY}(x,y) &= \sum_{y } Q^\lambda_{XY}(x,y), \label{eq:condition-pair-2}
		\end{align}
		and
		\begin{align}
		\sum_{x,y} Q_{XY}(x,y) \Lambda_\lambda(x,y)
		< \lambda < \sum_{x,y} P_{XY}(x,y) \Lambda_\lambda(x,y),
		\label{eq:condition-range}
		\end{align}
		where 
		\begin{align} \label{eq:definition-proxy-llr}
		\Lambda_\lambda(x,y) := \log \frac{Q_{XY}^\lambda(x,y)}{Q_{XY}(x,y)} - \log \frac{P_{XY}^\lambda(x,y)}{P_{XY}(x,y)}.
		\end{align}
	\end{theorem}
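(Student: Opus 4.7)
The plan is to verify conditions \eqref{eq:condition-alignment}--\eqref{eq:condition-range} one at a time, starting from the optimizer pair $(Q_{XY}^\lambda, P_{XY}^\lambda)$ supplied by Lemma \ref{lemma:properties-optimizer}. That lemma immediately gives $(Q_{XY}^\lambda, P_{XY}^\lambda) \in {\cal E}(\theta^{\san{xy}}(Q)) \times {\cal E}(\theta^{\san{xy}}(P))$ with $Q_X^\lambda = P_X^\lambda$ and $Q_Y^\lambda = P_Y^\lambda$, which are exactly \eqref{eq:condition-pair-1} and \eqref{eq:condition-pair-2}, and also supplies the uniqueness of the pair. For \eqref{eq:condition-lambda}, I use that $F$ is strictly decreasing on $(0, E(Q_{XY}\|P_{XY}))$, so the optimum is attained on the boundary, i.e., $E(Q_{XY}^\lambda\|P_{XY}) = r$. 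Applying the Pythagorean identity of Theorem \ref{theorem:pythagorean} with $Q_{XY}^\lambda$ in the role of $P_{XY}$ yields $D(P_{XY}^\lambda\|P_{XY}) = E(Q_{XY}^\lambda\|P_{XY}) = r$, while $D(Q_{XY}^\lambda\|Q_{XY}) = F(r)$ by \eqref{eq:exponent-expression-1}. Subtracting gives $D(Q_{XY}^\lambda\|Q_{XY}) - D(P_{XY}^\lambda\|P_{XY}) = F(r) - r = \lambda(r) = \lambda$ via \eqref{eq:definition-lambda-r-conversion}.

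To derive the alignment \eqref{eq:condition-alignment}, I would view $F(r)$ as the joint program $\min D(\hat Q \| Q_{XY})$ over $(\hat Q, \hat P)$ subject to $D(\hat P\|P_{XY}) \le r$, $\hat P_X = \hat Q_X$, $\hat P_Y = \hat Q_Y$, and the normalizations; the equivalence of \eqref{eq:exponent-expression-1} and \eqref{eq:exponent-expression-2} ensures that this program has value $F(r)$. Introducing Lagrange multipliers $\lambda_0 > 0$, $\mu(x)$, $\nu(y)$, $\gamma$, $\delta$ and differentiating the Lagrangian in $\hat Q(x,y)$ and $\hat P(x,y)$ gives $\log(Q_{XY}^\lambda/Q_{XY}) = -1 + \mu(x) + \nu(y) - \gamma$ and $\log(P_{XY}^\lambda/P_{XY}) = -1 - (\mu(x) + \nu(y) + \delta)/\lambda_0$. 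Eliminating $\mu(x) + \nu(y)$ produces \eqref{eq:condition-alignment} with $a = -1/\lambda_0$ and a constant $b$. Because the constraint is strictly active for $0 < r < E(Q_{XY}\|P_{XY})$, we have $\lambda_0 > 0$, hence $a < 0$, and in particular $a \in \mathbb{R}\backslash\{0\}$.

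For \eqref{eq:condition-range}, I would use the additive structure $\log(Q_{XY}^\lambda/Q_{XY}) = \alpha(x) + \beta(y) - \psi_Q$ that follows from $Q_{XY}^\lambda, Q_{XY} \in {\cal E}(\theta^{\san{xy}}(Q))$; the alignment \eqref{eq:condition-alignment} then forces $\log(P_{XY}^\lambda/P_{XY}) = a\alpha(x) + a\beta(y) - \psi_P$ after absorbing additive constants into $\alpha, \beta$. Applying Corollary \ref{corollary:implication-pythagorean} to $(P_{XY}^\lambda, Q_{XY}^\lambda)$ in both directions yields $\sum P_{XY}^\lambda \Lambda_\lambda = \sum Q_{XY}^\lambda \Lambda_\lambda = D(Q_{XY}^\lambda\|Q_{XY}) - D(P_{XY}^\lambda\|P_{XY}) = \lambda$. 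Noting that $\Lambda_\lambda = (1 - a)[\alpha(x) + \beta(y)] + \mathrm{const}$ and using $\sum (P_{XY} - P_{XY}^\lambda) \log(P_{XY}^\lambda/P_{XY}) = -[D(P_{XY}\|P_{XY}^\lambda) + D(P_{XY}^\lambda\|P_{XY})]$, a short computation reduces $\sum P_{XY} \Lambda_\lambda - \lambda$ to $\frac{a-1}{a} [D(P_{XY}\|P_{XY}^\lambda) + D(P_{XY}^\lambda\|P_{XY})]$, and a symmetric calculation gives $\lambda - \sum Q_{XY} \Lambda_\lambda = (1-a)[D(Q_{XY}^\lambda\|Q_{XY}) + D(Q_{XY}\|Q_{XY}^\lambda)]$. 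With $a < 0$, both prefactors are strictly positive, and since $D(P_{XY}^\lambda\|P_{XY}) = r > 0$ and $D(Q_{XY}^\lambda\|Q_{XY}) = F(r) > 0$ imply $P_{XY}^\lambda \neq P_{XY}$ and $Q_{XY}^\lambda \neq Q_{XY}$ respectively, the symmetric divergences are strictly positive, establishing \eqref{eq:condition-range}.

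The main obstacle is the sign bookkeeping in the last step: the reductions of $\sum(P_{XY} - P_{XY}^\lambda)\Lambda_\lambda$ and $\sum(Q_{XY}^\lambda - Q_{XY})\Lambda_\lambda$ to symmetric divergences work cleanly only after combining the proportionality of shift functions implied by \eqref{eq:condition-alignment} with the Lagrangian-derived sign $a < 0$, and together these fix the directions of the two strict inequalities. The Lagrangian derivation itself is routine, but it is important to verify $\lambda_0 > 0$ throughout the interior regime, equivalently that $F$ is strictly decreasing there, which in turn ensures $a \in \mathbb{R}\backslash\{0\}$ as asserted.
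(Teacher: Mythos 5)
Your argument only establishes one half of the theorem. What you prove is necessity: the optimizer pair supplied by Lemma \ref{lemma:properties-optimizer} satisfies \eqref{eq:condition-alignment}--\eqref{eq:condition-range}. But the theorem also asserts the converse, namely that \emph{any} pair $(Q_{XY}^\lambda,P_{XY}^\lambda)\in{\cal E}(\theta^{\san{xy}}(Q))\times{\cal E}(\theta^{\san{xy}}(P))$ solving the system \eqref{eq:condition-alignment}--\eqref{eq:condition-range} with $\lambda=\lambda(r)$ must coincide with the optimizer pair, so that the system has a unique solution and solving it yields $F(r)$. The uniqueness you invoke from Lemma \ref{lemma:properties-optimizer} is uniqueness of the \emph{optimizers} of \eqref{eq:exponent-expression-1} and \eqref{eq:exponent-expression-2}; it says nothing about whether the system of equations could admit spurious solutions that are not optimizers for any $r'$ (optimizer pairs for $r'\neq r$ are excluded by \eqref{eq:condition-lambda} and the injectivity of $\lambda(\cdot)$, but non-optimizer solutions are not). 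The paper devotes the entire second half of its proof to exactly this step: starting from an arbitrary pair satisfying the conditions, it shows $F(D(P^\lambda_{XY}\|P_{XY}))=D(Q^\lambda_{XY}\|Q_{XY})$ via the supporting-hyperplane argument (${\cal S}_{\le}(P_{XY}^\lambda)={\cal S}_{\ge}(Q_{XY}^\lambda)$ together with Corollary \ref{corollary:implication-pythagorean}), and only then concludes optimality and uniqueness of the solution pair. This converse is what makes the characterization usable (e.g., the numerical computation of $(Q_{XY}^\lambda,P_{XY}^\lambda)$ in Section \ref{sec:example} proceeds by solving the system), so omitting it is a genuine gap, not a formality.

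On the necessity half, your route differs from the paper's and is mostly workable. For \eqref{eq:condition-alignment} the paper does not use Lagrange multipliers: it shows $P_{XY}^\lambda$ minimizes $D(\cdot\|P_{XY})$ over the mixture family ${\cal S}_=(Q_{XY}^\lambda)$ and invokes Theorem \ref{theorem:pythagorean-linear-family} to place $P_{XY}^\lambda$ in the exponential family generated by $P_{XY}$ and $\log(Q^\lambda_{XY}/Q_{XY})$, which avoids the KKT technicalities you gloss over (constraint qualification, full support of the optimizers so the stationarity conditions with logarithms are valid, and strict positivity of the multiplier $\lambda_0$, which is where $a\neq 0$ really comes from; the paper gets $a\neq0$ simply from $P^\lambda_{XY}\neq P_{XY}$). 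Your derivation of \eqref{eq:condition-range} via $\sum_{x,y}P^\lambda_{XY}\Lambda_\lambda=\sum_{x,y}Q^\lambda_{XY}\Lambda_\lambda=\lambda$ and the identities $\sum(P_{XY}-P^\lambda_{XY})\Lambda_\lambda=\frac{a-1}{a}[D(P_{XY}\|P^\lambda_{XY})+D(P^\lambda_{XY}\|P_{XY})]$, $\sum(Q^\lambda_{XY}-Q_{XY})\Lambda_\lambda=(1-a)[D(Q^\lambda_{XY}\|Q_{XY})+D(Q_{XY}\|Q^\lambda_{XY})]$ is correct and arguably cleaner than the paper's set-inclusion argument, but it hinges on the sign $a<0$, which in your write-up rests entirely on the unproven assertion $\lambda_0>0$; you should either justify that (relaxing the constraint strictly decreases the value because $F$ is strictly decreasing on $(0,E(Q_{XY}\|P_{XY}))$) or obtain the sign as the paper implicitly does from the geometry of the tilt. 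Fixing these points would repair the necessity direction, but the converse direction still needs to be supplied.
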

	\begin{proof}
		First, we show that the optimizer pair of \eqref{eq:exponent-expression-1} and \eqref{eq:exponent-expression-2}
		satisfy \eqref{eq:condition-alignment}-\eqref{eq:condition-range}.
		From Lemma \ref{lemma:properties-optimizer}, the optimizer pair of \eqref{eq:exponent-expression-1} and \eqref{eq:exponent-expression-2}
		are given by unique pair $(Q_{XY}^\lambda, P_{XY}^\lambda) \in {\cal E}(\theta^\san{xy}(Q)) \times {\cal E}(\theta^\san{xy}(P))$ 
		satisfying \eqref{eq:condition-pair-1} and \eqref{eq:condition-pair-2}. 
		Let 
		\begin{align}
		{\cal S}_{\ge}(Q^\lambda_{XY}) &:= \bigg\{ \tilde{Q}_{XY} : \sum_{x,y} (\tilde{Q}_{XY}(x,y) - Q^\lambda_{XY}(x,y)) \log \frac{Q^\lambda_{XY}(x,y)}{Q_{XY}(x,y)} \ge 0 \bigg\};
		\label{eq:supporting-set-of-Q-lambda}
		\end{align}
		we also define ${\cal S}_=(Q^\lambda_{XY})$ and ${\cal S}_\le(Q^\lambda_{XY})$ by replacing $\ge$ with $=$ and $\le$ in \eqref{eq:supporting-set-of-Q-lambda}.
		By Corollary \ref{corollary:implication-pythagorean}, we have $P^{\lambda}_{XY} \in {\cal S}_=(Q^\lambda_{XY})$.
		Since $F(r)$ is achieved at the boundary, i.e., $E(Q_{XY}^\lambda \|P_{XY}) = r$, we have
		\begin{align} 
		D(P^\lambda_{XY} \| P_{XY}) = E(Q_{XY}^\lambda \|P_{XY}) 
		= r. \label{eq:saturating-constraint}
		\end{align}
		Furthermore, we also have
		\begin{align}
		\lefteqn{ \min\big\{ D(\tilde{P}_{XY} \| P_{XY}) : \tilde{P}_{XY} \in {\cal S}_=(Q^\lambda_{XY}) \big\} } \label{eq:optimality-of-tilde-P-star} \\
		&\ge \inf\big\{ D(\tilde{P}_{XY} \| P_{XY}) : \tilde{P}_{XY} \notin {\cal S}_\ge(Q^\lambda_{XY}) \big\}  \\
		&\ge \inf\big\{ D(\tilde{P}_{XY} \| P_{XY}) : \tilde{P}_{XY} \notin {\cal C} \big\} \\
		&\ge \inf\big\{ E(\tilde{P}_{XY} \| P_{XY}) : \tilde{P}_{XY} \notin {\cal C} \big\} \\
		&\ge r, \label{eq:inequality-optimality-of-tilde-P-star}
		\end{align}
		where the second inequality follows from ${\cal C} \subseteq {\cal S}_\ge(Q^\lambda_{XY})$ (see \eqref{eq:inclusion-convex-supporting}).
		The inequalities \eqref{eq:optimality-of-tilde-P-star}-\eqref{eq:inequality-optimality-of-tilde-P-star} together with \eqref{eq:saturating-constraint} 
		imply that $P^\lambda_{XY}$ is the optimizer of \eqref{eq:optimality-of-tilde-P-star}. Thus, from Theorem \ref{theorem:pythagorean-linear-family},
		$P^\lambda_{XY}$ is contained in the exponential family generated by
		$P_{XY}$ and $\log \frac{Q^\lambda_{XY}(x,y)}{Q_{XY}(x,y)}$, i.e., it can be written as 
		\begin{align}
		P^\lambda_{XY}(x,y) = P_{XY}(x,y) \exp\bigg[ s \log \frac{Q^\lambda_{XY}(x,y)}{Q_{XY}(x,y)}  - \psi(s) \bigg]
		\end{align} 
		for some $s,\psi(s) \in \mathbb{R}$, which implies that the pair $(Q^\lambda_{XY},P^\lambda_{XY})$ satisfies \eqref{eq:condition-alignment}.\footnote{Note that
		$s \neq 0$ since $P_{XY}^\lambda \neq P_{XY}$.} 
		The pair also satisfies \eqref{eq:condition-lambda} since $D(Q^\lambda_{XY}\|Q_{XY}) = F(r)$ and \eqref{eq:saturating-constraint} (see also \eqref{eq:definition-lambda-r-conversion}).
		Finally, we can confirm \eqref{eq:condition-range} as follows. By noting $P_{XY} \in {\cal S}_\ge(Q_{XY}^\lambda)$ and $D(P_{XY}\|P_{XY}^\lambda) > - D(P_{XY}^\lambda \| P_{XY})$, we have
		\begin{align}
		\sum_{x,y} P_{XY}(x,y) \Lambda_\lambda(x,y)
		&= \sum_{x,y} P_{XY}(x,y)\bigg[ \log \frac{Q_{XY}^\lambda(x,y)}{Q_{XY}(x,y)} - \log \frac{P_{XY}^\lambda(x,y)}{P_{XY}(x,y)} \bigg] \\
		&> D(Q_{XY}^\lambda \| Q_{XY}) - D(P_{XY}^\lambda \| P_{XY}) \\
		&= \lambda.
		\end{align}
		Let ${\cal S}_\ge(P_{XY}^\lambda)$ be the set defined by replacing $Q_{XY}$ and $Q_{XY}^\lambda$ with $P_{XY}$ and $P_{XY}^\lambda$ in \eqref{eq:supporting-set-of-Q-lambda}.
		In fact, by noting \eqref{eq:condition-alignment} and
		\begin{align}
		\sum_{x,y} Q^\lambda_{XY}(x,y) \log \frac{Q^\lambda_{XY}(x,y)}{Q_{XY}(x,y)} = \sum_{x,y} P^\lambda_{XY}(x,y) \log \frac{Q^\lambda_{XY}(x,y)}{Q_{XY}(x,y)},
		\end{align}
		which follows from Corollary \ref{corollary:implication-pythagorean}, we can find that 
		either ${\cal S}_\ge(P_{XY}^\lambda) = {\cal S}_\ge(Q_{XY}^\lambda)$ or ${\cal S}_\ge(P_{XY}^\lambda) = {\cal S}_\le(Q_{XY}^\lambda)$;
		since $P_{XY} \notin {\cal S}_\ge(P_{XY}^\lambda)$, we have ${\cal S}_\ge(P_{XY}^\lambda) = {\cal S}_\le(Q_{XY}^\lambda)$.
		Thus, by noting $Q_{XY} \in {\cal S}_\ge(P_{XY}^\lambda)$ and $- D(Q_{XY} \| Q_{XY}^\lambda) < D(Q_{XY}^\lambda \| Q_{XY})$, we have
		\begin{align}
		\sum_{x,y} Q_{XY}(x,y) \Lambda_\lambda(x,y)
		&= \sum_{x,y} Q_{XY}(x,y) \bigg[ \log \frac{Q_{XY}^\lambda(x,y)}{Q_{XY}(x,y)} - \log \frac{P_{XY}^\lambda(x,y)}{P_{XY}(x,y)} \bigg] \\
		&< D(Q_{XY}^\lambda \| Q_{XY}) - D(P_{XY}^\lambda \| P_{XY}) \\
		&= \lambda.
		\end{align}
		
		Second, for a given pair $(Q^\lambda_{XY}, P^\lambda_{XY})$ satisfying \eqref{eq:condition-alignment}-\eqref{eq:condition-range},
		we show that the pair is the optimizer pair of \eqref{eq:exponent-expression-1} and \eqref{eq:exponent-expression-2};
		then, since the optimizer pair of \eqref{eq:exponent-expression-1} and \eqref{eq:exponent-expression-2} is unique,
		the solution pair of \eqref{eq:condition-alignment}-\eqref{eq:condition-range} is also unique.
		
		We shall show that $F(D(P^\lambda_{XY} \| P_{XY})) = D(Q^\lambda_{XY} \| Q_{XY})$. Then, since
		$\lambda(r)$ is one-to-one, $E(Q_{XY}^\lambda \| P_{XY}) = D(P^\lambda_{XY} \| P_{XY}) = r$ and 
		$E(P_{XY}^\lambda \| Q_{XY}) = D(Q^\lambda_{XY} \| Q_{XY}) =  F(r)$, which implies 
		that the pair is the optimizer pair of \eqref{eq:exponent-expression-1} and \eqref{eq:exponent-expression-2}.
		
		Since $Q_{XY}^\lambda$ is contained in the set ${\cal C}$ of \eqref{eq:definition-C}
		for $r = D(P_{XY}^\lambda \| P_{XY})$, we have $D(Q_{XY}^\lambda \| Q_{XY}) \ge F(D(P_{XY}^\lambda \| P_{XY}))$.
		In order to prove the opposite inequality, let $Q_{XY}^\star$ be the optimizer of \eqref{eq:exponent-expression-1} for $r=D(P_{XY}^\lambda \| P_{XY})$.
		From Lemma \ref{lemma:properties-optimizer}, we have $Q_{XY}^\star \in {\cal E}(\theta^\san{xy}(Q))$.
		Let $P_{XY}^\star \in {\cal E}(\theta^\san{xy}(P))$ be the same marginal counterpart of $Q_{XY}^\star$, i.e., the marginals satisfy $P_X^\star = Q_X^\star$
		and $P_Y^\star = Q_Y^\star$. Note that
		\begin{align}
		D(P_{XY}^\star \| P_{XY}) = E(Q_{XY}^\star \| P_{XY}) \le D(P_{XY}^\lambda \| P_{XY})
		\end{align}
		since $Q_{XY}^\star$ is contained in the set ${\cal C}$ of \eqref{eq:definition-C} for $r = D(P_{XY}^\lambda \| P_{XY})$. Thus, we have
		\begin{align}
		D(P_{XY}^\star \| P_{XY}) - D(P_{XY}^\star \| P_{XY}^\lambda) \le D(P_{XY}^\lambda \| P_{XY}).
		\end{align}
		Furthermore, from Corollary \ref{corollary:implication-pythagorean} and the fact that $P_X^\star = Q_X^\star$
		and $P_Y^\star = Q_Y^\star$, we have
		\begin{align}
		\lefteqn{ D(P_{XY}^\star \| P_{XY}) - D(P_{XY}^\star \| P_{XY}^\lambda) } \\
		&= \sum_{x,y} Q_{XY}^\star(x,y) \bigg[ \log \frac{P_{XY}^\star(x,y)}{P_{XY}(x,y)} - \log \frac{P_{XY}^\star(x,y)}{P_{XY}^\lambda(x,y)} \bigg] \\
		&= \sum_{x,y} Q_{XY}^\star(x,y) \log \frac{P_{XY}^\lambda(x,y)}{P_{XY}(x,y)}.
		\end{align}
		Thus, we have $Q_{XY}^\star \in {\cal S}_{\le}(P_{XY}^\lambda)$, where ${\cal S}_{\le}(P_{XY}^\lambda)$ is defined by replacing $\ge$ with $\le$
		in the definition of ${\cal S}_{\ge}(P_{XY}^\lambda)$. As we have shown above, \eqref{eq:condition-alignment} and Corollary \ref{corollary:implication-pythagorean}
		imply ${\cal S}_{\le}(P_{XY}^\lambda) = {\cal S}_{\ge}(Q_{XY}^\lambda)$. Thus, we have
		\begin{align}
		F(D(P_{XY}^\lambda \| P_{XY})) &= D(Q_{XY}^\star \| Q_{XY}) \\
		&\ge \min\big\{ D(Q_{\bar{X}\bar{Y}} \| Q_{XY}) : Q_{\bar{X}\bar{Y}} \in {\cal S}_{\ge}(Q_{XY}^\lambda)  \big\} \\
		&= D(Q_{XY}^\lambda \| Q_{XY}),
		\end{align}
		where the last equality follows since $Q_{\bar{X}\bar{Y}} \in {\cal S}_{\ge}(Q_{XY}^\lambda)$ is equivalent to
		\begin{align}
		D(Q_{\bar{X}\bar{Y}} \| Q_{XY}) \ge D(Q_{\bar{X}\bar{Y}} \| Q_{XY}^\lambda) + D(Q_{XY}^\lambda \| Q_{XY}).
		\end{align}
	\end{proof}

	Now, we introduce a proxy of the log-likelihood ratio $\Lambda_\lambda(x,y)$ as follows. 
	For $- E(Q_{XY} \| P_{XY}) < \lambda < E(P_{XY} \| Q_{XY})$, we define 
	\begin{align}
	\Lambda_\lambda(x,y) := \log \frac{Q_{XY}^\lambda(x,y)}{Q_{XY}(x,y)} - \log \frac{P_{XY}^\lambda(x,y)}{P_{XY}(x,y)}
	\end{align}
	for the unique solution pair $(Q_{XY}^\lambda,P_{XY}^\lambda)$
	satisfying \eqref{eq:condition-alignment}-\eqref{eq:condition-range}; for $\lambda = E(P_{XY} \| Q_{XY})$, we define
	\begin{align}
	\Lambda_\lambda(x,y) := \log \frac{P_{XY}^*(x,y)}{Q_{XY}(x,y)},
	\end{align}
	where $P_{XY}^*$ is the optimizer of $E(P_{XY}\|Q_{XY})$; for $\lambda = - E(Q_{XY} \| P_{XY})$, we define 
	\begin{align}
	\Lambda_\lambda(x,y) := - \log \frac{Q^*_{XY}(x,y)}{P_{XY}(x,y)},
	\end{align}
	where $Q_{XY}^*$ is the optimizer of $E(Q_{XY} \| P_{XY})$.

	Since $(Q^\lambda_{XY}, P^\lambda_{XY}) \in {\cal E}(\theta^{\san{xy}}(Q)) \times {\cal E}(\theta^\san{xy}(P))$,\footnote{Note also that 
		$P^*_{XY} \in {\cal E}(\theta^{\san{xy}}(Q))$ and $Q^*_{XY} \in {\cal E}(\theta^\san{xy}(P))$.} i.e., the pairs 
	$Q^\lambda_{XY}$ and $P^\lambda_{XY}$ have the same values as $Q_{XY}$ and $P_{XY}$ at the
	$\theta^{\san{x}\san{y}}$-coordinate, respectively, we can write 
	\begin{align} \label{eq:decomposition-llr}
	\Lambda_\lambda(x,y) = a_1(x) + a_2(y)
	\end{align} 
	for some functions $a_1$ on ${\cal X}$ and $a_2$ on ${\cal Y}$. 
	Thus, for any joint distribution $P_{\bar{X}\bar{Y}} \in \overline{{\cal P}}({\cal X}\times {\cal Y})$ with marginals $P_{\bar{X}}$ and $P_{\bar{Y}}$, it holds that
	\begin{align}
	\sum_{x,y} P_{\bar{X}}(x) P_{\bar{Y}}(y) \Lambda_\lambda(x,y)
	&= \sum_x P_{\bar{X}}(x) a_1(x) + \sum_y P_{\bar{Y}}(y) a_2(y) \\
	&= \sum_{x,y} P_{\bar{X}\bar{Y}}(x,y) \Lambda_\lambda(x,y). \label{eq:preserve-expectation}
	\end{align}
	More generally, $\sum_{x,y} \tilde{P}_{\bar{X}\bar{Y}}(x,y) \Lambda_\lambda(x,y)$
	and $\sum_{x,y} P_{\bar{X}\bar{Y}}(x,y) \Lambda_\lambda(x,y)$ take the same value as long as the marginals
	of $\tilde{P}_{\bar{X}\bar{Y}}$ and $P_{\bar{X}\bar{Y}}$ coincide. 
	Geometrically, this is because ${\cal M}(\eta^\san{x},\eta^\san{y})$ for a given $(\eta^\san{x},\eta^\san{y})$ is orthogonal to
	${\cal E}(\theta^\san{xy}(P))$ and ${\cal E}(\theta^{\san{xy}}(Q))$.

	\begin{remark} \label{remark:crucial-identity}
		Functions $a_1,a_2$ in 
		decomposition \eqref{eq:decomposition-llr} are not unique; for instance, by some calculation, we can verify (see Appendix \ref{appendix:proof-of-remark:crucial-identity}) that
		$\Lambda_\lambda(x,y)$ can be decomposed as\footnote{Note that 
			${\cal X} = \{0,1\,\ldots, |{\cal X}|-1\}$ and ${\cal Y} = \{ 0,1,\ldots,|{\cal Y}|-1\}$. In fact, the choice $(0,0)$ is not crucial,
			and the same statement holds even if we replace $\Lambda_\lambda(x,0)$, $\Lambda_\lambda(0,y)$, and
			$\Lambda_\lambda(0,0)$ with $\Lambda_\lambda(x,j)$, $\Lambda_\lambda(i,y)$, and $\Lambda_\lambda(i,j)$ for arbitrarily fixed $(i,j) \in {\cal X} \times {\cal Y}$, respectively.}   
		\begin{align}
		\Lambda_\lambda(x,y) = \Lambda(x,0) + \Lambda_\lambda(0,y) - \Lambda_\lambda(0,0)
		\end{align}
	\end{remark}

	Now, we are ready to propose our testing scheme. Fix arbitrary $- E(Q_{XY} \| P_{XY}) \le \lambda \le E(P_{XY} \| Q_{XY})$.
	Upon observing $\bm{x}$ and $\bm{y}$, the encoders send their types, i.e., 
	$f_1(\bm{x}) = \san{t}_{\bm{x}}$ and $f_2(\bm{y}) = \san{t}_{\bm{y}}$, respectively. Then, upon receiving a pair of marginal types $(\san{t}_{\bm{x}},\san{t}_{\bm{y}})$,
	the decoder $g$ outputs $\san{H}_0$ if
	\begin{align}
	\sum_{x,y} \san{t}_{\bm{x}}(x)\san{t}_{\bm{y}}(y) \Lambda_{\lambda}(x,y)  > \tau
	\end{align}
	holds for a prescribed threshold $\tau$; otherwise it outputs $\san{H}_1$.
	By \eqref{eq:preserve-expectation}, the decoder outputs $g(f_1(\bm{x}),f_2(\bm{y})) = \san{H}_0$ if and only if
	\begin{align}
	\sum_{x,y} \san{t}_{\bm{x}\bm{y}}(x,y) \Lambda_{\lambda}(x,y) 
	&= \frac{1}{n} \sum_{i=1}^n \Lambda_{\lambda}(x_i,y_i) \\
	&>\tau.
	\end{align}
	Apparently, this scheme is a symmetric scheme. 
	The performance of this scheme is summarized in the following theorem.

	\begin{theorem} \label{theorem:performance-NP-type-test}
		For a given $- E(Q_{XY} \| P_{XY}) \le \lambda \le E(P_{XY} \| Q_{XY})$ and $\tau \in \mathbb{R}$, the above mentioned Neyman-Pearson-like scheme $T_n^\mathtt{NPl}$ 
		has the following error trade-off: 
		\begin{align}
		\alpha[T_n^\mathtt{NPl}] &= P\bigg( \frac{1}{n} \sum_{i=1}^n \Lambda_{\lambda}(X_i,Y_i) \le \tau \bigg), \\
		\beta[T_n^\mathtt{NPl}] &= Q\bigg( \frac{1}{n} \sum_{i=1}^n \Lambda_{\lambda}(X_i,Y_i) > \tau \bigg).
		\end{align}
	\end{theorem}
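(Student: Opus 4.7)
The plan is to reduce the theorem to a direct verification by leveraging the additive decomposition \eqref{eq:decomposition-llr} of $\Lambda_\lambda(x,y)$ together with identity \eqref{eq:preserve-expectation}. The essence is that the decoder's computable statistic (formed from the marginal types) agrees pointwise with the empirical average of $\Lambda_\lambda$ along the sequence, so both error probabilities reduce to the probabilities of events defined directly on $\frac{1}{n}\sum_{i=1}^n \Lambda_\lambda(X_i,Y_i)$.

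First I would unwind the definition of the proposed scheme: the encoders transmit $f_1(\bm{x}) = \san{t}_{\bm{x}}$ and $f_2(\bm{y}) = \san{t}_{\bm{y}}$, and the decoder declares $\san{H}_0$ precisely when the quantity $\sum_{x,y} \san{t}_{\bm{x}}(x)\san{t}_{\bm{y}}(y)\Lambda_\lambda(x,y)$ exceeds $\tau$. Next I would invoke the decomposition $\Lambda_\lambda(x,y) = a_1(x) + a_2(y)$, valid because $(Q^\lambda_{XY},P^\lambda_{XY}) \in {\cal E}(\theta^{\san{xy}}(Q)) \times {\cal E}(\theta^{\san{xy}}(P))$ (and, for the endpoint cases $\lambda = \pm$ projected relative entropies, because the respective optimizers lie in the same exponential families). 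By equation \eqref{eq:preserve-expectation}, the test statistic computed from the product of marginals coincides with the one computed from the joint type, i.e.,
\begin{align}
\sum_{x,y} \san{t}_{\bm{x}}(x)\san{t}_{\bm{y}}(y)\Lambda_\lambda(x,y) = \sum_{x,y} \san{t}_{\bm{x}\bm{y}}(x,y)\Lambda_\lambda(x,y) = \frac{1}{n}\sum_{i=1}^n \Lambda_\lambda(x_i,y_i).
\end{align}

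It then follows that the decoder outputs $\san{H}_0$ on input $(\bm{x},\bm{y})$ if and only if $\frac{1}{n}\sum_{i=1}^n \Lambda_\lambda(x_i,y_i) > \tau$. Substituting this equivalence into the definitions of $\alpha[T_n]$ and $\beta[T_n]$, and using the convention that $P(\cdot)$ and $Q(\cdot)$ denote probabilities under $P_{XY}^n$ and $Q_{XY}^n$, respectively, I would immediately read off the two expressions claimed in the theorem.

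There is essentially no hard step here: the entire nontrivial content has been discharged in the construction of $\Lambda_\lambda$ and the identity \eqref{eq:decomposition-llr}, which encodes the geometric orthogonality of the mixture family ${\cal M}(\eta^{\san{x}},\eta^{\san{y}})$ to both ${\cal E}(\theta^{\san{xy}}(P))$ and ${\cal E}(\theta^{\san{xy}}(Q))$. The only minor point worth spelling out is that the scheme is well-defined at the endpoints $\lambda = E(P_{XY}\|Q_{XY})$ and $\lambda = -E(Q_{XY}\|P_{XY})$, where the decomposition \eqref{eq:decomposition-llr} still holds because $P^*_{XY} \in {\cal E}(\theta^{\san{xy}}(Q))$ and $Q^*_{XY} \in {\cal E}(\theta^{\san{xy}}(P))$ respectively, so that $\Lambda_\lambda$ remains an additive function of the marginals.
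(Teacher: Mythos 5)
Your proof is correct and follows essentially the same route as the paper, which establishes the theorem in the text immediately preceding it: the additive decomposition \eqref{eq:decomposition-llr} (valid also at the endpoints since $P^*_{XY} \in {\cal E}(\theta^{\san{xy}}(Q))$ and $Q^*_{XY} \in {\cal E}(\theta^{\san{xy}}(P))$) together with \eqref{eq:preserve-expectation} shows the decoder accepts $\san{H}_0$ if and only if $\frac{1}{n}\sum_{i=1}^n \Lambda_\lambda(x_i,y_i) > \tau$, and the error expressions follow by substitution into the definitions of $\alpha[T_n]$ and $\beta[T_n]$.
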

	
	Although it is not clear whether the Neyman-Pearson-like test is optimal or not for a given blocklength,
	we will numerically examine that the Neyman-Pearson-like test has better error trade-off than the Hoeffding-like test
	in Section \ref{sec:example}. Furthermore, for the large-deviation regime and the second-order regime, 
	we will show that the Neyman-Pearson-like test is asymptotically optimal in Section \ref{sec:LDP} and Section \ref{sec:second-order}, respectively.



\begin{figure}[!t]
  \centering
  \subfloat[][]{\includegraphics[width=.4\textwidth, bb=0 0 356 317]{./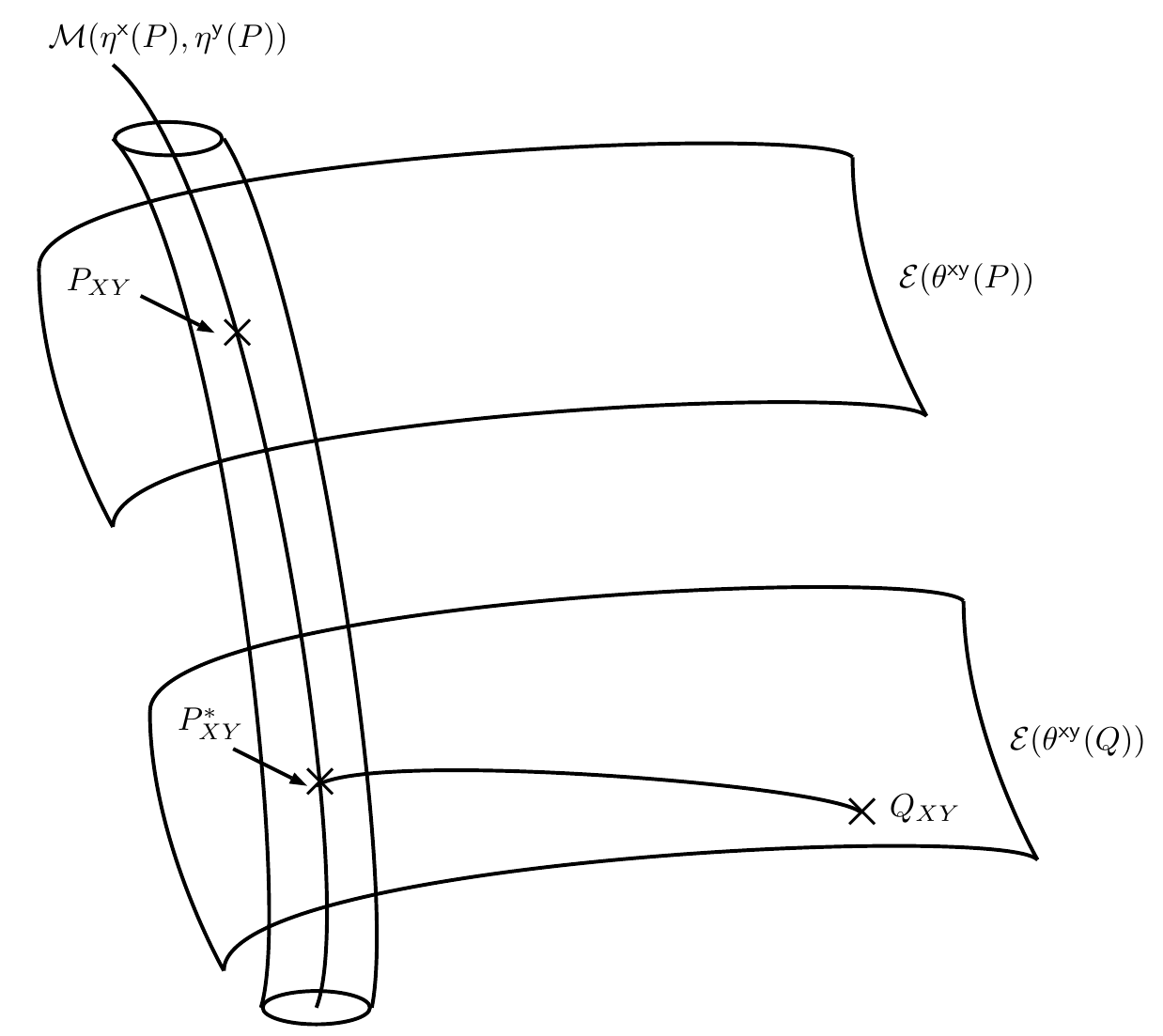} \label{Fig:test}}  \quad
  \subfloat[][]{\includegraphics[width=.4\textwidth, bb=0 0 350 312]{./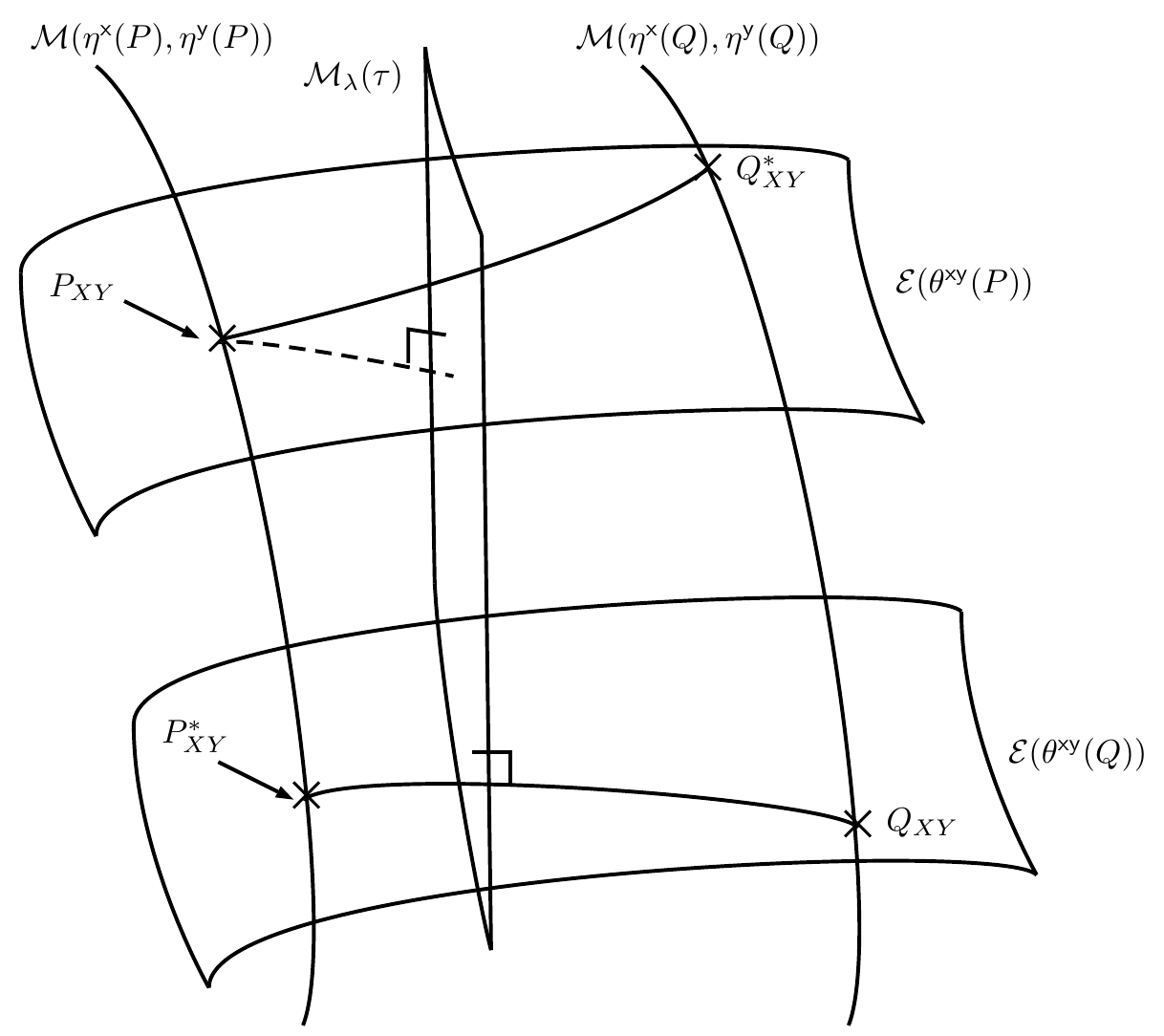} \label{Fig:test3}}  \\
  \subfloat[][]{\includegraphics[width=.4\textwidth, bb=0 0 350 312]{./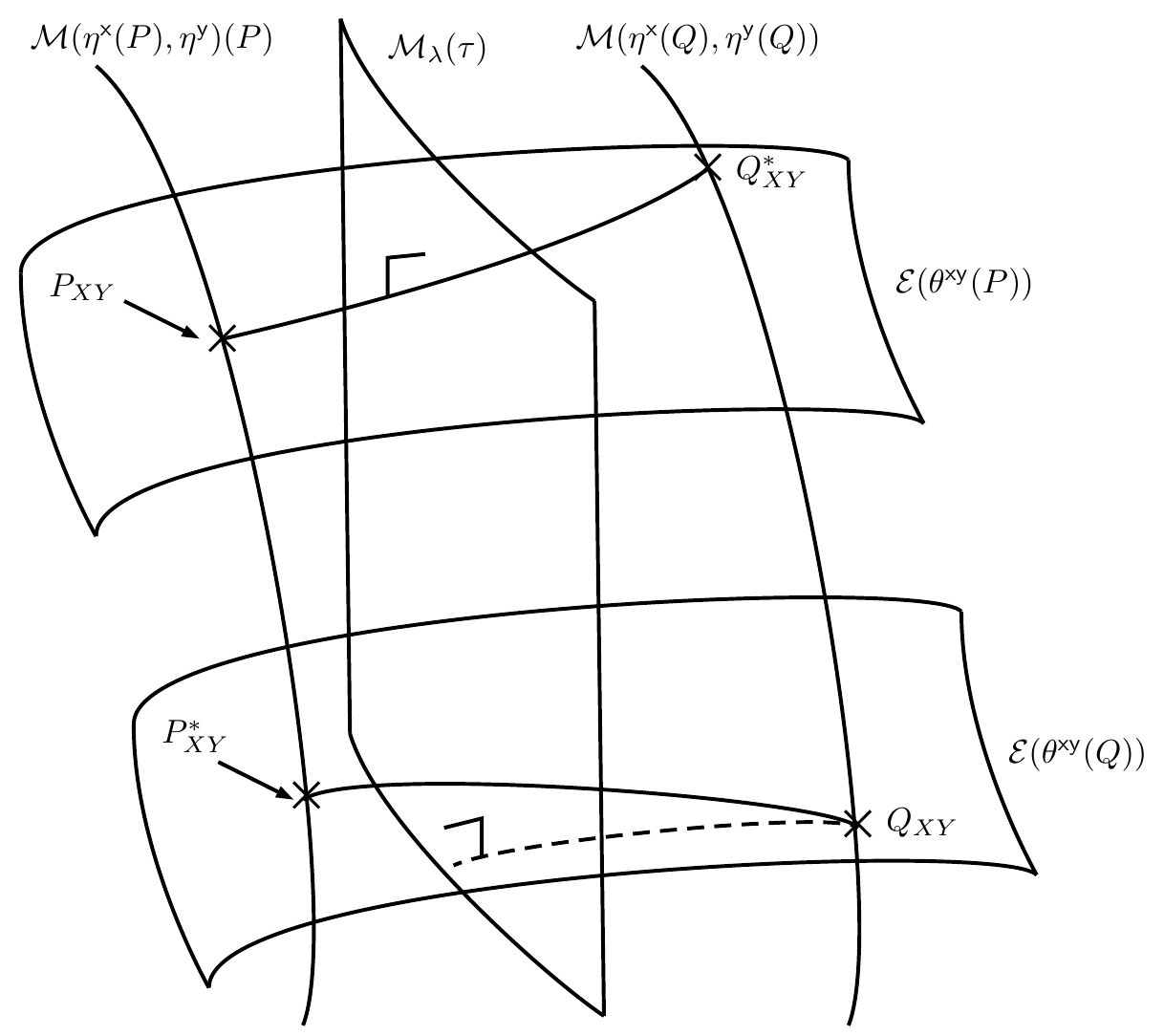} \label{Fig:test2}}  \quad
  \subfloat[][]{\includegraphics[width=.4\textwidth, bb=0 0 350 312]{./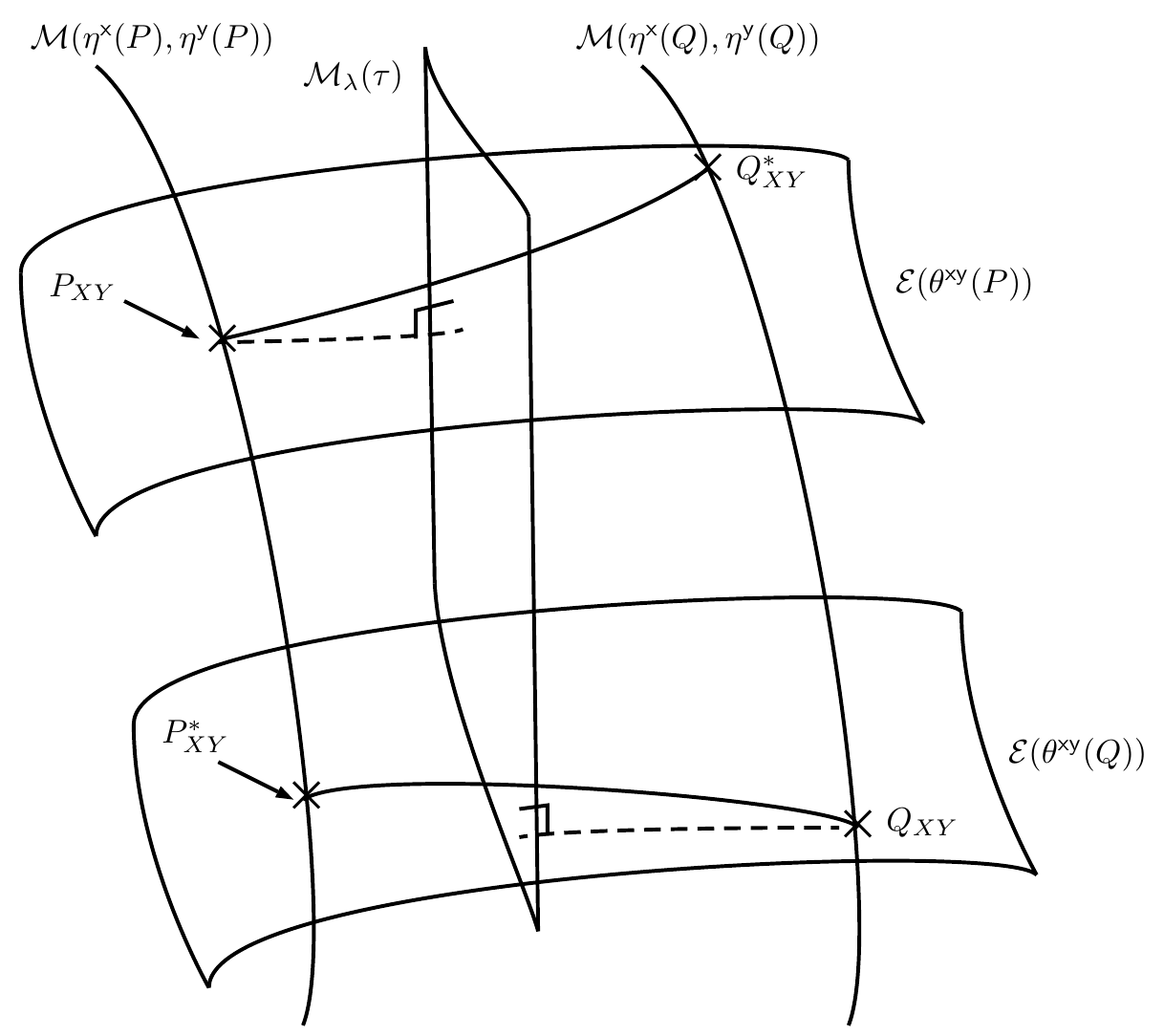} \label{Fig:test4}} 
    \caption{(a) A description of Han-Kobayashi's testing scheme; and descriptions of our proposed scheme with (b) $\lambda= E(P_{XY}\|Q_{XY})$, 
  (c) $\lambda = - E(Q_{XY} \| P_{XY})$, and (d) $- E(Q_{XY} \| P_{XY}) < \lambda < E(P_{XY} \| Q_{XY})$.}
  \label{Fig:testing-geometry}
\end{figure}	
	
	We close this section by comparing Han-Kobayashi's scheme (see Section \ref{subsec:han-kobayashi-scheme}) and the Neyman-Pearson-like scheme proposed above. 
	In Han-Kobayashi's scheme, a type $P_{\bar{X}\bar{Y}}$ is accepted if and only if there exists some $\tilde{P}_{\bar{X}\bar{Y}}$ satisfying 
	$P_{\bar{X}} = \tilde{P}_{\bar{X}}$, $P_{\bar{Y}} = \tilde{P}_{\bar{Y}}$, and $D(\tilde{P}_{\bar{X}\bar{Y}} \| P_{XY}) < r$. In other words, 
	a type is accepted if and only if it is included in the cylinder of radius $r$ given by
	\begin{align} \label{eq:cylinder-formula}
	\bigcup_{\eta: D(P_{XY,\eta} \| P_{XY}) < r} {\cal M}(\eta^\san{x},\eta^\san{y})
	\end{align}
	(see \cite{han-amari:98}); see Fig.~\ref{Fig:test}.
	
	For a given $- E(Q_{XY} \| P_{XY}) \le \lambda \le E(P_{XY} \| Q_{XY})$, let 
	\begin{align}
	Q_{XY,s}^\lambda(x,y) &:= Q_{XY}(x,y) \exp\{ s \Lambda_\lambda(x,y) - \psi_{Q,\lambda}(s) \}, \label{eq:definition-exponential-family-Q-s} \\
	\psi_{Q,\lambda}(s) &:= \log \sum_{x,y} Q_{XY}(x,y) \exp\{ s \Lambda_\lambda(x,y) \},
	\end{align}
	and let 
	\begin{align}
	{\cal E}_\lambda(Q) := \big\{ Q_{XY,s}^\lambda : s \in \mathbb{R} \big\}
	\end{align}
	be the exponential family containing $Q_{XY}$.
	Let 
	\begin{align}
	P_{XY,t}^\lambda(x,y) &:= P_{XY}(x,y) \exp\{ t \Lambda_\lambda(x,y) - \psi_{P,\lambda}(t) \}, \\
	\psi_{P,\lambda}(t) &:= \log \sum_{x,y} P_{XY}(x,y) \exp\{ t \Lambda_\lambda(x,y)  \}, 
	\end{align}
	and let
	\begin{align}
	{\cal E}_\lambda(P) := \big\{ P_{XY,t}^\lambda : t \in \mathbb{R} \big\}
	\end{align}
	be the exponential family containing $P_{XY}$. 
	For $\lambda = \overline{\lambda} := E(P_{XY}\|Q_{XY})$, note that ${\cal E}_{\overline{\lambda}}(Q)$ is the e-geodesic connecting 
	$Q_{XY}$ and $P_{XY}^*$; for $\lambda = \underline{\lambda}:= -E(Q_{XY}\|P_{XY})$, note that ${\cal E}_{\underline{\lambda}}(P)$ is the e-geodesic connecting $P_{XY}$ and $Q_{XY}^*$.
	
	For $\tau \in \mathbb{R}$, let 
	\begin{align}
	{\cal M}_\lambda(\tau) := \bigg\{ P_{\bar{X}\bar{Y}} : \sum_{x,y} P_{\bar{X}\bar{Y}}(x,y) \Lambda_\lambda(x,y) = \tau \bigg\}
	\end{align}
	be the mixture family generated by $\Lambda_\lambda(x,y)$. 
	
	In our Neyman-Pearson-like testing scheme with $\lambda = \overline{\lambda} = E(P_{XY}\|Q_{XY})$, we bisect the entire space with 
	${\cal M}_\lambda(\tau)$ that is orthogonal\footnote{Here, ``orthogonal" means that the Pythagorean theorem with respect to the relative entropy
		holds at the intersection.} 
	to ${\cal E}_{\overline{\lambda}}(Q)$, the e-geodesic connecting $Q_{XY}$ and $P_{XY}^*$; see Fig.~\ref{Fig:test3}.
	On the other hand, in our Neyman-Pearson testing scheme with $\lambda =-E(Q_{XY}\|P_{XY})$, we bisect the entire space with ${\cal M}_\lambda(\tau)$
	that is orthogonal to ${\cal E}_{\underline{\lambda}}(P)$, the e-geodesic connecting $P_{XY}$ and $Q_{XY}^*$;
	see Fig.~\ref{Fig:test2}. Furthermore, for $-E(Q_{XY} \| P_{XY}) < \lambda < E(P_{XY}\|Q_{XY})$, 
	we bisect the entire space with ${\cal M}_\lambda(\tau)$ that is orthogonal to neither ${\cal E}_{\overline{\lambda}}(Q)$
	nor ${\cal E}_{\underline{\lambda}}(P)$; see Fig.~\ref{Fig:test4}. In fact, the mixture family ${\cal M}_\lambda(\tau)$ is orthogonal to
	${\cal E}_\lambda(Q)$ and ${\cal E}_\lambda(P)$. In particular, when $\tau = \lambda$, 
	the intersections are given by $Q^\lambda_{XY}$ and $P^\lambda_{XY}$, respectively.
	
	In contrast to the standard hypothesis testing problem, 
	our Neyman-Pearson-like testing scheme has the freedom to choose the direction of bisection with parameter $\lambda$. In fact, 
	as we determine in later sections, an appropriate choice of $\lambda$ depending on a target threshold $\tau$
	is very important. 
	
	\begin{remark} \label{remark:parallel-condition}
	By using the coordinate notation, we can rewrite (see Appendix \ref{appendix:proof-remark:parallel-condition}) the condition in \eqref{eq:condition-alignment} as
	\begin{align}
	\big[ \theta^{\san{x}}_i(P_{XY}^\lambda) - \theta^{\san{x}}_i(P_{XY}) \big] &= a\big[ \theta^\san{x}_i(Q_{XY}^\lambda) - \theta^\san{x}_i(Q_{XY}) \big],~\forall i=1,\ldots,d_\san{x}, 
	\label{eq:parallel-condition-1} \\
	\big[ \theta^{\san{y}}_j(P_{XY}^\lambda) - \theta^{\san{y}}_j(P_{XY}) \big] &= a\big[ \theta^\san{y}_j(Q_{XY}^\lambda) - \theta^\san{y}_j(Q_{XY}) \big],~\forall j=1,\ldots,d_\san{y}.
	\label{eq:parallel-condition-2}
	\end{align} 
	Thus, we can find that the e-geodesic connecting $P_{XY}$ and $P_{XY}^\lambda$ and that connecting $Q_{XY}$ and $Q_{XY}^\lambda$
	are parallel each other (see also Fig.~\ref{Fig:trajectory} in Section \ref{sec:example}). 
	\end{remark}
	
	\begin{remark}
		When $\theta^{\san{xy}}(P) = \theta^{\san{xy}}(Q)$, we have $P_{XY}^\lambda = Q_{XY}^\lambda$, which implies 
		\begin{align}
		\Lambda_\lambda(x,y) = \log \frac{P_{XY}(x,y)}{Q_{XY}(x,y)}.
		\end{align}
		Thus, our Neyman-Pearson-like testing scheme reduces to the Neyman-Pearson testing scheme of the standard
		hypothesis testing between $P_{XY}$ and $Q_{XY}$. In other words, when the correlation components of $P_{XY}$ and $Q_{XY}$
		are the same and only the marginals are different, then a test based on  zero-rate encoding is as good as a test which is based on
		full-rate encoding. 
	\end{remark}
	
	\begin{remark}
	By noting 
	\begin{align}
	\sum_{x,y} P_{XY}^\lambda(x,y) \Lambda_\lambda(x,y) = \sum_{x,y} Q_{XY}^\lambda(x,y) \Lambda_\lambda(x,y) = \lambda
	\end{align}
	and \eqref{eq:condition-alignment}, we can rewrite the condition
	\begin{align}
	\sum_{x,y} P_{\bar{X}\bar{Y}}(x,y) \Lambda_\lambda(x,y) = \lambda
	\end{align}
	as
	\begin{align}
	\sum_{x,y} P_{\bar{X}\bar{Y}}(x,y) \log \frac{Q_{XY}^\lambda(x,y)}{Q_{XY}(x,y)} = D(Q_{XY}^\lambda \| Q_{XY})
	\end{align}
	or
	\begin{align}
	\sum_{x,y} P_{\bar{X}\bar{Y}}(x,y) \log \frac{P_{XY}^\lambda(x,y)}{P_{XY}(x,y)} = D(P_{XY}^\lambda \| P_{XY}).
	\end{align}
	Thus, the mixture family plane ${\cal M}_\lambda(\lambda)$ is the tangent plane of the cylinder given by \eqref{eq:cylinder-formula}
	(see also \eqref{eq:supporting-set-of-Q-lambda} and \eqref{eq:tangent-plane-single-terminal}).
	\end{remark}
	
	\begin{remark} \label{remark:trivial-test}
	There is a trivial testing scheme that is most powerful among the class of symmetric schemes.
	Upon observing marginal types $(P_{\bar{X}}, P_{\bar{Y}})$, the decoder accept the null hypothesis if
	\begin{align} \label{eq:LLR-trivial}
	\log \frac{P\big( \san{t}_{X^n} = P_{\bar{X}},~\san{t}_{Y^n} = P_{\bar{Y}} \big)}{Q\big( \san{t}_{X^n} = P_{\bar{X}},~\san{t}_{Y^n} = P_{\bar{Y}} \big)} > \tau
	\end{align}
	for a prescribed threshold $\tau$. In fact, this is the Neyman-Pearson test such that the pair of marginal types $(\san{t}_{X^n}, \san{t}_{Y^n})$ is regarded as the observation.
	However, since computing the log-likelihood ratio in \eqref{eq:LLR-trivial} is intractable as the blocklength becomes larger, it is difficult to
	implement this scheme. On the other hand, in the Neyman-Pearson-like scheme introduced above, we only need to compute 
	the empirical average of $\Lambda_\lambda(x,y)$ with respect to the product of marginal types, and it is easier to implement.  
	\end{remark}
	
	\section{Example} \label{sec:example}
	
	In this section, we consider the binary example, i.e., ${\cal X} = {\cal Y} = \{0,1\}$, and compare 
	the error trade-off of our proposed testing scheme and Han-Kobayashi's scheme.
	In the binary case, ${\cal P}({\cal X}\times {\cal Y})$ is parametrized by three parameters $(\theta^\san{x},\theta^\san{y},\theta^\san{xy})$
	or $(\eta^\san{x},\eta^\san{y},\eta^\san{xy})$. 
	For simplicity of notation, we denote $\tilde{\theta} = \theta(P)$ and $\bar{\theta} = \theta(Q)$; similar notations are used for the expectation parameters.
	For given $(\tilde{\eta}^\san{x},\tilde{\eta}^\san{y})$ and $\bar{\theta}^\san{xy}$, 
	by noting \eqref{eq:correlation-coordinate}, the intersection of
	${\cal E}(\bar{\theta}^\san{xy})$ and ${\cal M}(\tilde{\eta}^\san{x},\tilde{\eta}^\san{y})$ can be derived by solving the following equation
	with respect to $\eta^\san{xy}$:
	\begin{align}
	\log \frac{\eta^\san{xy} (1 - \tilde{\eta}^\san{x} - \tilde{\eta}^\san{y} + \eta^\san{xy})}{(\tilde{\eta}^\san{x} - \eta^\san{xy}) (\tilde{\eta}^\san{y} - \eta^\san{xy})} = \bar{\theta}^\san{xy}.
	\end{align}
	The above equation is equivalent to
	\begin{align}
	(e^{\bar{\theta}^\san{xy}} - 1) (\eta^\san{xy})^2 - [  (\tilde{\eta}^\san{x} + \tilde{\eta}^\san{y}) (e^{\bar{\theta}^\san{xy}} -1) + 1 ] \eta^\san{xy} 
	+ e^{\bar{\theta}^\san{xy}} \tilde{\eta}^\san{x} \tilde{\eta}^\san{y}.
	\end{align}
	When $\bar{\theta}^\san{xy} = 0$, i.e., there is no correlation, then $\eta^\san{xy} = \tilde{\eta}^\san{x} \tilde{\eta}^\san{y}$ is the only solution.
	When $\bar{\theta}^\san{xy} \neq 0$, we have two solutions:
	\begin{align}
	\eta^\san{xy} = \frac{[(\tilde{\eta}^\san{x} + \tilde{\eta}^\san{y})(e^{\bar{\theta}^\san{xy}} -1) + 1] 
		\pm \sqrt{[(\tilde{\eta}^\san{x} + \tilde{\eta}^\san{y})(e^{\bar{\theta}^\san{xy}} -1) + 1]^2 - 4 (e^{\bar{\theta}^\san{xy}} -1) e^{\bar{\theta}^\san{xy}} \tilde{\eta}^\san{x} \tilde{\eta}^\san{y} }}{
		2(e^{\bar{\theta}^\san{xy}} -1)}.
	\end{align}
	Here, note that $\eta^\san{xy}$ must satisfy
	\begin{align}
	\eta^\san{xy} &\le \min[ \tilde{\eta}^\san{x},\tilde{\eta}^\san{y} ] \\
	&\le \frac{\tilde{\eta}^\san{x} + \tilde{\eta}^\san{y}}{2}. \label{eq:condition-on-eta-xy}
	\end{align}
	Thus, when $\bar{\theta}^\san{xy} > 0$, then 
	\begin{align}
	\eta^\san{xy} = \frac{[(\tilde{\eta}^\san{x} + \tilde{\eta}^\san{y})(e^{\bar{\theta}^\san{xy}} -1) + 1] 
		- \sqrt{[(\tilde{\eta}^\san{x} + \tilde{\eta}^\san{y})(e^{\bar{\theta}^\san{xy}} -1) + 1]^2 - 4 (e^{\bar{\theta}^\san{xy}} -1) e^{\bar{\theta}^\san{xy}} \tilde{\eta}^\san{x} \tilde{\eta}^\san{y} }}{
		2(e^{\bar{\theta}^\san{xy}} -1)}
	\label{eq:solution-binary-example}
	\end{align}
	is the only valid solution since the other solution violates \eqref{eq:condition-on-eta-xy}.
	On the other hand, note that $\eta^\san{xy}$ is required to be nonnegative. Thus, when $\bar{\theta}^\san{xy} < 0$, then 
	\eqref{eq:solution-binary-example} is the only valid solution in this case too since the other solution is negative.
	
	The use of \eqref{eq:solution-binary-example} enables us to numerically solve \eqref{eq:condition-alignment}-\eqref{eq:condition-range} to 
	find $(Q_{XY}^\lambda,P_{XY}^\lambda)$ for each $\lambda$.
	More specifically, for given parameters $(\eta^\san{x},\eta^\san{y})$, by using \eqref{eq:solution-binary-example}, we can compute
	$(\hat{Q}_{XY},\hat{P}_{XY}) \in {\cal E}(\theta^\san{xy}(Q)) \times {\cal E}(\theta^\san{xy}(P))$ such that the marginals
	satisfy $\hat{Q}_X(1) = \hat{P}_X(1) = \eta^\san{x}$ and $\hat{Q}_Y(1) = \hat{P}_Y(1) = \eta^\san{y}$. 
	Then, we can numerically solve \eqref{eq:condition-lambda}, \eqref{eq:parallel-condition-1}, and \eqref{eq:parallel-condition-2} 
	with respect to $(\eta^\san{x},\eta^\san{y})$ to find $(Q_{XY}^\lambda,P_{XY}^\lambda)$.\footnote{For this binary example, we were able to find the solution
	by using builtin functions of Mathematica.}
	Fig.~\ref{Fig:comparison-1} compares the trade-off between the type I error probability and the type II error probability 
	for our proposed testing scheme with $\tau = \lambda$ and Han-Kobayashi's scheme 
	(see Theorem \ref{theorem:performance-NP-type-test} and Proposition \ref{proposition:performance-han-kobayashi}).
	The distributions are chosen to be 
	\begin{align}
	P_{XY} = \left[ 
	\begin{array}{cc}
	1/2 & 1/8 \\
	1/8 & 1/4
	\end{array}
	\right],~~~
	Q_{XY} = \left[
	\begin{array}{cc}
	1/8 & 1/4 \\
	1/2 & 1/8
	\end{array}
	\right], \label{eq:example-distribution}
	\end{align}
	and the block length is $n=100$.
	The plots in the figure indicate that,
	for a short block length such as $n=100$, our proposed testing scheme outperforms 
	the previously known scheme of Han-Kobayashi.\footnote{Numerically computing the exact trade-off for multinomial distribution
		becomes computationally intractable as the block length becomes larger.}
	
	In Fig.~\ref{Fig:trajectory}, we plotted the trajectories of $(\theta^\san{x}(P_{XY}^\lambda), \theta^\san{y}(P_{XY}^\lambda))$
	and $(\theta^\san{x}(Q_{XY}^\lambda), \theta^\san{y}(Q_{XY}^\lambda))$ by varying $\lambda$, where the distributions are the same as \eqref{eq:example-distribution}.
	For visual convenience, we inserted vectors
	$(\theta^\san{x}(P_{XY}^{\lambda}) - \theta^\san{x}(P_{XY}), \theta^\san{y}(P_{XY}^{\lambda}) - \theta^\san{y}(P_{XY}))$
	and $(\theta^\san{x}(Q_{XY}^{\lambda}) - \theta^\san{x}(Q_{XY}), \theta^\san{y}(Q_{XY}^{\lambda}) - \theta^\san{y}(Q_{XY}))$
	for two values of $\lambda$ (purple vectors and green vectors, respectively).
	As is predicted from \eqref{eq:condition-alignment} (see also Remark \ref{remark:parallel-condition}), we can verify that the vectors of the same color are parallel to each other.

\begin{figure}[!t]
\centering{
\includegraphics[width=0.5\textwidth, bb=0 0 260 160]{./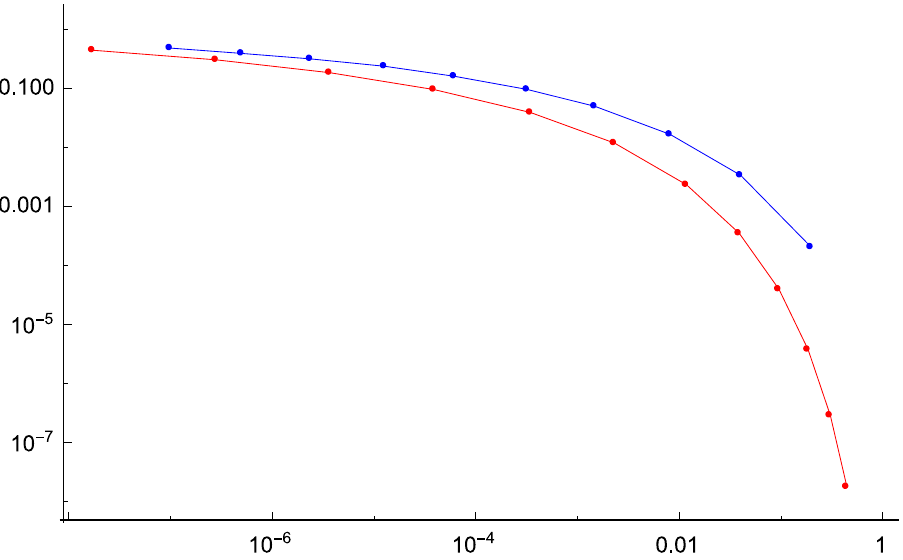}
\caption{A comparison of the trade-off between the type I error probability (horizontal axis) and the type II error probability (vertical axis)
for our proposed testing scheme (red dots)
and Han-Kobayashi's scheme (blue dots). The distributions $P_{XY}$ and 
$Q_{XY}$ are chosen to be \eqref{eq:example-distribution}, and the blocklength is $n=100$.}
\label{Fig:comparison-1}
}
\end{figure}

\begin{figure}[!t]
\centering{
\includegraphics[width=0.5\textwidth, bb=0 0 595 369]{./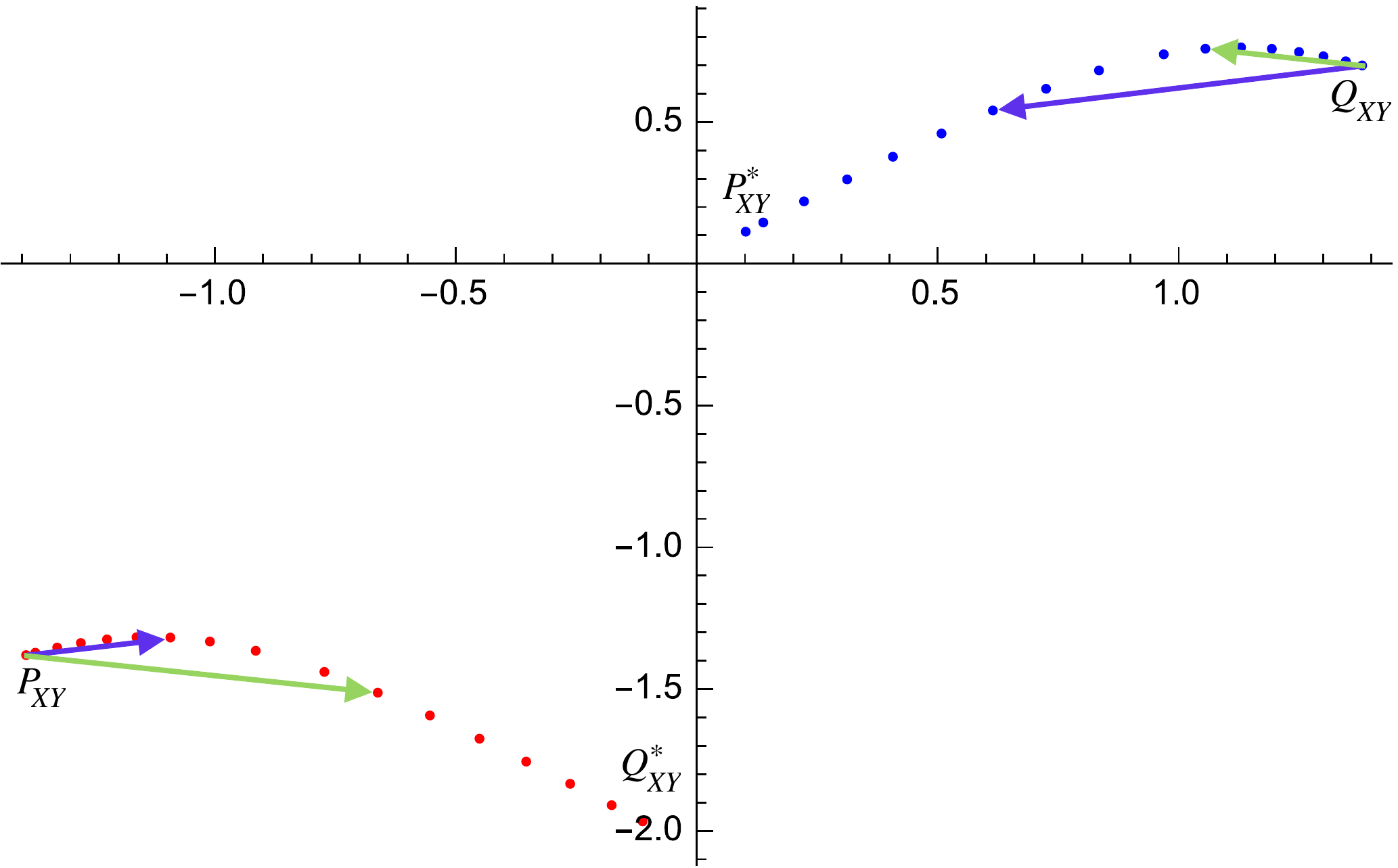}
\caption{The trajectory (red dots) of $(\theta^\san{x}(P_{XY}^\lambda), \theta^\san{y}(P_{XY}^\lambda))$ and
the trajectory (blue dots) of $(\theta^\san{x}(Q_{XY}^\lambda), \theta^\san{y}(Q_{XY}^\lambda))$ in $(\theta^\san{x},\theta^\san{y})$-plane
for varying $\lambda \in [-E(Q_{XY}\|P_{XY}), E(P_{XY}\|Q_{XY})]$.}
\label{Fig:trajectory}
}
\end{figure}

	\section{Large Deviation Regime} \label{sec:LDP}
	
	In this section, we discuss the large deviation performance of our proposed testing scheme. 
	For this purpose, we need some preparations. Recall the notations introduced at the end of Section \ref{section:non-asymptotic}.
	Since the potential functions $\psi_{Q,\lambda}(s)$ and $\psi_{P,\lambda}(t)$ of the exponential families are strict convex, $\psi_{Q,\lambda}^\prime(s)$
	and $\psi_{P,\lambda}^\prime(t)$ are monotonically increasing functions.
	Thus, we can define the inverse functions $s_{Q,\lambda}(\tau)$ and $t_{P,\lambda}(\tau)$ of $\psi_{Q,\lambda}^\prime(s)$ and $\psi_{P,\lambda}^\prime(t)$ by
	\begin{align}
	\psi_{Q,\lambda}^\prime(s_{Q,\lambda}(\tau)) &= \sum_{x,y} Q^\lambda_{XY,s_{Q,\lambda}(\tau)}(x,y) \Lambda_\lambda(x,y) \\
	&= \tau, \label{eq:expectation-parameter-exponential-Q} \\
	\psi_{P,\lambda}^\prime(t_{P,\lambda}(\tau)) &= \sum_{x,y} P^\lambda_{XY,t_{P,\lambda}(\tau)}(x,y) \Lambda_\lambda(x,y) \\
	&= \tau, \label{eq:expectation-parameter-exponential-P}
	\end{align}
	respectively. Note that $s_{Q,\lambda}(\tau)$ and $t_{P,\lambda}(\tau)$ are the expectation parameters of the exponential families
	${\cal E}_\lambda(Q)$ and ${\cal E}_\lambda(P)$. 
	We also use the following expressions of the relative entropies:
	\begin{align}
	D(Q^\lambda_{XY,s_{Q,\lambda}(\tau)} \| Q_{XY}) &= s_{Q,\lambda}(\tau) \tau - \psi_{Q,\lambda}(s_{Q,\lambda}(\tau)), \label{eq:divergence-expression-1} \\
	D(P^\lambda_{XY,t_{P,\lambda}(\tau)} \| P_{XY}) &= t_{P,\lambda}(\tau) \tau - \psi_{P,\lambda}(t_{P,\lambda}(\tau)). \label{eq:divergence-expression-2}
	\end{align}

	\begin{theorem} \label{theorem:large-deviation}
		For $- E(Q_{XY} \| P_{XY}) < \lambda < E(P_{XY} \| Q_{XY})$, 
		the Neyman-Pearson-like testing scheme $T_n^\mathtt{NPl}$ of Section \ref{section:non-asymptotic} with threshold $\tau = \lambda$ satisfies 
		\begin{align}
		\alpha[T_n^\mathtt{NPl}] &\le \exp\{ -n D(P_{XY}^\lambda \| P_{XY}) \}, \label{eq:large-deviation-performance-type-I} \\
		\beta[T_n^\mathtt{NPl}] &\le \exp\{ - n D(Q_{XY}^\lambda \| Q_{XY})  \} \label{eq:large-deviation-performance-type-II} \\
		&= \exp\{ - n F(D(P_{XY}^\lambda \| P_{XY})) \} \label{eq:large-deviation-performance-type-II-2}
		\end{align}
	\end{theorem}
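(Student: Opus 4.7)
The plan is to apply Cram\'er--Chernoff bounds to the i.i.d.\ partial sums $S_n := \sum_{i=1}^n \Lambda_\lambda(X_i,Y_i)$. The condition \eqref{eq:condition-range} gives $\san{E}_P[\Lambda_\lambda(X,Y)] > \lambda > \san{E}_Q[\Lambda_\lambda(X,Y)]$, so the type I event $\{S_n/n \le \lambda\}$ is a lower deviation under $P_{XY}$ and the type II event $\{S_n/n > \lambda\}$ is an upper deviation under $Q_{XY}$. Routine applications of Markov's inequality to $e^{-sS_n}$ and $e^{sS_n}$ (with $s \ge 0$) yield
\begin{align}
\alpha[T_n^\mathtt{NPl}] \le \exp\bigl\{-n\,I_P(\lambda)\bigr\},\qquad \beta[T_n^\mathtt{NPl}] \le \exp\bigl\{-n\,I_Q(\lambda)\bigr\},
\end{align}
where $I_P(\lambda) := \sup_s [s\lambda - \psi_{P,\lambda}(s)]$ and $I_Q(\lambda) := \sup_s [s\lambda - \psi_{Q,\lambda}(s)]$.

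The central step is to identify these Legendre transforms with $D(P_{XY}^\lambda\|P_{XY})$ and $D(Q_{XY}^\lambda\|Q_{XY})$. Combining \eqref{eq:condition-alignment} with the definition \eqref{eq:definition-proxy-llr} yields $\Lambda_\lambda(x,y) = (1-a)\log\frac{Q_{XY}^\lambda(x,y)}{Q_{XY}(x,y)} - b$; the strict inequalities \eqref{eq:condition-range} force $\Lambda_\lambda$ to be non-constant, so $a\neq 1$, and hence $\log(Q_{XY}^\lambda/Q_{XY})$ as well as $\log(P_{XY}^\lambda/P_{XY})$ (via \eqref{eq:condition-alignment}) are affine functions of $\Lambda_\lambda$. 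This places $Q_{XY}^\lambda \in {\cal E}_\lambda(Q)$ and $P_{XY}^\lambda \in {\cal E}_\lambda(P)$. Next, applying Corollary \ref{corollary:implication-pythagorean} to the pair $(P_{XY}^\lambda, Q_{XY}^\lambda)$ (whose marginals coincide by Lemma \ref{lemma:properties-optimizer}) gives $\san{E}_{P^\lambda}[\log(Q_{XY}^\lambda/Q_{XY})] = D(Q_{XY}^\lambda\|Q_{XY})$, so that \eqref{eq:condition-lambda} produces $\san{E}_{P^\lambda}[\Lambda_\lambda] = D(Q_{XY}^\lambda\|Q_{XY}) - D(P_{XY}^\lambda\|P_{XY}) = \lambda$; the analogous identity $\san{E}_{Q^\lambda}[\Lambda_\lambda] = \lambda$ follows in the same way. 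Since $\psi_{P,\lambda}$ and $\psi_{Q,\lambda}$ are strictly convex, the suprema defining $I_P(\lambda)$ and $I_Q(\lambda)$ are attained at the tilts corresponding to $P_{XY}^\lambda$ and $Q_{XY}^\lambda$ (i.e., $t_{P,\lambda}(\lambda)$ and $s_{Q,\lambda}(\lambda)$ via \eqref{eq:expectation-parameter-exponential-P}--\eqref{eq:expectation-parameter-exponential-Q}), and the divergence formulas \eqref{eq:divergence-expression-1}--\eqref{eq:divergence-expression-2} at $\tau=\lambda$ give $I_P(\lambda) = D(P_{XY}^\lambda\|P_{XY})$ and $I_Q(\lambda) = D(Q_{XY}^\lambda\|Q_{XY})$, which establishes \eqref{eq:large-deviation-performance-type-I} and \eqref{eq:large-deviation-performance-type-II}.

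The final equality \eqref{eq:large-deviation-performance-type-II-2} is immediate from the theorem just proved: $(Q_{XY}^\lambda,P_{XY}^\lambda)$ is the unique pair satisfying \eqref{eq:condition-alignment}--\eqref{eq:condition-range} and hence the optimizer pair of $F(r)$ for $r = D(P_{XY}^\lambda\|P_{XY})$, so $D(Q_{XY}^\lambda\|Q_{XY}) = F(r)$ by \eqref{eq:exponent-expression-1}. The only step that is not pure Cram\'er-boilerplate is the identification of $P_{XY}^\lambda$ and $Q_{XY}^\lambda$ as members of the one-dimensional exponential families ${\cal E}_\lambda(P)$ and ${\cal E}_\lambda(Q)$ through the alignment relation \eqref{eq:condition-alignment}; this is what lets the ``geometric'' optima reappear as the Chernoff tilts, after which the bounds fall out immediately.
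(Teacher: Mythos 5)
Your proposal is correct and follows essentially the same route as the paper: Chernoff/Markov bounds on the empirical average of $\Lambda_\lambda$, identification of $P_{XY}^\lambda$ and $Q_{XY}^\lambda$ as the tilted members $P^\lambda_{XY,t_{P,\lambda}(\lambda)}$ and $Q^\lambda_{XY,s_{Q,\lambda}(\lambda)}$ of ${\cal E}_\lambda(P)$ and ${\cal E}_\lambda(Q)$ via \eqref{eq:condition-alignment}, Corollary \ref{corollary:implication-pythagorean}, \eqref{eq:condition-lambda} and the matching marginals, with \eqref{eq:condition-range} fixing the tilt signs, and the divergence expressions \eqref{eq:divergence-expression-1}--\eqref{eq:divergence-expression-2} converting the exponents into the stated relative entropies. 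The closing step for \eqref{eq:large-deviation-performance-type-II-2} via the uniqueness/characterization of the optimizer pair is equivalent to the paper's appeal to \eqref{eq:definition-lambda-r-conversion} and \eqref{eq:condition-lambda}.
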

	\begin{proof}
		By applying the Markov inequality to Theorem \ref{theorem:performance-NP-type-test}, for any $t \le 0$ and $s \ge 0$, we have (see \cite{dembo-zeitouni-book})
		\begin{align}
		P\bigg( \frac{1}{n} \sum_{i=1}^n \Lambda_\lambda(X_i, Y_i) < \tau \bigg) \le \exp\{ - n (t \tau - \psi_{P,\lambda}(t)) \} 
		\label{eq:markov-bound-1}
		\end{align}
		and 
		\begin{align}
		Q\bigg( \frac{1}{n} \sum_{i=1}^n \Lambda_\lambda(X_i,Y_i) \ge \tau \bigg) \le \exp\{ - n (s \tau - \psi_{Q,\lambda}(s)) \}. 
		\label{eq:markov-bound-2}
		\end{align}
		
		From \eqref{eq:expectation-parameter-exponential-Q} and \eqref{eq:condition-range}, we have
		\begin{align}
		\psi_{Q,\lambda}^\prime(s_{Q,\lambda}(\lambda)) &= \sum_{x,y} Q^\lambda_{XY,s_{Q,\lambda}(\lambda)}(x,y) \Lambda_\lambda(x,y) \\
		&= \lambda \\
		&> \sum_{x,y} Q_{XY}(x,y) \Lambda_\lambda(x,y) \\
		&= \sum_{x,y} Q^\lambda_{XY,0}(x,y) \Lambda_\lambda(x,y) \\
		&= \psi_{Q,\lambda}^\prime(0),
		\end{align}
		which, together with the fact that $\psi_{Q,\lambda}^\prime(s)$ is an increasing function,
		imply $s_{Q,\lambda}(\lambda) > 0$.
		Similarly, from \eqref{eq:expectation-parameter-exponential-P} and \eqref{eq:condition-range},
		we have $t_{P,\lambda}(\lambda) < 0$. 
		
		From \eqref{eq:condition-lambda} and Corollary \ref{corollary:implication-pythagorean} combined with the same marginal conditions
		(see \eqref{eq:condition-pair-1} and \eqref{eq:condition-pair-2}), we have
		\begin{align}
		\lambda &= D(Q_{XY}^\lambda \| Q_{XY}) - D(P_{XY}^\lambda \| P_{XY}) \\
		&= \sum_{x,y} Q_{XY}^\lambda(x,y) \bigg[ \log \frac{Q_{XY}^\lambda(x,y)}{Q_{XY}(x,y)} - \log \frac{P_{XY}^\lambda(x,y)}{P_{XY}(x,y)} \bigg] \\
		&= \sum_{x,y} Q_{XY}^\lambda(x,y) \Lambda_\lambda(x,y). \label{eq:expectation-parameter-condition-Qlambda}
		\end{align}
		Furthermore, \eqref{eq:condition-alignment} together with the definition of $Q^\lambda_{XY,s}$ (see \eqref{eq:definition-exponential-family-Q-s})
		implies $Q_{XY}^\lambda \in {\cal E}_\lambda(Q)$. This together with \eqref{eq:expectation-parameter-condition-Qlambda} means that
		$Q_{XY}^\lambda = Q^\lambda_{XY,s_{Q,\lambda}(\lambda)}$. Similarly, we have $P_{XY}^\lambda = P^\lambda_{XY, t_{P,\lambda}(\lambda)}$.
		
		Consequently, by substituting $\tau = \lambda$, $t = t_{P,\lambda}(\lambda)$, and $s=s_{Q,\lambda}(\lambda)$ into \eqref{eq:markov-bound-1} and \eqref{eq:markov-bound-2}, 
		we have \eqref{eq:large-deviation-performance-type-I} and
		\eqref{eq:large-deviation-performance-type-II}  from \eqref{eq:divergence-expression-1} and  \eqref{eq:divergence-expression-2}. 
		Furthermore, by \eqref{eq:definition-lambda-r-conversion} and \eqref{eq:condition-lambda} for $r = D(P_{XY}^\lambda \| P_{XY})$, we have \eqref{eq:large-deviation-performance-type-II-2}.
	\end{proof}
	
	Note that Theorem \ref{theorem:large-deviation} means that 
	our Neyman-Pearson-like testing scheme with threshold $\tau = \lambda$ is optimal in the large deviation regime.
	Compared to the derivation of the same exponents based on the method of type, 
	Theorem \ref{theorem:large-deviation} has the advantage in that there is no polynomial factor of $n$ 
	that stems from the number of types.
	
	In Fig.~\ref{Fig:comparison-exponent}, we plotted the trade-off of the two exponents
	\begin{align}
	\big(D(P_{XY}^\lambda \| P_{XY}), F(D(P_{XY}^\lambda \| P_{XY})) \big) = \big( D(P_{XY}^\lambda \| P_{XY}), D(Q_{XY}^\lambda \| Q_{XY}) \big)
	\label{eq:optimal-exponent-trade-off}
	\end{align}
	by varying $\lambda \in [-E(Q_{XY}\|P_{XY}), E(P_{XY}\|Q_{XY})]$.
	For fixed values of $\lambda$, say $\lambda = \overline{\lambda} = E(P_{XY}\|Q_{XY})$
	or $\lambda = \underline{\lambda} = - E(Q_{XY}\|P_{XY})$, we can also achieve the following trade-offs 
	by using our testing scheme:\footnote{To prove the achievability of \eqref{eq:exponent-trade-off-upper-lambda}
		(or \eqref{eq:exponent-trade-off-lower-lambda}), we take $\lambda = \overline{\lambda}$ (or $\lambda = \underline{\lambda}$),
		$t = t_{P,\lambda}(\tau)$, and $s = s_{Q,\lambda}(\tau)$ in \eqref{eq:markov-bound-1} and \eqref{eq:markov-bound-2}.}
	\begin{align}
	\big( D(P_{XY,t_{P,\overline{\lambda}}(\tau)}^{\overline{\lambda}} \| P_{XY}), D(Q_{XY, s_{Q,\overline{\lambda}}(\tau)}^{\overline{\lambda}} \| Q_{XY}) \big)
	\label{eq:exponent-trade-off-upper-lambda}
	\end{align}
	for $\sum_{x,y} Q_{XY}(x,y) \Lambda_{\overline{\lambda}}(x,y) \le \tau \le E(P_{XY}\|Q_{XY})$ and
	\begin{align}
	\big( D(P_{XY,t_{P,\underline{\lambda}}(\tau)}^{\underline{\lambda}} \| P_{XY}), D(Q_{XY, s_{Q,\underline{\lambda}}(\tau)}^{\underline{\lambda}} \| Q_{XY}) \big)
	\label{eq:exponent-trade-off-lower-lambda}
	\end{align}
	for $- E(Q_{XY} \| P_{XY}) \le \tau \le \sum_{x,y} P_{XY}(x,y) \Lambda_{\underline{\lambda}}(x,y)$.
	For comparison, we also plotted these trade-offs in Fig.~\ref{Fig:comparison-exponent}.
	The adjustment of $\lambda$ is crucial to achieve the optimal trade-off; 
	when either the type I exponent or the type II exponent is very small, then $\lambda=\overline{\lambda}$ or
	$\lambda=\underline{\lambda}$ is quite effective.
	
\begin{figure}[!t]
\centering{
\includegraphics[width=0.5\textwidth, bb=0 0 260 160]{./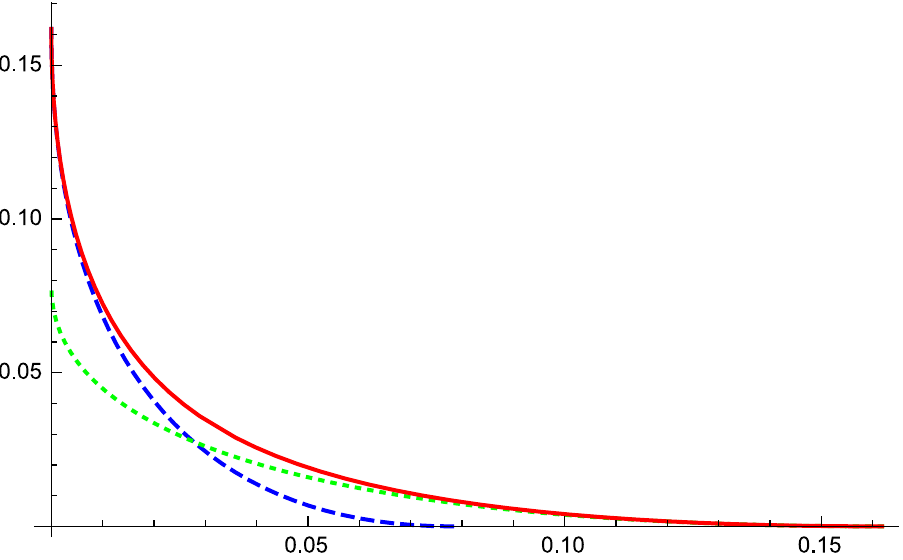}
\caption{A comparison of the trade-offs between the type I and type II exponent, where the horizontal axis is
type I exponent and the vertical axis is type II exponent.
The red solid curve is the optimal trade-off between the type I exponent and the type II exponent, i.e., \eqref{eq:optimal-exponent-trade-off}; the blue dashed curve is 
the trade-off between the type I exponent and type II exponent
for $\lambda=\overline{\lambda}$, i.e., \eqref{eq:exponent-trade-off-upper-lambda}; 
the green dotted curve is the trade-off 
between the type I exponent and the type II exponent
for $\lambda = \underline{\lambda}$, i.e., \eqref{eq:exponent-trade-off-lower-lambda}.
The distributions $P_{XY}$ and 
$Q_{XY}$ are chosen to be \eqref{eq:example-distribution}.}
\label{Fig:comparison-exponent}
}
\end{figure}

	\section{Second-Order Analysis} \label{sec:second-order}
	
	Let us start by introducing the projected relative entropy density:
	\begin{align} \label{eq:definition-density}
	\jmath_{P\|Q}(x,y) := \log \frac{P_{XY}^*(x,y)}{Q_{XY}(x,y)},
	\end{align}
	where $P_{XY}^*$ is the optimizer of $E(P_{XY}\|Q_{XY})$. 
	From Corollary \ref{corollary:implication-pythagorean},
	we have the following proposition, which justifies  
	referring to $\jmath_{P\|Q}$ as the projected relative entropy density. 

	\begin{proposition} \label{proposition:validity-of-density}
		It holds that 
		\begin{align}
		\san{E}\big[ \jmath_{P\|Q}(X,Y) \big] = E(P_{XY} \| Q_{XY}), 
		\end{align}
		where the expectation is taken over $(X,Y) \sim P_{XY}$. 
	\end{proposition}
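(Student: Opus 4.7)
The plan is to derive the identity as a direct application of Corollary \ref{corollary:implication-pythagorean}, using $P_{XY}$ itself in the role of $\tilde{P}_{XY}$ and the optimizer $P^*_{XY}$ in the role of $\tilde{Q}_{XY}$.

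First I would check the three hypotheses of Corollary \ref{corollary:implication-pythagorean}. Trivially $P_{XY} \in {\cal E}(\theta^{\san{xy}}(P))$ since this exponential family is pinned to the $\theta^{\san{xy}}$-coordinate of $P_{XY}$ itself. By Theorem \ref{theorem:pythagorean}, the optimizer satisfies $P^*_{XY} \in {\cal E}(\theta^{\san{xy}}(Q)) \cap {\cal M}(\eta^{\san{x}}(P),\eta^{\san{y}}(P))$, so $P^*_{XY}$ lies in the correct exponential family and, being in ${\cal M}(\eta^{\san{x}}(P),\eta^{\san{y}}(P))$, has marginals $P^*_X = P_X$ and $P^*_Y = P_Y$. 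Hence the equal-marginal condition of Corollary \ref{corollary:implication-pythagorean} is met.

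Applying Corollary \ref{corollary:implication-pythagorean} with $(\tilde{P}_{XY},\tilde{Q}_{XY}) = (P_{XY},P^*_{XY})$ then gives
\begin{align}
\san{E}\big[\jmath_{P\|Q}(X,Y)\big] = \sum_{x,y} P_{XY}(x,y)\log\frac{P^*_{XY}(x,y)}{Q_{XY}(x,y)} = D(P^*_{XY}\|Q_{XY}),
\end{align}
and by the definition of $P^*_{XY}$ as the minimizer in \eqref{eq:projected-ralative-entropy} the right-hand side equals $E(P_{XY}\|Q_{XY})$, which completes the argument.

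There is essentially no obstacle here; the proposition is really just a restatement of Corollary \ref{corollary:implication-pythagorean} in the special case where $\tilde{P}_{XY} = P_{XY}$ and $\tilde{Q}_{XY} = P^*_{XY}$. The only mildly nontrivial point is recognising that $P_{XY}$ itself plays the role of a member of the exponential family ${\cal E}(\theta^{\san{xy}}(P))$, which is immediate from the definition of this family.
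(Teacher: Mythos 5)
Your proof is correct and matches the paper's own route: the paper likewise obtains this proposition directly from Corollary \ref{corollary:implication-pythagorean}, applied with $\tilde{P}_{XY} = P_{XY}$ and $\tilde{Q}_{XY} = P^*_{XY}$, whose hypotheses hold exactly as you verified via Theorem \ref{theorem:pythagorean}.
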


	Proposition \ref{proposition:first-order} shows that the optimal exponent of the type II error probability is given by
	$E(P_{XY}\|Q_{XY})$, and it can be achieved by a symmetric scheme.
	In this section, for a given constraint $0 < \varepsilon < 1$ on the type I error probability, 
	we consider the optimal second-order exponent among the class of symmetric schemes:
	\begin{align}
	G_\san{s}(\varepsilon) := \sup\left\{ \liminf_{n\to\infty} \frac{-\log\beta[T_n] - n E(P_{XY}\|Q_{XY})}{\sqrt{n}} : \{T_n\}_{n=1}^\infty \mbox{ is symmetric},\limsup_{n\to\infty} \alpha[T_n] \le \varepsilon \right\}.
	\label{eq:definition-second-order-exponent}
	\end{align}
	In approximate terms, this means that the type I error probability and type II error probability behave as follows:
	\begin{align}
	\alpha[T_n] \simeq \varepsilon,~~~\beta[T_n] \simeq \exp\{- n E(P_{XY}\|Q_{XY}) - \sqrt{n} G_{\san{s}}(\varepsilon) \}. 
	\end{align}
	The following theorem characterizes the second-order coefficient $G_{\san{s}}(\varepsilon)$.

	\begin{theorem} \label{theorem:second-order}
		For a given $0 < \varepsilon < 1$, it holds that
		\begin{align}
		G_{\san{s}}(\varepsilon) = \sqrt{ V(P\|Q) } \Phi^{-1}(\varepsilon),
		\end{align}
		where 
		\begin{align}
		V(P\|Q) := \san{Var}\big[ \jmath_{P\|Q}(X,Y) \big]
		\end{align}
		for $(X,Y) \sim P_{XY}$.
	\end{theorem}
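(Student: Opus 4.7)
The plan is to split the proof into a direct part and a converse, both centered on the statistic $\frac{1}{n}\sum_{i=1}^n\jmath_{P\|Q}(X_i,Y_i)$.

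\emph{Achievability.} I would apply the Neyman--Pearson-like test of Theorem \ref{theorem:performance-NP-type-test} at the extreme parameter $\lambda=\overline{\lambda}:=E(P_{XY}\|Q_{XY})$, for which $\Lambda_{\overline{\lambda}}(x,y)=\jmath_{P\|Q}(x,y)$, with threshold $\tau_n=E(P_{XY}\|Q_{XY})+\sqrt{V(P\|Q)/n}\,\Phi^{-1}(\varepsilon)$. By Proposition \ref{proposition:validity-of-density} the mean of $\jmath_{P\|Q}(X,Y)$ under $P_{XY}$ is $E(P_{XY}\|Q_{XY})$ and its variance is $V(P\|Q)$, so the CLT immediately gives $\alpha[T_n^{\mathtt{NPl}}]\to\varepsilon$. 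For the type II error, the Chernoff bound at $s=1$, combined with
\begin{align*}
\psi_{Q,\overline{\lambda}}(1)=\log\sum_{x,y}Q_{XY}(x,y)\frac{P^*_{XY}(x,y)}{Q_{XY}(x,y)}=0,
\end{align*}
gives $\beta[T_n^{\mathtt{NPl}}]\le\exp(-n\tau_n)$, delivering $G_\san{s}(\varepsilon)\ge\sqrt{V(P\|Q)}\,\Phi^{-1}(\varepsilon)$.

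\emph{Converse.} A symmetric test depends on $(X^n,Y^n)$ only through $(\san{t}_{X^n},\san{t}_{Y^n})$, so the change-of-measure form of the Neyman--Pearson lemma yields
\begin{align*}
\beta[T_n]\;\ge\;e^{-t}\bigl[P(L_n<t)-\alpha[T_n]\bigr]_+,\qquad L_n:=\log\frac{P(\san{t}_{X^n},\san{t}_{Y^n})}{Q(\san{t}_{X^n},\san{t}_{Y^n})}.
\end{align*}
The method of types gives $\log P(\san{t}_{X^n}=t_X,\san{t}_{Y^n}=t_Y)=-nD(\tilde P^\star(t_X,t_Y)\|P_{XY})+O(\log n)$, where $\tilde P^\star(t_X,t_Y)$ is the $I$-projection of the mixture family $\{\tilde P:\tilde P_X=t_X,\tilde P_Y=t_Y\}$ onto $P_{XY}$ (and analogously $\tilde Q^\star$), so $L_n=nf(\san{t}_{X^n},\san{t}_{Y^n})+O(\log n)$ with $f(t_X,t_Y):=D(\tilde Q^\star(t_X,t_Y)\|Q_{XY})-D(\tilde P^\star(t_X,t_Y)\|P_{XY})$.

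Writing $\tilde Q^\star(x,y)=Q_{XY}(x,y)\exp(a(x)+b(y)-\psi)$ inherited from $\tilde Q^\star\in\mathcal E(\theta^{\san{xy}}(Q))$, a Legendre-duality calculation (using $\partial\psi/\partial a(x)=t_X(x)$ and $\partial\psi/\partial b(y)=t_Y(y)$) shows that the gradient of $D(\tilde Q^\star\|Q_{XY})$ at $(P_X,P_Y)$ is the tilt $(a_Q^*(\cdot),b_Q^*(\cdot))$ defined by $P^*_{XY}(x,y)=Q_{XY}(x,y)\exp(a_Q^*(x)+b_Q^*(y)-\psi_Q^*)$, hence $a_Q^*(x)+b_Q^*(y)=\jmath_{P\|Q}(x,y)+\psi_Q^*$; the gradient of $D(\tilde P^\star\|P_{XY})$ vanishes there because this quantity is nonnegative with value $0$ at $(P_X,P_Y)$. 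Since the linear form depends only on the marginals, evaluating it along the joint type $\san{t}_{X^nY^n}$ collapses
\begin{align*}
L_n=\sum_{i=1}^n\jmath_{P\|Q}(X_i,Y_i)+O_p(1)+O(\log n),
\end{align*}
and the CLT gives $(L_n-nE(P_{XY}\|Q_{XY}))/\sqrt n\Rightarrow\mathcal N(0,V(P\|Q))$ under $P_{XY}^n$. Taking $t_n=nE(P_{XY}\|Q_{XY})+\sqrt{nV(P\|Q)}\,(\Phi^{-1}(\varepsilon)+\delta)$ in the displayed inequality produces $-\log\beta[T_n]\le t_n+O(1)$, and letting $\delta\downarrow 0$ gives $G_\san{s}(\varepsilon)\le\sqrt{V(P\|Q)}\,\Phi^{-1}(\varepsilon)$.

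The main obstacle is identifying the linear Taylor coefficient of $f$ at $(P_X,P_Y)$ as exactly $\jmath_{P\|Q}$; this requires the Legendre-duality calculation tailored to the intersection $\mathcal E(\theta^{\san{xy}}(Q))\cap\mathcal M(\eta^\san{x}(P),\eta^\san{y}(P))=\{P^*_{XY}\}$, followed by the verification that the quadratic Taylor remainder ($O_p(\|\san{t}_{X^nY^n}-P_{XY}\|^2)=O_p(1/n)$, hence $O_p(1)$ after multiplication by $n$ in $L_n$) together with the method-of-types $O(\log n)$ prefactor is $o_p(\sqrt n)$, so that the Gaussian limit of $\sum_i\jmath_{P\|Q}(X_i,Y_i)$ governs the second-order behavior on both sides.
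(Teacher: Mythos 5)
Your proposal is correct in substance, and while the achievability half essentially mirrors the paper, your converse takes a genuinely different route. On achievability you use the same test as the paper ($T_n^\mathtt{NPl}$ with $\lambda=E(P_{XY}\|Q_{XY})$, so $\Lambda_\lambda=\jmath_{P\|Q}$), but you bound the type II error by the plain Chernoff/Markov bound at $s=1$ (valid, since $\psi_{Q,\overline{\lambda}}(1)=\log\sum_{x,y}P^*_{XY}(x,y)=0$) and handle the type I error by the CLT; the paper instead uses the change-of-measure bound of Lemma \ref{lemma:PPV} together with Berry--Ess\'een, which costs nothing for Theorem \ref{theorem:second-order} itself but buys the extra $\frac{1}{2}\log n$ refinement recorded in Remark \ref{remark:third-order-bound}, which your $\exp(-n\tau_n)$ bound does not give. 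For the converse, you apply the Neyman--Pearson/meta-converse inequality to the pair of marginal types, which is exactly the sufficient statistic for symmetric schemes (cf.\ Remark \ref{remark:trivial-test}), and then evaluate the log-likelihood ratio $L_n$ of $(\san{t}_{X^n},\san{t}_{Y^n})$ by two-sided method-of-types estimates, so you need both $\log P(\san{t}_{X^n},\san{t}_{Y^n})\approx -nE(\cdot\|P_{XY})$ and $\log Q(\san{t}_{X^n},\san{t}_{Y^n})\approx -nE(\cdot\|Q_{XY})$, plus the fact that the gradient of $E(\cdot\|P_{XY})$ vanishes at $(P_X,P_Y)$. The paper's converse (Proposition \ref{proposition:converse} with Lemma \ref{lemma:approximation-by-type}) avoids the likelihood-ratio machinery entirely: it argues directly that any symmetric test with $\beta[T_n]\le e^{-rn}$ must reject every marginal-type pair whose projected divergence to $Q_{XY}$ is below roughly $r$, which needs only the one-sided lower bound on $Q$-probabilities of type classes, and then lower-bounds $\alpha[T_n]$ by the CLT. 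Both converses hinge on the same analytic core, namely the linearization of $E(\cdot\|Q_{XY})$ around $P_{XY}$ with gradient $\jmath_{P\|Q}$ (your Legendre-duality computation is the paper's Lemma \ref{lemma:derivative-of-projected-relative-entropy} and Theorem \ref{theorem:Taylor-approximation}); the points you still have to nail down --- twice differentiability (or at least a locally uniform quadratic remainder) of the projected divergences in the marginals near $(P_X,P_Y)$, uniformity of the $O(\log n)$ type-counting corrections, and restriction to a Hoeffding-typical set of types so the remainder is $o(\sqrt{n})$ where it matters --- are real but standard, and the paper faces the analogous remainder-control issue on its typical set ${\cal K}_n$. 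Your route is slightly heavier (two-sided estimates and the $P$-side gradient), but it makes transparent that the converse is just the classical second-order Neyman--Pearson bound for the reduced observation; the paper's route is more elementary and self-contained.
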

	
	\begin{remark} \label{remark:third-order-bound}
		In fact, we can present a slightly stronger statement than Theorem \ref{theorem:second-order} for the achievability, i.e., 
		the Neyman-Pearson-like testing scheme $T_n^\mathtt{NPl}$ of Section \ref{section:non-asymptotic} performs as follows for sufficiently large $n$:
		\begin{align}
		\alpha[T_n^\mathtt{NPl}] \le \varepsilon
		\end{align}
		and
		\begin{align}
		- \log \beta[T_n^\mathtt{NPl}] \ge n E(P_{XY} \| Q_{XY}) + \sqrt{n V(P\|Q)} \Phi^{-1}(\varepsilon) + \frac{1}{2} \log n + O(1). 
		\end{align}
	\end{remark}
	
	\subsection{Proof of Achievability of Theorem \ref{theorem:second-order}}
	
	We prove the stronger statement, i.e., Remark \ref{remark:third-order-bound}.
	We use the following technical lemma shown in \cite[Lemma 47]{polyanskiy:10}.

	\begin{lemma} \label{lemma:PPV}
		Let $Z_1,\ldots,Z_n$ be i.i.d. random variables with $\sigma^2 = \san{Var}[Z_i]$,
		$T= \san{E}[|Z_i - \san{E}[Z_i]|^3]$. Then, for any $\gamma$, 
		\begin{align}
		\san{E}\left[ \exp \bigg( - \sum_{i=1}^n Z_i \bigg) \mathbf{1}\bigg\{ \sum_{i=1}^n Z_i > \gamma \bigg\} \right]
		\le 2 \bigg( \frac{\log 2}{\sqrt{2\pi}} + \frac{12 T}{\sigma^2} \bigg) \frac{\exp(-\gamma)}{\sigma \sqrt{n}},
		\end{align}
		where $\mathbf{1}\{ \cdot \}$ is the indicator function.
	\end{lemma}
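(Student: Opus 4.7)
The plan is to reduce the weighted tail expectation to a sum of probabilities of bounded intervals, each of which can be controlled by the central limit theorem with a quantitative error bound (Berry--Esseen). Writing $S_n := \sum_{i=1}^n Z_i$ and $\mu := \san{E}[Z_i]$, I would first dyadically decompose the event $\{S_n > \gamma\}$ into the disjoint pieces $I_k := \{ \gamma + k\log 2 < S_n \le \gamma + (k+1)\log 2 \}$ for $k = 0,1,2,\ldots$. On $I_k$ the integrand satisfies $\exp(-S_n) \le e^{-\gamma} 2^{-k}$, which gives
\begin{align*}
\san{E}\big[ e^{-S_n}\mathbf{1}\{S_n > \gamma\} \big] \le e^{-\gamma}\sum_{k=0}^{\infty} 2^{-k} \Pr(I_k).
\end{align*}
This reduces matters to bounding the probability that $S_n$ lies in a window of fixed width $\log 2$, and the exponential weight has been traded for a summable geometric factor.

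Next I would apply the Berry--Esseen theorem to estimate $\Pr(I_k)$. Berry--Esseen gives $|\Pr(S_n \le x) - \Phi((x-n\mu)/(\sigma\sqrt{n}))| \le CT/(\sigma^3 \sqrt{n})$ for a universal constant $C$, which can be taken to be $6$. Applied at the two endpoints of $I_k$, this yields
\begin{align*}
\Pr(I_k) \le \Phi\big( (\gamma + (k+1)\log 2 - n\mu)/(\sigma\sqrt{n}) \big) - \Phi\big( (\gamma + k\log 2 - n\mu)/(\sigma\sqrt{n}) \big) + \frac{2CT}{\sigma^3\sqrt{n}}.
\end{align*}
Since the standard normal density is bounded by $1/\sqrt{2\pi}$, the difference of the two $\Phi$-terms is at most $(\log 2)/(\sigma\sqrt{n}\sqrt{2\pi})$, so
\begin{align*}
\Pr(I_k) \le \frac{1}{\sigma\sqrt{n}}\bigg( \frac{\log 2}{\sqrt{2\pi}} + \frac{2CT}{\sigma^2} \bigg),
\end{align*}
uniformly in $k$. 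Note that the bound does not depend on $\gamma$ or $n\mu$, which is exactly what allows the subsequent sum over $k$ to go through cleanly.

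Substituting this uniform estimate back into the dyadic sum and using $\sum_{k=0}^\infty 2^{-k} = 2$ together with $2C = 12$ yields
\begin{align*}
\san{E}\big[ e^{-S_n}\mathbf{1}\{S_n > \gamma\} \big] \le \frac{2 e^{-\gamma}}{\sigma\sqrt{n}} \bigg( \frac{\log 2}{\sqrt{2\pi}} + \frac{12T}{\sigma^2} \bigg),
\end{align*}
matching the claimed inequality. The only nontrivial ingredient is Berry--Esseen with an explicit numerical constant; everything else is the elementary dyadic slicing trick plus the uniform bound on the Gaussian density. The main obstacle, if one wants to obtain exactly the constants $\log 2 / \sqrt{2\pi}$ and $12$ stated in the lemma, is tracking the Berry--Esseen constant carefully (in particular, using a version in which $C \le 6$); with any larger constant the proof strategy is unchanged but the factor multiplying $T/\sigma^2$ would be correspondingly larger.
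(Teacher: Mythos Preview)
Your argument is correct and is essentially the proof given in \cite[Lemma~47]{polyanskiy:10}; the present paper does not supply its own proof of this lemma but simply cites that reference. The dyadic slicing of $\{S_n>\gamma\}$ into windows of width $\log 2$, the use of Berry--Esseen at the two endpoints of each window, and the bound $\sup_t \Phi'(t)\le 1/\sqrt{2\pi}$ on the Gaussian density are exactly the ingredients used there, and your tracking of constants ($2C=12$ with $C=6$) reproduces the stated bound.
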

	
	From Theorem \ref{theorem:performance-NP-type-test} with $\lambda = E(P_{XY} \| Q_{XY})$,\footnote{Note that $\Lambda_\lambda(x,y) = \jmath_{P\|Q}(x,y)$
		for $\lambda = E(P_{XY}\|Q_{XY})$.} 
	the testing scheme of Section \ref{section:non-asymptotic} satisfies 
	\begin{align}
	\beta[T_n^\mathtt{NPl}] &= Q\bigg( \frac{1}{n} \sum_{i=1}^n \jmath_{P\|Q}(X_i, Y_i) > \tau \bigg) \\
	&= \sum_{\bm{x},\bm{y}} Q_{XY}^n(\bm{x},\bm{y}) \mathbf{1}\bigg\{ \log \frac{P_{XY}^{* n}(\bm{x},\bm{y})}{Q_{XY}^n(\bm{x},\bm{y})} > \tau n \bigg\} \\
	&= \sum_{\bm{x},\bm{y}} P_{XY}^{* n}(\bm{x},\bm{y}) \exp\bigg( - \log \frac{P_{XY}^{* n}(\bm{x},\bm{y})}{Q_{XY}^n(\bm{x},\bm{y})} \bigg)
	\mathbf{1}\bigg\{ \log \frac{P_{XY}^{* n}(\bm{x},\bm{y})}{Q_{XY}^n(\bm{x},\bm{y})} > \tau n \bigg\} \\
	&\le 2 \bigg( \frac{\log 2}{\sqrt{2\pi}} + \frac{12 T}{\sigma^2} \bigg) \frac{\exp( - \tau n)}{\sigma \sqrt{n}},  
	\label{eq:bound-on-type-2-error}
	\end{align}
	where we used Lemma \ref{lemma:PPV} by setting $Z_i = \jmath_{P\|Q}(X_i,Y_i)$ for $(X_i,Y_i) \sim P_{XY}^*$
	in the last inequality. Now, we set
	\begin{align} 
	\tau = E(P_{XY} \| Q_{XY}) + \sqrt{\frac{V(P\|Q)}{n}} \Phi^{-1}\bigg( \varepsilon - \frac{6 T(P\|Q)}{\sqrt{n} V(P\|Q)^{3/2}} \bigg),
	\label{eq:setting-lambda}
	\end{align}
	where $T(P\|Q)$ is the absolute third moment of $\jmath_{P\|Q}(X,Y)$ for $(X,Y) \sim P_{XY}$. Then, application
	of the Berry-Ess\'een theorem (see \cite{feller:book}) yields 
	\begin{align}
	\alpha[T_n^\mathtt{NPl}] &= P\bigg( \frac{1}{n} \sum_{i=1}^n \jmath_{P\|Q}(X_i, Y_i) \le \tau \bigg) \\
	&\le \varepsilon.
	\end{align}
	On the other hand, \eqref{eq:bound-on-type-2-error}, \eqref{eq:setting-lambda}, and the Taylor approximation of
	$\Phi^{-1}(\cdot)$ around $\varepsilon$ yields 
	\begin{align}
	- \log \beta[T_n^\mathtt{NPl}] &\ge \tau n + \log (\sigma \sqrt{n}) + O(1) \\
	&= n E(P_{XY} \| Q_{XY}) + \sqrt{ n V(P\|Q)} \Phi^{-1}(\varepsilon) + \frac{1}{2} \log n  + O(1),
	\end{align}
	which completes the achievability proof.

	\subsection{Proof of Converse of Theorem \ref{theorem:second-order}}
	
	First, we provide a simple converse bound for the class of symmetric schemes.  
	For a given joint type $P_{\bar{X}\bar{Y}} \in {\cal P}_n({\cal X}\times {\cal Y})$, let
	\begin{align}
	E_n(P_{\bar{X}\bar{Y}} \| Q_{XY}) := \min\left\{ D(\tilde{P}_{XY} \| Q_{XY}) : \tilde{P}_{XY} \in {\cal P}_n({\cal X}\times{\cal Y}), \tilde{P}_X=P_{\bar{X}}, \tilde{P}_Y=P_{\bar{Y}} \right\}.
	\end{align}
	By definition, $E_n(P_{\bar{X}\bar{Y}} \| Q_{XY}) \ge E(P_{\bar{X}\bar{Y}} \| Q_{XY})$ for any $P_{\bar{X}\bar{Y}} \in {\cal P}_n({\cal X}\times {\cal Y})$.
	
	\begin{proposition} \label{proposition:converse}
		For any $r > 0$ and for any symmetric scheme $T_n$ such that 
		\begin{align} \label{eq:assumption-finite-converse}
		\beta[T_n] \le \exp\{- r n\},
		\end{align}
		it holds that 
		\begin{align} \label{eq:consequence-finite-converse}
		\alpha[T_n] \ge P\bigg( E_n(\san{t}_{X^n Y^n} \| Q_{XY}) < r - (|{\cal X}||{\cal Y}|-1)\frac{ \log (n+1)}{n} \bigg).
		\end{align}
	\end{proposition}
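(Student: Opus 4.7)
The plan is to exploit the symmetric structure. Since $f_1^{(n)}$ and $f_2^{(n)}$ depend only on the marginal types of $\bm{x}$ and $\bm{y}$, the decision $g(f_1(\bm{x}), f_2(\bm{y})) = \san{H}_0$ is determined entirely by the pair $(\san{t}_{\bm{x}}, \san{t}_{\bm{y}})$. Let ${\cal A}_n \subseteq {\cal P}_n({\cal X}) \times {\cal P}_n({\cal Y})$ be the set of marginal type pairs on which $T_n$ accepts, so that
\begin{align}
\beta[T_n] = Q\bigl( (\san{t}_{X^n}, \san{t}_{Y^n}) \in {\cal A}_n \bigr),~~~\alpha[T_n] = P\bigl( (\san{t}_{X^n}, \san{t}_{Y^n}) \notin {\cal A}_n \bigr).
\end{align}

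Next, for each $(P_{\bar X}, P_{\bar Y}) \in {\cal A}_n$ I would lower bound $Q\bigl((\san{t}_{X^n}, \san{t}_{Y^n}) = (P_{\bar X}, P_{\bar Y})\bigr)$ via the method of types. For any joint type $\tilde{P}_{XY} \in {\cal P}_n({\cal X}\times{\cal Y})$ with marginals $(P_{\bar X}, P_{\bar Y})$, the standard bound gives $Q_{XY}^n(\san{t}_{X^n Y^n} = \tilde{P}_{XY}) \ge (n+1)^{-(|{\cal X}||{\cal Y}|-1)} \exp(-n D(\tilde{P}_{XY} \| Q_{XY}))$, using the refined type-counting estimate $|{\cal P}_n({\cal X}\times{\cal Y})| \le (n+1)^{|{\cal X}||{\cal Y}|-1}$. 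Taking the minimizing joint type over the marginals and noting that $E_n(\cdot \| Q_{XY})$ depends only on the marginals yields
\begin{align}
Q\bigl( (\san{t}_{X^n}, \san{t}_{Y^n}) = (P_{\bar X}, P_{\bar Y}) \bigr) \ge (n+1)^{-(|{\cal X}||{\cal Y}|-1)} \exp\bigl\{ -n E_n(P_{\bar X \bar Y} \| Q_{XY}) \bigr\},
\end{align}
where $P_{\bar X \bar Y}$ denotes any joint type in ${\cal P}_n$ with the prescribed marginals.

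Combining this bound with the hypothesis $\beta[T_n] \le \exp(-rn)$ forces every accepted pair $(P_{\bar X}, P_{\bar Y}) \in {\cal A}_n$ to satisfy
\begin{align}
E_n(P_{\bar X \bar Y} \| Q_{XY}) \ge r - (|{\cal X}||{\cal Y}|-1) \frac{\log(n+1)}{n}.
\end{align}
Contrapositively, if $E_n(\san{t}_{X^n Y^n} \| Q_{XY}) < r - (|{\cal X}||{\cal Y}|-1) \log(n+1)/n$, then $(\san{t}_{X^n}, \san{t}_{Y^n}) \notin {\cal A}_n$, i.e., $T_n$ outputs $\san{H}_1$. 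Taking the $P$-probability of this event gives exactly \eqref{eq:consequence-finite-converse}. The only mildly delicate point is recovering the sharp constant $(|{\cal X}||{\cal Y}|-1)$, which requires the tighter type-counting bound rather than the cruder $(n+1)^{|{\cal X}||{\cal Y}|}$; aside from that, every step is a routine application of the method of types.
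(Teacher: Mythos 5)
Your proposal is correct and follows essentially the same argument as the paper: by symmetry the decision depends only on the marginal type pair, the $Q^n$-probability of the type class of the minimizing joint type with those marginals is lower bounded by $(n+1)^{-(|{\cal X}||{\cal Y}|-1)}\exp\{-nE_n(\cdot\|Q_{XY})\}$, and the constraint $\beta[T_n]\le\exp\{-rn\}$ then forces rejection of every marginal pair with $E_n$ below the stated threshold. The paper phrases this as a contradiction rather than a contrapositive over an acceptance set, but the content, including the sharp $(n+1)^{-(|{\cal X}||{\cal Y}|-1)}$ type-counting constant, is identical.
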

	\begin{proof}
		Since $T_n$ is a symmetric scheme, without loss of generality, we can assume that 
		the decoder $g$ only depends on marginal types $(P_{\bar{X}}, P_{\bar{Y}}) \in {\cal P}_n({\cal X}) \times {\cal P}_n({\cal Y})$.
		We claim that, if 
		\begin{align} 
		E_n( P_{\bar{X}} \times P_{\bar{Y}} \| Q_{XY}) < r - (|{\cal X}||{\cal Y}|-1) \frac{\log (n+1)}{n}, 
		\end{align}
		then $g(P_{\bar{X}}, P_{\bar{Y}}) = \san{H}_1$. Otherwise, by letting $P_{\tilde{X}\tilde{Y}} \in {\cal P}_n({\cal X}\times {\cal Y})$ be
		such that $P_{\tilde{X}} = P_{\bar{X}}$, $P_{\tilde{Y}} = P_{\bar{Y}}$, and $E_n(P_{\bar{X}} \times P_{\bar{Y}} \|Q_{XY}) = D(P_{\tilde{X}\tilde{Y}} \| Q_{XY})$, we have
		\begin{align}
		\beta[T_n]
		&\ge Q_{XY}^n({\cal T}_{\tilde{X}\tilde{Y}}^n) \\
		&\ge \frac{1}{(n+1)^{(|{\cal X}||{\cal Y}|-1)}} \exp\{- n D(P_{\tilde{X}\tilde{Y}} \| Q_{XY}) \} \\
		&= \frac{1}{(n+1)^{(|{\cal X}||{\cal Y}|-1)}} \exp\{- n E_n(P_{\bar{X}} \times P_{\bar{Y}} \| Q_{XY}) \} \\
		&> \exp\{- r n \},
		\end{align}
		which contradict \eqref{eq:assumption-finite-converse}, where ${\cal T}_{\tilde{X}\tilde{Y}}^n$ is the set of all pairs $(\bm{x},\bm{y})$
		such that $\san{t}_{\bm{x}\bm{y}} = P_{\tilde{X}\tilde{Y}}$. 
		Thus, we have \eqref{eq:consequence-finite-converse}.
	\end{proof}

	Second, we approximate $E_n(P_{\bar{X}\bar{Y}} \|Q_{XY})$ by $E(P_{\bar{X}\bar{Y}} \| Q_{XY})$. 
	
	\begin{lemma} \label{lemma:approximation-by-type}
		For any $P_{\bar{X}\bar{Y}} \in {\cal P}_n({\cal X}\times {\cal Y})$, it holds that 
		\begin{align}
		E_n(P_{\bar{X}\bar{Y}} \|Q_{XY})  \le  E(P_{\bar{X}\bar{Y}} \| Q_{XY}) +  \Delta_n,
		\end{align}
		where 
		\begin{align}
		\Delta_n := \nu_n  \log \frac{|{\cal X}||{\cal Y}|}{\nu_n}
		+ \nu_n \max_{x,y} \log \frac{1}{Q_{XY}(x,y)}.
		\end{align}
		and
		\begin{align}
		\nu_n := \frac{4(|{\cal X}|-1)(|{\cal Y}|-1)}{n}.
		\end{align}
	\end{lemma}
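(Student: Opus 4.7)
The plan is to approximate the exact minimizer $P^*_{XY}$ of the unconstrained problem $E(P_{\bar{X}\bar{Y}}\|Q_{XY})$ by an $n$-type $\tilde{P}_{XY}$ that still has marginals $P_{\bar{X}}$ and $P_{\bar{Y}}$, and then to bound the discrepancy $D(\tilde{P}_{XY}\|Q_{XY}) - D(P^*_{XY}\|Q_{XY})$ by continuity of the relative entropy in the variational distance. Since $E_n$ is a minimum over $n$-types with the prescribed marginals, producing such a $\tilde{P}_{XY}$ directly upper-bounds $E_n$ by $D(\tilde{P}_{XY}\|Q_{XY})$.

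First I would fix $P^*_{XY}$ attaining $E(P_{\bar{X}\bar{Y}}\|Q_{XY})$, which satisfies $P^*_X = P_{\bar{X}}$ and $P^*_Y = P_{\bar{Y}}$. The rounding step is to quantize the entries $n P^*_{XY}(x,y)$ to nonnegative integers while preserving the row and column sums, which are already integers because $P_{\bar{X}}$ and $P_{\bar{Y}}$ are themselves $n$-types. This is a classical integer transportation-style rounding: only $(|\mathcal{X}|-1)(|\mathcal{Y}|-1)$ entries are free (the remaining ones are determined by the marginal constraints), and each free entry can be chosen to be either $\lfloor n P^*_{XY}(x,y)\rfloor$ or $\lceil n P^*_{XY}(x,y)\rceil$ via a standard flow argument. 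This yields an $n$-type $\tilde{P}_{XY}$ with $\tilde{P}_X = P_{\bar{X}}$, $\tilde{P}_Y = P_{\bar{Y}}$, and
\begin{equation*}
\|\tilde{P}_{XY} - P^*_{XY}\| \le \nu_n = \frac{4(|\mathcal{X}|-1)(|\mathcal{Y}|-1)}{n},
\end{equation*}
where the constant $4$ absorbs the normalization of the variational distance together with the propagation of rounding error through the dependent cells.

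Next I would invoke the standard continuity bound for relative entropy: if $\|\tilde{P} - P^*\| \le \nu$ and $Q$ is positive, then
\begin{equation*}
D(\tilde{P}\|Q) - D(P^*\|Q) \le \nu \log \frac{|\mathcal{X}||\mathcal{Y}|}{\nu} + \nu \, \max_{x,y}\log\frac{1}{Q_{XY}(x,y)}.
\end{equation*}
This is obtained by splitting $D(\tilde{P}\|Q) - D(P^*\|Q) = \sum_{x,y}[\tilde{P}-P^*]\log(1/Q) + \sum \tilde{P}\log\tilde{P} - \sum P^*\log P^*$ and applying the uniform bound $\nu\max\log(1/Q)$ to the first term together with the concavity-based Fannes-type bound $\nu\log(|\mathcal{X}||\mathcal{Y}|/\nu)$ to the entropy difference. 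Combining with the rounding bound of the previous step yields $D(\tilde{P}_{XY}\|Q_{XY}) \le E(P_{\bar{X}\bar{Y}}\|Q_{XY}) + \Delta_n$, and since $\tilde{P}_{XY}$ is a feasible $n$-type in the definition of $E_n$, we conclude $E_n(P_{\bar{X}\bar{Y}}\|Q_{XY}) \le E(P_{\bar{X}\bar{Y}}\|Q_{XY}) + \Delta_n$.

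The main obstacle I expect is the marginal-preserving rounding step; one has to be sure that the rounding can be carried out so that both marginals are preserved exactly (not merely approximately), and the variational-distance bound depends on tracking how error at the free entries propagates into the dependent entries along rows and columns. Once that $L^1$ bound is in place with the constant $\nu_n$ as stated, the rest is the textbook continuity estimate for $D(\cdot\|Q)$ under variational perturbation, which directly produces the two terms defining $\Delta_n$.
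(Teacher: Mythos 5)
Your proposal is correct and follows essentially the same route as the paper: approximate the optimizer $P^*_{\bar{X}\bar{Y}}$ by an $n$-type $\tilde{P}_{XY}$ with the same marginals, bound $\|\tilde{P}_{XY}-P^*_{\bar{X}\bar{Y}}\|\le \nu_n$, and then apply the entropy-continuity bound together with the $\nu_n\max_{x,y}\log(1/Q_{XY}(x,y))$ term to obtain $\Delta_n$. The only cosmetic difference is the rounding step: the paper simply takes $\tilde{P}_{XY}(x,y)=\lfloor P^*_{\bar{X}\bar{Y}}(x,y)\rfloor_{1/n}$ on the $(|{\cal X}|-1)(|{\cal Y}|-1)$ free cells and lets the row-$0$/column-$0$ cells absorb the error (which is exactly where the factor $4$ in $\nu_n$ comes from), so the transportation/flow-based floor-or-ceil rounding you invoke is not needed, though it would work as well.
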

	\begin{proof}
		Let $P_{\bar{X}\bar{Y}}^*$ be such that $P^*_{\bar{X}} = P_{\bar{X}}$, $P^*_{\bar{Y}} = P_{\bar{Y}}$, and 
		\begin{align}
		D(P^*_{\bar{X}\bar{Y}} \| Q_{XY}) = E(P_{\bar{X}\bar{Y}} \|Q_{XY}).
		\end{align}
		We set $\tilde{P}_{XY} \in {\cal P}_n({\cal X}\times {\cal Y})$ as follows:
		$\tilde{P}_X = P^*_{\bar{X}}$, $\tilde{P}_Y = P^*_{\bar{Y}}$, and 
		\begin{align}
		\tilde{P}_{XY}(x,y) = \big\lfloor P^*_{\bar{X}\bar{Y}}(x,y) \big\rfloor_{1/n}
		\end{align}
		for $x=1,\ldots,|{\cal X}|-1$ and $y=1,\ldots,|{\cal Y}|-1$,
		where
		\begin{align}
		\lfloor t \rfloor_{1/n} := \max\left\{ \frac{k}{n} : k \in \mathbb{N}, \frac{k}{n} \le t \right\}.
		\end{align}
		Then, we have
		\begin{align}
		&\| \tilde{P}_{XY} - P^*_{\bar{X}\bar{Y}} \| \\
		&= \bigg| \bigg[ \tilde{P}_X(0) + \tilde{P}_Y(0) -1 + \sum_{x=1}^{|{\cal X}|-1} \sum_{y=1}^{|{\cal Y}|-1} \tilde{P}_{XY}(x,y) \bigg] 
		- \bigg[P^*_{\bar{X}}(0) + P^*_{\bar{Y}}(0) -1 + \sum_{x=1}^{|{\cal X}|-1} \sum_{y=1}^{|{\cal Y}|-1} P^*_{\bar{X}\bar{Y}}(x,y) \bigg]  \bigg| \\
		&~~~+ \sum_{x=1}^{|{\cal X}|-1} \bigg| \bigg[ \tilde{P}_X(x) - \sum_{y=1}^{|{\cal Y}|-1} \tilde{P}_{XY}(x,y) \bigg] 
		- \bigg[ P^*_{\bar{X}}(x) - \sum_{y=1}^{|{\cal Y}|-1} P^*_{\bar{X}\bar{Y}}(x,y) \bigg] \bigg| \\
		&~~~+ \sum_{y=1}^{|{\cal Y}|-1} \bigg| \bigg[ \tilde{P}_Y(y) - \sum_{x=1}^{|{\cal X}|-1} \tilde{P}_{XY}(x,y) \bigg] 
		- \bigg[ P^*_{\bar{Y}}(y) - \sum_{x=1}^{|{\cal X}|-1} P^*_{\bar{X}\bar{Y}}(x,y) \bigg] \bigg| \\
		&~~~+ \sum_{x=1}^{|{\cal X}|-1} \sum_{y=1}^{|{\cal Y}|-1} \bigg| \tilde{P}_{XY}(x,y) - P^*_{\bar{X}\bar{Y}}(x,y) \bigg| \\
		&\le \nu_n.
		\end{align}
		Thus, by the continuity of the entropy \cite[Lemma 2.7]{csiszar-korner:11}, we have
		\begin{align}
		E_n(P_{\bar{X}\bar{Y}} \| Q_{XY}) 
		&\le D(\tilde{P}_{XY} \| Q_{XY}) \\
		&= - H(\tilde{P}_{XY}) + \sum_{x,y} \tilde{P}_{XY}(x,y) \log \frac{1}{Q_{XY}(x,y)} \\
		&\le - H(P^*_{\bar{X}\bar{Y}}) + \nu_n \log \frac{|{\cal X}||{\cal Y}|}{\nu_n }
		+ \sum_{x,y} P^*_{\bar{X}\bar{Y}}(x,y) \log \frac{1}{Q_{XY}(x,y)} 
		+\nu_n \max_{x,y} \log \frac{1}{Q_{XY}(x,y)} \\
		&= D(P^*_{\bar{X}\bar{Y}} \| Q_{XY}) + \Delta_n.
		\end{align}
	\end{proof}
	
	Next, we show a useful relationship between the projected relative entropy $E(P_{XY}\|Q_{XY})$ 
	and its density $\jmath_{P\|Q}$.
	\begin{lemma} \label{lemma:derivative-of-projected-relative-entropy}
		Let $\tilde{\eta}$ be the expectation parameter of $P_{XY}$ (see Section \ref{sec:information-geometry}). Then, it holds that 
		\begin{align}
		\frac{\partial E(P_{XY,\eta} \| Q_{XY})}{\partial \eta^{\san{x}}_i} \bigg|_{\eta = \tilde{\eta}} &= \jmath_{P\|Q}(i,0) - \jmath_{P\|Q}(0,0), \label{eq:derivative-1} \\
		\frac{\partial E(P_{XY,\eta} \| Q_{XY})}{\partial \eta^{\san{y}}_j} \bigg|_{\eta = \tilde{\eta}} &= \jmath_{P\|Q}(0,j) - \jmath_{P\|Q}(0,0), \label{eq:derivative-2} \\
		\frac{\partial E(P_{XY,\eta} \| Q_{XY})}{\partial \eta^{\san{xy}}_{ij}} \bigg|_{\eta = \tilde{\eta}} &= 0 \label{eq:derivative-3}
		\end{align}
		for $1 \le i \le d_{\san{x}}$ and $1 \le j \le d_{\san{y}}$, where $\tilde{\eta} = \eta(P)$ is the expectation parameter of $P_{XY}$.
	\end{lemma}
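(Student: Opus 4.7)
The plan is to exploit the variational representation of $E(P_{XY,\eta}\|Q_{XY})$ together with the explicit structure of its optimizer from Theorem \ref{theorem:pythagorean}. Since the constraint set defining the minimization depends on $\eta$ only through the marginals $P_{X,\eta}$ and $P_{Y,\eta}$, and the marginals depend only on $\eta^\san{x}$ and $\eta^\san{y}$ (cf.\ \eqref{eq:expectation-parameter-1}--\eqref{eq:expectation-parameter-3}), perturbing $\eta^\san{xy}_{ij}$ while holding $\eta^\san{x}$ and $\eta^\san{y}$ fixed leaves both the feasible set and the objective of the $E$-minimization unchanged. This immediately gives \eqref{eq:derivative-3}, and also tells us that throughout we may regard $E(P_{XY,\eta}\|Q_{XY})$ as a function of the marginal parameters alone.

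For \eqref{eq:derivative-1} and \eqref{eq:derivative-2} I would apply the envelope (Lagrangian) principle. Introduce
\[
L(\tilde P, \lambda^\san{x}, \lambda^\san{y}, \mu) = D(\tilde P\|Q_{XY}) - \sum_{i=1}^{d_\san{x}} \lambda^\san{x}_i \big( \tilde P_X(i) - \eta^\san{x}_i \big) - \sum_{j=1}^{d_\san{y}} \lambda^\san{y}_j \big( \tilde P_Y(j) - \eta^\san{y}_j \big) - \mu \Big( \textstyle\sum_{x,y} \tilde P(x,y) - 1 \Big).
\]
The stationarity condition $\partial L/\partial \tilde P(x,y) = 0$ yields $\tilde P^*(x,y) = Q_{XY}(x,y)\exp[\lambda^\san{x}_x \mathbf{1}\{x\ge 1\} + \lambda^\san{y}_y \mathbf{1}\{y\ge 1\} + \mu - 1]$, so $\tilde P^* \in {\cal E}(\theta^\san{xy}(Q))$, in agreement with Theorem \ref{theorem:pythagorean}. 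The envelope theorem then gives $\partial E(P_{XY,\eta}\|Q_{XY})/\partial \eta^\san{x}_i |_{\eta = \tilde\eta} = \lambda^\san{x}_i$. Evaluating the optimizer formula at $(i,0)$ and $(0,0)$ and subtracting cancels $\mu - 1$, producing $\lambda^\san{x}_i = \log[P_{XY}^*(i,0)/Q_{XY}(i,0)] - \log[P_{XY}^*(0,0)/Q_{XY}(0,0)] = \jmath_{P\|Q}(i,0) - \jmath_{P\|Q}(0,0)$, which is \eqref{eq:derivative-1}. The derivation of \eqref{eq:derivative-2} is the mirror image.

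The main point that requires care is justifying the smoothness assumed by the envelope argument. Because $P_{XY}$ has full support, $\tilde\eta$ is an interior point of the parameter domain; strict convexity of the relative entropy together with linearity of the marginal constraints guarantees a unique optimizer $P_{XY,\eta}^*$ for $\eta$ in a neighborhood of $\tilde\eta$, and the implicit function theorem applied to the KKT system produces a smooth dependence of $P_{XY,\eta}^*$ and of the multipliers on $\eta$. As a self-contained alternative, one can differentiate $E(P_{XY,\eta}\|Q_{XY}) = D(P_{XY,\eta}^*\|Q_{XY})$ directly: the $+1$ arising from $\partial(\tilde P\log\tilde P)/\partial\tilde P$ integrates against $\sum_{x,y}\partial P^*/\partial \eta^\san{x}_i = 0$, so one is left with $\sum_{x,y}(\partial P^*/\partial \eta^\san{x}_i)\,\jmath_{P\|Q}(x,y)$; the decomposition $\log[P^*/Q] = a(x) + b(y) + \mathrm{const}$ that follows from membership in ${\cal E}(\theta^\san{xy}(Q))$, combined with the constraint-induced identities $\sum_y \partial P^*(x,y)/\partial \eta^\san{x}_i = \delta_{xi} - \delta_{x0}$ and $\sum_x \partial P^*(x,y)/\partial \eta^\san{x}_i = 0$, collapses the expression to $a(i) - a(0) = \jmath_{P\|Q}(i,0) - \jmath_{P\|Q}(0,0)$, recovering the claim without appealing to a general envelope theorem.
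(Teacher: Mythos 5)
Your proof is correct, but it takes a genuinely different route from the paper's for \eqref{eq:derivative-1} and \eqref{eq:derivative-2} (your argument for \eqref{eq:derivative-3} coincides with the paper's). The paper avoids any appeal to the envelope theorem or to Lagrange multipliers: it first observes that, since $E$ depends on $\eta$ only through the marginals, the derivative at $\tilde{\eta}$ equals the derivative at $\hat{\eta}$, the expectation parameter of the optimizer $P_{XY}^*$; it then differentiates the mixed representation $E(P_{XY,\eta}\|Q_{XY})=\sum_{x,y}P_{XY,\eta}(x,y)\log\frac{P^*_{XY,\eta}(x,y)}{Q_{XY}(x,y)}$ (Proposition \ref{proposition:validity-of-density}/Corollary \ref{corollary:implication-pythagorean}) at $\hat{\eta}$, where the identity $P_{XY,\hat{\eta}}=P^*_{XY,\hat{\eta}}$ makes the optimizer-derivative term collapse to $\sum_{x,y}\partial P^*_{XY,\eta}(x,y)/\partial\eta^{\san{x}}_i=0$. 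Your primary argument instead identifies the derivative of the value function with the multiplier $\lambda^{\san{x}}_i$ of the marginal constraint and reads off $\lambda^{\san{x}}_i=\jmath_{P\|Q}(i,0)-\jmath_{P\|Q}(0,0)$ from the stationarity condition, and your self-contained alternative differentiates $D(P^*_{XY,\eta}\|Q_{XY})$ directly at $\tilde{\eta}$ using the additive decomposition of $\jmath_{P\|Q}$ (Remark \ref{remark:crucial-identity}) and the marginal-derivative identities; both are valid and arguably make the role of the constraints more transparent, at the cost of having to justify smoothness of $\eta\mapsto(P^*_{XY,\eta},\lambda)$, which you do via the implicit function theorem (the paper's shift to $\hat{\eta}$ sidesteps the multipliers but still implicitly differentiates $\eta\mapsto P^*_{XY,\eta}$). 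One small point you should make explicit: the unconstrained stationarity condition requires the optimizer to be strictly positive; this follows immediately from Theorem \ref{theorem:extremal} (or Theorem \ref{theorem:pythagorean}), since the fully supported $P_{XY}$ is feasible, so $\mathtt{supp}(P_{XY})\subseteq\mathtt{supp}(P^*_{XY,\eta})$.
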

	\begin{proof}
		We first show \eqref{eq:derivative-3}. In fact, this follows from the fact that 
		\begin{align}
		E(P_{XY,\tilde{\eta} + \Delta} \| Q_{XY}) =  E(P_{XY,\tilde{\eta}} \| Q_{XY})
		\end{align} 
		for every vector $\Delta$ such that the values are $0$ except for the $\eta^{\san{xy}}_{ij}$-coordinate.
		
		We prove \eqref{eq:derivative-1} by first showing that 
		\begin{align}
		\frac{\partial E(P_{XY,\eta} \| Q_{XY})}{\partial \eta^{\san{x}}_i} \bigg|_{\eta=\tilde{\eta}} = \frac{\partial E(P_{XY,\eta} \| Q_{XY})}{\partial \eta^{\san{x}}_i} \bigg|_{\eta=\hat{\eta}}, 
		\end{align}
		where $\hat{\eta}$ is the expectation parameter of $P_{XY}^*$.
		Since the $(\eta^\san{x},\eta^\san{y})$-coordinate of $\tilde{\eta}$ and $\hat{\eta}$ are the same, note that 
		\begin{align}
		E(P_{XY,\tilde{\eta}+\Delta} \|Q_{XY}) = E(P_{XY,\hat{\eta}+\Delta} \|Q_{XY})
		\end{align}
		holds for every vector $\Delta$ such that the values are $0$ with the exception of the $\eta^{\san{x}}_i$-coordinate.
		Let $\Delta \eta^{\san{x}}_i$ be the $\eta^{\san{x}}_i$-coordinate of $\Delta$. Then, we have
		\begin{align}
		\frac{\partial E(P_{XY,\eta} \| Q_{XY})}{\partial \eta^{\san{x}}_i} \bigg|_{\eta=\tilde{\eta}}
		&= \lim_{\Delta \eta^{\san{x}}_i \to 0} \frac{E(P_{XY,\tilde{\eta}+\Delta} \|Q_{XY}) - E(P_{XY,\tilde{\eta}} \|Q_{XY})}{\Delta \eta^{\san{x}}_i} \\
		&= \lim_{\Delta \eta^{\san{x}}_i \to 0} \frac{E(P_{XY,\hat{\eta}+\Delta} \|Q_{XY}) - E(P_{XY,\hat{\eta}} \|Q_{XY})}{\Delta \eta^{\san{x}}_i} \\
		&= \frac{\partial E(P_{XY,\eta} \| Q_{XY})}{\partial \eta^{\san{x}}_i} \bigg|_{\eta=\hat{\eta}}.
		\end{align}
		
		Now, note that (see Proposition \ref{proposition:validity-of-density})
		\begin{align}
		E(P_{XY,\eta} \| Q_{XY}) = \sum_{x,y} P_{XY,\eta}(x,y) \log \frac{P_{XY,\eta}^*(x,y)}{Q_{XY}(x,y)},
		\end{align}
		where $P_{XY,\eta}^*$ is the optimizer of $E(P_{XY,\eta} \| Q_{XY})$. Furthermore, from \eqref{eq:expansion-of-expectation-parameter}, we have
		\begin{align} \label{eq:derivative-of-expectation-parameter-expansion}
		\frac{\partial P_{XY,\eta}(x,y)}{\partial \eta^\san{x}_i} = \bigg( \delta_i(x) - \delta_0(x) \bigg) \delta_0(y).
		\end{align} 
		Thus, we have
		\begin{align}
		\frac{\partial E(P_{XY,\eta} \| Q_{XY})}{\partial \eta^{\san{x}}_i} \bigg|_{\eta=\hat{\eta}} 
		&= \sum_{x,y} \frac{\partial P_{XY,\eta}(x,y)}{\partial \eta^\san{x}_i} \bigg|_{\eta = \hat{\eta}} \log \frac{P_{XY,\hat{\eta}}^*(x,y)}{Q_{XY}(x,y)} \\
		&~~~+ \sum_{x,y} P_{XY,\hat{\eta}}(x,y) \frac{\partial}{\partial \eta^\san{x}_i} \bigg[ \log \frac{P_{XY,\eta}^*(x,y)}{Q_{XY}(x,y)} \bigg] \bigg|_{\eta=\hat{\eta}} \\
		&= \log \frac{P_{XY,\hat{\eta}}^*(i,0)}{Q_{XY}(i,0)} - \log \frac{P_{XY,\hat{\eta}}^*(0,0)}{Q_{XY}(0,0)} \\
		&~~~+\sum_{x,y} \frac{\partial}{\partial \eta^\san{x}_i} P_{XY,\hat{\eta}}^*(x,y) \bigg|_{\eta=\hat{\eta}} \\
		&= \log \frac{P_{XY,\hat{\eta}}^*(i,0)}{Q_{XY}(i,0)} - \log \frac{P_{XY,\hat{\eta}}^*(0,0)}{Q_{XY}(0,0)} \\ 
		&= \jmath_{P\|Q}(i,0) - \jmath_{P\|Q}(0,0),
		\end{align}
		where the second equality follows from \eqref{eq:derivative-of-expectation-parameter-expansion}
		and $P_{XY,\hat{\eta}}^* = P_{XY,\hat{\eta}}$,
		the third equality follows from $\frac{\partial}{\partial \eta^\san{x}_i} 1 = 0$, 
		and the last equality follows from the fact that $P_{XY,\hat{\eta}}^* = P_{XY}^*$.
		Finally, \eqref{eq:derivative-2} is proved in a similar manner.
	\end{proof}
	
	From Lemma \ref{lemma:derivative-of-projected-relative-entropy}, we have the following approximation of $E(P_{\bar{X}\bar{Y}} \| Q_{XY})$
	around $P_{XY}$.
	\begin{theorem} \label{theorem:Taylor-approximation}
		It holds that
		\begin{align}
		E(P_{\bar{X}\bar{Y}} \| Q_{XY}) = \sum_{x,y} P_{\bar{X}\bar{Y}}(x,y) \jmath_{P\|Q}(x,y) + o(\|P_{\bar{X}\bar{Y}} - P_{XY} \|).
		\end{align}
	\end{theorem}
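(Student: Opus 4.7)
The plan is to obtain the expansion as a first-order Taylor formula, using Lemma \ref{lemma:derivative-of-projected-relative-entropy} for the gradient of $E(P_{XY,\eta}\|Q_{XY})$ at $\eta = \tilde{\eta} := \eta(P)$, and then to reassemble the resulting linear form as the inner product $\langle P_{\bar{X}\bar{Y}} - P_{XY},\jmath_{P\|Q}\rangle$ by exploiting the additive decomposition of $\jmath_{P\|Q}(x,y)$ into an $x$-part plus a $y$-part.

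First, I would establish that $E(P_{XY,\eta}\|Q_{XY})$ is (Fr\'echet) differentiable at $\tilde{\eta}$, not merely partially differentiable. For $\eta$ in a neighborhood of $\tilde{\eta}$, the optimizer $P^*_{XY,\eta}$ of $E(P_{XY,\eta}\|Q_{XY})$ is unique (Theorem \ref{theorem:extremal}) and depends continuously on $\eta$; repeating the argument of Lemma \ref{lemma:derivative-of-projected-relative-entropy} at a nearby $\eta$ shows that all partial derivatives of $E(\cdot\|Q_{XY})$ are continuous in a neighborhood of $\tilde\eta$, so the standard $C^1$-implies-differentiable argument yields a genuine first-order expansion with $o(\|\eta-\tilde{\eta}\|)$ remainder. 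Since $\|\eta-\tilde\eta\|$ and $\|P_{\bar{X}\bar{Y}} - P_{XY}\|$ are equivalent norms on the simplex, the remainder is $o(\|P_{\bar{X}\bar{Y}} - P_{XY}\|)$.

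Next, I substitute the derivatives from Lemma \ref{lemma:derivative-of-projected-relative-entropy}. The $\eta^{\san{xy}}_{ij}$-derivatives vanish, while the $\eta^\san{x}_i$- and $\eta^\san{y}_j$-derivatives contribute
\begin{align}
&\sum_{i=1}^{d_\san{x}}\bigl(P_{\bar X}(i)-P_X(i)\bigr)\bigl(\jmath_{P\|Q}(i,0)-\jmath_{P\|Q}(0,0)\bigr) \\
&\quad+\sum_{j=1}^{d_\san{y}}\bigl(P_{\bar Y}(j)-P_Y(j)\bigr)\bigl(\jmath_{P\|Q}(0,j)-\jmath_{P\|Q}(0,0)\bigr).
\end{align}
Using $\sum_{i=0}^{d_\san{x}}(P_{\bar X}(i)-P_X(i))=0$, this collapses to $\sum_{x}(P_{\bar X}-P_X)(x)\jmath_{P\|Q}(x,0)+\sum_{y}(P_{\bar Y}-P_Y)(y)\jmath_{P\|Q}(0,y)$. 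Now I invoke Remark \ref{remark:crucial-identity} for $\lambda=E(P_{XY}\|Q_{XY})$, for which $\Lambda_\lambda=\jmath_{P\|Q}$, to obtain the additive decomposition $\jmath_{P\|Q}(x,y)=\jmath_{P\|Q}(x,0)+\jmath_{P\|Q}(0,y)-\jmath_{P\|Q}(0,0)$; combined with $\sum_{x,y}(P_{\bar X\bar Y}-P_{XY})(x,y)=0$, this identifies the collapsed sum with $\sum_{x,y}(P_{\bar X\bar Y}-P_{XY})(x,y)\jmath_{P\|Q}(x,y)$.

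Finally, adding the zeroth-order term $E(P_{XY}\|Q_{XY}) = \sum_{x,y} P_{XY}(x,y)\jmath_{P\|Q}(x,y)$, which is Proposition \ref{proposition:validity-of-density}, produces the claimed identity. The main obstacle is the differentiability step of the first paragraph: Lemma \ref{lemma:derivative-of-projected-relative-entropy} only states the partials at a single point, and a Taylor formula with $o$-remainder requires a slightly stronger regularity statement, which I handle by the continuity-of-projection argument above.
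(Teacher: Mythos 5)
Your proposal follows essentially the same route as the paper's proof: first-order Taylor expansion via Lemma \ref{lemma:derivative-of-projected-relative-entropy}, collapsing the linear form using the probability constraints, and identifying it with $\sum_{x,y}P_{\bar{X}\bar{Y}}(x,y)\jmath_{P\|Q}(x,y)$ through Remark \ref{remark:crucial-identity} and Proposition \ref{proposition:validity-of-density}. Your additional care about upgrading the pointwise partial derivatives to genuine differentiability (via continuity of the projection) is a reasonable refinement of a step the paper takes implicitly, but it does not change the argument.
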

	\begin{proof}
		From Lemma \ref{lemma:derivative-of-projected-relative-entropy}, by noting that $(\eta^\san{x},\eta^\san{y})$-coordinates correspond to the marginal distributions, 
		the first-order Taylor approximation of $E(P_{\bar{X}\bar{Y}} \| Q_{XY})$
		around $P_{XY}$ is given by
		\begin{align}
		E(P_{\bar{X}\bar{Y}} \| Q_{XY}) 
		&= E(P_{XY} \| Q_{XY}) + \sum_{i=1}^{d_{\san{x}}} \big( P_{\bar{X}}(i) - P_X(i) \big) \big( \jmath_{P\|Q}(i,0) - \jmath_{P\|Q}(0,0) \big) \\
		&~~~ + \sum_{j=1}^{d_{\san{y}}} \big( P_{\bar{Y}}(j) - P_Y(j) \big) \big( \jmath_{P\|Q}(0,j) - \jmath_{P\|Q}(0,0) \big) + o(\|P_{\bar{X}\bar{Y}} - P_{XY} \|) \\
		&= E(P_{XY} \| Q_{XY}) + \sum_x \big( P_{\bar{X}}(x) - P_X(x) \big) \jmath_{P\|Q}(x,0) \\
		&~~~ + \sum_y \big( P_{\bar{Y}}(y) - P_Y(y) \big) \jmath_{P\|Q}(0,y) + o(\|P_{\bar{X}\bar{Y}} - P_{XY} \|) \\
		&= \sum_{x,y} P_{\bar{X}\bar{Y}}(x,y) \jmath_{P\|Q}(x,y) + o(\|P_{\bar{X}\bar{Y}} - P_{XY} \|),
		\end{align}
		where the last equality follows from Remark \ref{remark:crucial-identity}.
	\end{proof}
	
	Now, we are ready to prove the converse part of Theorem \ref{theorem:second-order}. Let $G$ be such that 
	\begin{align} \label{eq:assumtption-contradiction}
	G > \sqrt{V(P\|Q)} \Phi^{-1}(\varepsilon),
	\end{align}
	and let 
	\begin{align}
	\tau = E(P_{XY} \| Q_{XY}) + \frac{F}{\sqrt{n}} + (|{\cal X}||{\cal Y}|-1) \frac{\log (n+1)}{n} + \Delta_n,
	\end{align}
	where $\Delta_n$ is the residual specified in Lemma \ref{lemma:approximation-by-type}. 
	Then, Proposition \ref{proposition:converse} and Lemma \ref{lemma:approximation-by-type} imply that,
	for any symmetric scheme satisfying $\beta[T_n] \le \exp(-\tau n)$, 
	\begin{align}
	\alpha[ T_n ] \ge P\bigg( E(\san{t}_{X^n Y^n} \| Q_{XY}) < E(P_{XY} \| Q_{XY}) + \frac{G}{\sqrt{n}} \bigg).
	\end{align}
	Let 
	\begin{align}
	{\cal K}_n := \left\{ P_{\bar{X}\bar{Y}} \in {\cal P}_n({\cal X}\times {\cal Y}) : |P_{\bar{X}\bar{Y}}(x,y) - P_{XY}(x,y) | \le \sqrt{\frac{\log n}{n}}~~~\forall (x,y) \right\}.
	\end{align}
	Then, by the Hoeffding inequality, we have
	\begin{align}
	P\bigg( \san{t}_{X^n Y^n} \notin {\cal K}_n \bigg) \le \frac{2 |{\cal X}| |{\cal Y}|}{n^2}.
	\end{align}
	Thus, by using Theorem \ref{theorem:Taylor-approximation} for $P_{\bar{X}\bar{Y}} \in {\cal K}_n$, we have
	\begin{align}
	\alpha[T_n] &\ge P\bigg( \san{t}_{X^n Y^n} \in {\cal K}_n,~ E(\san{t}_{X^n Y^n} \| Q_{XY}) < E(P_{XY} \| Q_{XY}) + \frac{G}{\sqrt{n}} \bigg) \\
	&\ge P\bigg( \san{t}_{X^n Y^n} \in {\cal K}_n,~ \sum_{x,y} \san{t}_{X^n Y^n}(x,y) \jmath_{P\|Q}(x,y) < E(P_{XY} \| Q_{XY}) + \frac{G}{\sqrt{n}} - c \frac{\log n}{n} \bigg) \\
	&\ge P\bigg( \frac{1}{n} \sum_{i=1}^n \jmath_{P\|Q}(X_i,Y_i) < E(P_{XY} \| Q_{XY}) + \frac{G}{\sqrt{n}} - c \frac{\log n}{n} \bigg) - P\bigg( \san{t}_{X^n Y^n} \notin {\cal K}_n \bigg) \\
	&\ge P\bigg( \frac{1}{n} \sum_{i=1}^n \jmath_{P\|Q}(X_i,Y_i) < E(P_{XY} \| Q_{XY}) + \frac{G}{\sqrt{n}} - c \frac{\log n}{n} \bigg) - \frac{2 |{\cal X}||{\cal Y}|}{n^2}
	\end{align}
	for some constant $c >0$. Thus, by the central limit theorem, we have
	\begin{align}
	\liminf_{n\to \infty} \alpha[T_n] > \varepsilon,
	\end{align}
	which together with \eqref{eq:assumtption-contradiction} implies 
	\begin{align}
	G_{\san{s}}(\varepsilon) \le \sqrt{V(P\|Q)} \Phi^{-1}(\varepsilon). 
	\end{align}

	\section{Conclusion and Discussion} \label{sec:conclusion}
	
	In this paper, we proposed a novel testing scheme for the zero-rate multiterminal hypothesis testing problem.
	The previously known scheme of Han-Kobayashi is based on a cylinder with respect to the relative entropy, 
	and thus their scheme can be regarded as a multiterminal analogue of the Hoeffding test.
	In contrast, our proposed scheme bisects the joint probability simplex by an appropriate mixture family, an approach reminiscent of the Neyman-Pearson test. 
	For a short block length, we numerically determined that the performance of our proposed scheme is superior to that of the previously reported scheme. 
	We also showed that, in a large deviation regime, our proposed scheme optimizes the trade-off of exponents 
	that was shown by Han-Kobayashi.
	Furthermore, we derived the optimal second-order exponent
	among the class of symmetric schemes, which can be achieved by our proposed scheme. 
	
	More ambitious goals would be to identify the optimal second-order exponent or to derive non-asymptotic 
	bounds for the class of general zero-rate testing schemes. However, such analyses are not only
	technically difficult, but they also have subtlety in the problem formulations.  
	For instance, a straightforward definition of the optimal second-order exponent by replacing
	``symmetric" with ``zero-rate" in \eqref{eq:definition-second-order-exponent} would make no sense.
	In fact, suppose that the marginals of $P_{XY}$ and $Q_{XY}$ are the same as an extreme case. 
	In that case, it is known that the optimal first-order exponent achievable by zero-rate schemes 
	is $0$. However, a trivial zero-rate scheme allows the encoders to send the first $\lceil n^\gamma \rceil$ symbols of their observations for
	a given $\frac{1}{2} <\gamma <1$; then the type II error probability behaves as $\exp\{- c n^\gamma \}$ for some constant $c>0$.
	In other words, the order of the second-order rate may depend on the growth rate of the message sizes even if they are zero-rate. 
	Identifying appropriate formulations and studying these problems would be important in future.

	\appendix
	
	\subsection{Proof of Remark \ref{remark:crucial-identity}} \label{appendix:proof-of-remark:crucial-identity}
	
	We only prove the statement for $- E(Q_{XY}\|P_{XY}) < \lambda < E(P_{XY} \| Q_{XY})$; the two extreme cases follow by a similar argument.
	Since $Q_{XY}^\lambda \in {\cal E}(\theta^\san{xy}(Q))$, it has the same values as $Q_{XY}$ at the $\theta^{\san{xy}}$-coordinate.
	Thus, it holds that (see \eqref{eq:correlation-coordinate}),
	for $x \neq 0$ and $y\neq 0$,
	\begin{align}
	\log \frac{Q_{XY}^\lambda(x,0) Q_{XY}^\lambda(0,y)}{Q_{XY}^\lambda(x,y) Q_{XY}^\lambda(0,0)} \frac{Q_{XY}(x,y) Q_{XY}(0,0)}{Q_{XY}(x,0)Q_{XY}(0,y)} = 0.
	\end{align}
	By noting this fact, we have
	\begin{align}
	\lefteqn{ \log \frac{Q^\lambda_{XY}(x,0)}{Q_{XY}(x,0)} +  \log \frac{Q^\lambda_{XY}(0,y)}{Q_{XY}(0,y)} } \label{eq:identity-proof-0} \\
	&= \log \frac{Q_{XY}^\lambda(x,0) Q_{XY}^\lambda(0,y)}{Q_{XY}(x,0) Q_{XY}(0,y)} \\
	&= \log \frac{Q_{XY}^\lambda(x,y) Q_{XY}^\lambda(0,0)}{Q_{XY}(x,y)Q_{XY}(0,0)} 
	\frac{Q_{XY}^\lambda(x,0) Q_{XY}^\lambda(0,y)}{Q_{XY}^\lambda(x,y) Q_{XY}^\lambda(0,0)} \frac{Q_{XY}(x,y) Q_{XY}(0,0)}{Q_{XY}(x,0)Q_{XY}(0,y)} \\
	&= \log \frac{Q_{XY}^\lambda(x,y) Q_{XY}^\lambda(0,0)}{Q_{XY}(x,y)Q_{XY}(0,0)} \\
	&= \log \frac{Q^\lambda_{XY}(x,y)}{Q_{XY}(x,y)} + \log \frac{Q^\lambda_{XY}(0,0)}{Q_{XY}(0,0)} \label{eq:identity-proof}
	\end{align}
	for $x \neq 0$ and $y \neq 0$; also, the identity \eqref{eq:identity-proof-0}-\eqref{eq:identity-proof} trivially holds when either $x=0$ or $y=0$.
	Similarly, since $P^\lambda_{XY} \in {\cal E}(\theta^\san{xy}(P))$, we have
	\begin{align}
	\log \frac{P^\lambda_{XY}(x,0)}{P_{XY}(x,0)} + \log \frac{P^\lambda_{XY}(0,y)}{P_{XY}(0,y)} = \log \frac{P^\lambda_{XY}(x,y)}{P_{XY}(x,y)} + \log \frac{P^\lambda_{XY}(0,0)}{P_{XY}(0,0)},
	\end{align}
	which together with the identity \eqref{eq:identity-proof-0}-\eqref{eq:identity-proof} imply 
	\begin{align}
	\Lambda_\lambda(x,0) + \Lambda_\lambda(0,y) = \Lambda_\lambda(x,y) + \Lambda_\lambda(0,0).
	\end{align}
	\qed
	
	\subsection{Proof of Remark \ref{remark:parallel-condition}} \label{appendix:proof-remark:parallel-condition}
	
	By noting that $\theta^{\san{xy}}$ components of the pair $P_{XY}^\lambda$ and $P_{XY}$ and the pair
	$Q_{XY}^\lambda$ and $Q_{XY}$ are the same, respectively, and by noting \eqref{eq:natural-parameterization}, we can rewrite \eqref{eq:condition-alignment} as
	\begin{align}
	\lefteqn{ \sum_{i=1}^{d_\san{x}} \big[ \theta^\san{x}_i(P_{XY}^\lambda) - \theta^\san{x}_i(P_{XY}) \big] \delta_i(x)
	 + \sum_{j=1}^{d_\san{y}} \big[ \theta^\san{y}_j(P_{XY}^\lambda) - \theta^\san{y}_j(P_{XY}) \big] \delta_j(y)
	 - \psi(\theta(P_{XY}^\lambda)) + \psi(\theta(P_{XY})) } \\
	 &= a \sum_{i=1}^{d_\san{x}} \big[ \theta^\san{x}_i(Q_{XY}^\lambda) - \theta^\san{x}_i(Q_{XY}) \big] \delta_i(x)
	  + a \sum_{j=1}^{d_\san{y}} \big[ \theta^\san{y}_j(P_{XY}^\lambda) - \theta^\san{y}_j(P_{XY}) \big] \delta_j(y)
	  - a \psi(\theta(Q_{XY}^\lambda)) + a \psi(\theta(Q_{XY})) + b
	\end{align}
	for every $(x,y) \in {\cal X} \times {\cal Y}$. By substituting $(x,y) = (0,0)$, we have
	\begin{align}
	- \psi(\theta(P_{XY}^\lambda)) + \psi(\theta(P_{XY})) = - a \psi(\theta(Q_{XY}^\lambda)) + a \psi(\theta(Q_{XY})) + b.
	\end{align}
	Thus, we have
	\begin{align}
	\lefteqn{ \sum_{i=1}^{d_\san{x}} \big[ \theta^\san{x}_i(P_{XY}^\lambda) - \theta^\san{x}_i(P_{XY}) \big] \delta_i(x)
	 + \sum_{j=1}^{d_\san{y}} \big[ \theta^\san{y}_j(P_{XY}^\lambda) - \theta^\san{y}_j(P_{XY}) \big] \delta_j(y) } \\
	 &=  a \sum_{i=1}^{d_\san{x}} \big[ \theta^\san{x}_i(Q_{XY}^\lambda) - \theta^\san{x}_i(Q_{XY}) \big] \delta_i(x)
	  + a \sum_{j=1}^{d_\san{y}} \big[ \theta^\san{y}_j(P_{XY}^\lambda) - \theta^\san{y}_j(P_{XY}) \big] \delta_j(y),
	\end{align}
	which implies \eqref{eq:parallel-condition-1} and \eqref{eq:parallel-condition-2} since $\delta_i(x)$ and $\delta_j(y)$
	for $i=1,\ldots,d_\san{x}$ and $j=1,\ldots,d_\san{y}$ are linearly independent as functions on ${\cal X}\times {\cal Y}$. \qed
	
	\section*{Acknowledgement}
	
	The author would like to thank Hiroshi Nagaoka for helpful suggestions on information geometry and encouragement,
	Shigeaki Kuzuoka for fruitful discussion, and
	Po-Ning Chen for comments on an earlier version of the manuscript.
	The author is supported in part by the Japan Society for the Promotion of Science (JSPS) KAKENHI under 
	Grant 16H06091.


\bibliographystyle{../../09-04-17-bibtex/IEEEtranS}
\bibliography{../../09-04-17-bibtex/reference.bib}

\end{document}